\numberwithin{equation}{section}
\renewcommand{\subsubsection}{\@startsection
{subsubsection}
{3}
{0mm}
{\baselineskip}
{-0.5\baselineskip}
{\normalfont\normalsize\bfseries}}
\newtheorem{theorem}{Theorem}
\newtheorem{lemma}[theorem]{Lemma}
\newtheorem{proposition}[theorem]{Proposition}
\newtheorem{conjecture}[theorem]{Conjecture}
\newtheorem{corollary}[theorem]{Corollary}
\theoremstyle{remark}
\newtheorem{remark}[theorem]{Remark}
\newtheorem*{acknow}{Acknowledgments}
\def\sas{\smallskip}
\def\la{{\lambda}}
\def\cal L{{\mathcal L}}
\def\sas{\smallskip}
\newcommand{\tcercle}[1]{\ensuremath{\setlength{\unitlength}{1ex}\begin{picture}(2.8,2.8)\put(1.4,1.4){\circle{2.8}\makebox(-5.6,0){#1}}\end{picture}}}
\let\d\partial
\let\n\noindent
\def\B{{\mathcal B}}
\let\la\lambda
\let\La\Lambda
\let\a\alpha
\let\Om\Omega
\let\om\omega
\let\ta\theta
\let\rw\rightarrow
\let\Rw\Rightarrow
\def\em{{\emptyset}}
\newcommand{\LL}{\ensuremath{\langle\!\langle}}
\newcommand{\RR}{\ensuremath{\rangle\!\rangle}}
\def\cd{\circledast}
\def\S{\mathcal{S}}
\def\B{\mathcal{B}}
\def\F{\mathcal{F}}
\def\lrw{\leftrightarrow}
\def\co#1{{\color{red}#1}}
\def\cg#1{{\color{gray}#1}}
\let\L\langle \let\R\rangle
\def\beq{\begin{equation}}
\def\eeq{\end{equation}}
\def\bea{\begin{align}}
\def\eea{\end{align}}
\title{ Double Macdonald polynomials as the stable limit of Macdonald superpolynomials}
\begin{document}

\author{O. Blondeau-Fournier}
\address{D\'epartement de physique, de g\'enie physique et
d'optique, Universit\'e Laval,  Qu\'ebec, Canada,  G1V 0A6.}
\email{olivier.b-fournier.1@ulaval.ca}
\author{L. Lapointe}
\address{Instituto de Matem\'atica y F\'{\i}sica, Universidad de
Talca, 2 norte 685, Talca, Chile.}
\email{lapointe@inst-mat.utalca.cl }
\author{P. Mathieu}
\address{D\'epartement de physique, de g\'enie physique et
d'optique, Universit\'e Laval,  Qu\'ebec, Canada,  G1V 0A6.}
\email{pmathieu@phy.ulaval.ca}

\begin{abstract}
Macdonald superpolynomials provide a remarkably rich generalization of the usual Macdonald polynomials.  
The starting point of this work is the observation of a previously unnoticed stability property of the Macdonald superpolynomials when the fermionic sector $m$ is sufficiently large: their decomposition in the monomial basis is
then independent of $m$. These stable superpolynomials are readily mapped into bisymmetric polynomials, an operation that spoils the ring structure but drastically simplifies the associated vector space.
Our main result is a factorization of the  (stable) bisymmetric Macdonald polynomials,  called double Macdonald polynomials and indexed by pairs of partitions, into a product of Macdonald polynomials (albeit subject to non-trivial plethystic transformations). As an off-shoot, we note that, after multiplication by a $t$-Vandermonde determinant, this provides explicit formulas for a 
 large class of 
Macdonald polynomials with prescribed symmetry.  The factorization of the
double Macdonald polynomials
leads immediately to the generalization of basically every elementary properties of the Macdonald polynomials to the double case
(norm, kernel, 
duality, evaluation, positivity, etc).  When lifted back to superspace, this validates various previously formulated conjectures in the stable regime.

The $q,t$-Kostka coefficients associated to the double Macdonald polynomials are shown to be $q,t$-analogs of the dimensions of the 
irreducible representations of the hyperoctahedral group $B_n$.  Moreover, a Nabla operator on the double Macdonald polynomials is defined and its
action on a certain bisymmetric Schur function  can be interpreted as
the Frobenius  series of a bigraded module of dimension $(2n+1)^n$, a formula again characteristic of the Coxeter group of type $B_n$.

Finally, as a side result, we obtain  a simple identity involving products of four Littlewood-Richardson coefficients. 
\end{abstract}

\maketitle

\section{Introduction}

\subsection{From superpolynomials to bisymmetric polynomials}
The Macdonald polynomials in superspace, which have been recently introduced in \cite{BDLM1,BDLM2}, 
provide a combinatorially-rich generalization of the Macdonald polynomials. We show in this article that when the fermionic sector is large enough, 
the Macdonald polynomials in superspace embody a natural form of a bisymmetric extension of the Macdonald 
polynomials, whose corresponding Kostka and Nabla combinatorics is that of the hyperoctahedral group $B_n$ (whereas it is that of the symmetric group $S_n$ 
in the usual case).  
This bisymmmetric version of Macdonald polynomials, which now depend on two alphabets, will be referred to as {\it double Macdonald 
polynomials}.\footnote{Even though the double Macdonald polynomials are connected to the hyperoctahedral group $B_n$, it should however
 be kept in mind that they are not the Macdonald polynomials  defined on the root lattice of type $B$ \cite{Mac1,Mac2}. They are also different from the Macdonald polynomials with 
hyperoctahedral symmetries constructed in \cite{diejen}}.

Let us put these statements in context by first recalling the definition of a superpolynomial or equivalently said,  a polynomial in superspace. In addition to be a polynomial in the usual indeterminates $x_1,\cdots,x_N$ over $\mathbb Q(q,t)$, it is a function of $N$ Grassmannian (also called anticommuting or fermionic) variables $\ta_1,\cdots,\ta_N$. Such polynomials are said to be symmetric if they are invariant with respect to the simultaneous interchange of $(x_i,\ta_i)\lrw (x_j,\ta_j)$. Due to the latter property, if a superpolynomial is symmetric,
it is always a sum of monomials $\ta_{i_1}\cdots \ta_{i_m}$ multiplied  by a polynomial antisymmetric in  
the variables $x_{i_1},\cdots, x_{i_m}$ and symmetric in the remaining ones. 
The superpolynomials we consider are always bi-homogeneous in the variables $x$ and $\theta$, and their fermionic degree (the degree in $\theta$) 
is
denoted $m$.
Finally, a symmetric superpolynomial is indexed by a superpartition $\La$:
a pair of partitions $(\La^a;\La^s)$ such that $\La^a$ has $m$ distinct parts (the $m$-th one being allowed to be 0). The partitions $\La^a$ (resp. $\La^s$) 
captures the degree of the antisymmetric (resp. symmetric) polynomial 
associated to each term $\ta_{i_1}\cdots \ta_{i_m}$ of the superpolynomial.

To make these comments more concrete, consider the generalization of the monomial and power-sum basis to superspace \cite{BDLM1}:
\beq m_\La={\sum_{\sigma\in S_N}}' \ta_{\sigma(1)}\cdots \ta_{\sigma(m)}\,x_{\sigma(1)}^{\La_1}\cdots x_{\sigma(N)}^{\La_N}
\eeq 
(where the prime indicates a sum over distinct terms) and    
\begin{equation}\label{spower}
p_\La=\tilde{p}_{\La_1}\cdots\tilde{p}_{\La_m}p_{\La_{m+1}}\cdots p_{\La_{N}},\qquad\text{where}\quad \tilde{p}_k=\sum_{i=1}^N\theta_ix_i^k\qquad\text{and}\qquad p_r=
\sum_{i=1}^Nx_i^r \, 
\end{equation}  
(with $k\geq 0$ and $r \geq 1$).  If $m=2$ and $N=3$ we have for instance: 
\beq
m_{(3,1;1)}=\ta_1\ta_2(x_{1}^3x_2-x_2^3x_1)x_3+\ta_1\ta_3(x_{1}^3x_3-x_3^3x_1)x_2+\ta_2\ta_3(x^2_{3}x_3-x_3^3x_2)x_1 
\eeq
and
\begin{align}
p_{(3,1;1)}&= (\ta_1x_1^3+\ta_2x_2^3+\ta_3x_3^3)(\ta_1x_1+\ta_2x_2+\ta_3x_3)(x_1+x_2+x_3)
\nonumber\\  &=\bigl(\ta_1\ta_2(x_{1}^3x_2-x_2^3x_1)+
\ta_1\ta_3(x_{1}^3x_3-x_3^3x_1)+\ta_2\ta_3(x^2_{3}x_3-x_3^3x_2) \bigr)(x_1 +x_2+x_3).
\end{align}

To any superpolynomial, we can associate a bisymmetric 
polynomial.
For a superpolynomial of fermionic degree $m$, this is done by (1)- extracting the coefficient of $\ta_1\cdots \ta_m$ of the superpolynomial,  and (2)-dividing the result by the Vandermonde determinant in the variables $x_1,\cdots,x_m$. By construction, the resulting polynomial is symmetric in 
both the variables $x=(x_1,\cdots, x_m)$ and $y=(x_{m+1},\cdots, x_N)$.  For the examples presented above, this procedure yields $(x_1^2x_2+x_1x_2^2)x_3$ and $(x_1^2x_2+x_1x_2^2)(x_1+x_2+x_3)$ respectively.

When, as was just described, we pass to bisymmetric polynomials,
the natural ring structure of the space of  
superpolynomials is lost (the ring structure of the space of
bisymmetric polynomials 
is not the right one, especially when considering connections 
with supersymmetry \cite{DLM0,DLMsvir}). 
However, in this article, we are only interested in the underlying vector space.
As such, it will prove more convenient to work with bisymmetric polynomials. We will then have at our disposal the very powerful (and much better known) 
language of symmetric function theory.

A bisymmetric polynomial is naturally indexed by a pair of partitions extracted from the superpartition.  To be more precise,
the correspondence  between a superpartition of fermionic degree $m$ and the pair of partitions $\la, \mu$ is
 \begin{equation}\label{cora}
\La =(\La^a ; \La^s) \lrw (\la, \mu) = (\La^a - \delta^{m}, \La^s),
\end{equation}
where $\delta^{m}:=(m-1,\ldots, 0)$ stands for the staircase partition. 
Observe that from \eqref{cora}, it is immediate that
$\ell(\lambda) \leq m$.
With this correspondence, the monomial and power-sums symmetric 
superpolynomials are
then associated to the following bisymmetric polynomials
respectively (see Appendix \ref{mpbi} for a detailed derivation of this correspondence)
\begin{equation}\label{modea}
m_{\La} \quad \longleftrightarrow \quad m_{\la,\mu}(x,y):=s_\la(x) \,m_\mu(y),
\end{equation}
and
\begin{equation}\label{psdea}
p_{\La} \quad \longleftrightarrow \quad 
p_{\la,\mu}(x,y) := s_\la(x) \,  p_\mu (x,y) \, ,
\end{equation}
where $s_\la$, $m_\la$, and $p_\la$ are respectively the Schur, monomial and power-sum symmetric functions.  In the case of $m_{\lambda,\mu}$, the functions
in the product depend upon
complementary sets of variables, while $p_{\lambda,\mu}$ does not ($p_{\mu}(x,y)$
is the usual power-sum symmetric functions in the union of the variables $x$ and $y$).
In the two cases, it is still noteworthy that the corresponding bisymmetric polynomials have such a simple factorization (we stress that $m_\La$ and $p_\La$ are both functions
of the variables $x_1,\cdots,x_N,\ta_1,\cdots,\ta_N$).

The bisymmetric polynomials associated to the Macdonald superpolynomials 
$P_\La(x_1,\cdots,x_N,\ta_1,\cdots,\ta_N;q,t)$ \cite{BDLM1,BDLM2} will simply be denoted $P_\La(x,y;q,t)$ \footnote{The meaning of  $P_\La$ will be clear from the context and actually, in the body of the text, it will always refer to the bisymmetric version.
The two forms are distinguished by their explicit variable-dependence: either $(x,\ta)$ or $(x,y)$.}
For reasons to become clear shortly,  it is more convenient at this point to 
keep using superpartitions as indices even for the
 bisymmetric version  of the Macdonald superpolynomials ($\La$ captures for instance the information on the fermionic degree $m$ which corresponds to the maximal length $\la$ can have).  We now translate in the language of bisymmetric functions Theorem 1 of \cite{BDLM2}
which establishes the existence of the Macdonald superpolynomials.  It relies on the dominance ordering on superpartitions, which is defined
as 
\beq \label{domidebut}
\Lambda \geq \Omega  \qquad {\rm iff} \qquad \Lambda^* \geq \Omega^* \qquad {\rm and} \qquad 
\Lambda^\circledast \geq \Omega^\circledast
\eeq
where the order on partitions is the dominance ordering, and where
\beq
 \Lambda^*=(\lambda+\delta^{m}) \cup \mu \quad  {\rm and} \quad
\Lambda^\circledast=
 (\lambda+\delta^{m+1}) \cup \mu \, .
\eeq
\begin{theorem}\label{theo1} Given a superpartition $\Lambda \leftrightarrow (\lambda,\mu)$ and $N -m\geq |\lambda|+|\mu|$, there exists
a unique bisymmetric polynomial  $P_{\Lambda}=P_{\Lambda}(x,y;q,t)$,
with $x=(x_1,\dots,x_m)$ and $y=(x_{m+1},\dots,x_N)$, such that: 
\begin{equation}\label{mac1}
\begin{array}{lll} 1)& P_{\Lambda} =
m_{\lambda,\mu} + \text{lower terms},\\
&\\
2)&\LL  P_{\Lambda},  P_{\Omega} \RR_{q,t} =0\quad\text{if}\quad 
\Lambda \ne \Omega 
\end{array}\end{equation}
The dominance  ordering on pairs of partitions is such
that $(\lambda,\mu) \geq (\omega,\eta)$ iff the corresponding superpartitions
$\Lambda$ and $\Omega \leftrightarrow (\omega,\eta)$ are such that
$\Lambda \geq \Omega$.  The scalar product is defined on the power-sums \eqref{psdea} as\footnote{In the expression of the scalar product, we have dropped a 
factor $(-1/q)^{m(m-1)/2}$ which does not affect the orthogonality.}
\begin{equation}\label{newsp}
\LL {p_{\lambda,\mu}}, \, {p_{\omega,\eta}}\RR_{q,t}= \delta_{\lambda \omega} 
\delta_{\mu \eta} q^{|\lambda|}
\,z_\mu(q,t) \, ,
 \end{equation}  
where $z_{\mu}(q,t)$ is given in \eqref{ortoqt}.
\end{theorem}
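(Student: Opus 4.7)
The plan is to prove Theorem \ref{theo1} by transporting Theorem 1 of \cite{BDLM2} through the dictionary between symmetric superpolynomials of fermionic degree $m$ and bisymmetric polynomials in $(x;y)=(x_1,\dots,x_m;x_{m+1},\dots,x_N)$. Concretely, I would define the linear map $\varphi$ that sends a symmetric superpolynomial $F(x_1,\dots,x_N,\theta_1,\dots,\theta_N)$ to $[\theta_1\cdots\theta_m]F / \Delta(x_1,\dots,x_m)$, where $\Delta$ is the $x$-Vandermonde. By the symmetry property recalled in the Introduction, the $\theta_1\cdots\theta_m$-coefficient of a symmetric superpolynomial is automatically antisymmetric in $x_1,\dots,x_m$ and symmetric in the remaining variables, so $\varphi$ is a well-defined linear bijection from the degree-$m$ fermionic sector onto bisymmetric polynomials in $(x;y)$; the superpolynomial is reconstructed by symmetrizing over $S_N/(S_m\times S_{N-m})$ with the appropriate sign, so the inverse is explicit.

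Once $\varphi$ is in hand, the proof splits into three verifications. First, $\varphi(m_\Lambda)=m_{\lambda,\mu}$ and $\varphi(p_\Lambda)=p_{\lambda,\mu}$, which are exactly \eqref{modea}--\eqref{psdea} and are established in Appendix \ref{mpbi}. Second, the dominance order on $(\lambda,\mu)$ used in Theorem \ref{theo1} is \emph{defined} in terms of the dominance order on superpartitions via \eqref{cora} and \eqref{domidebut}, so the triangularity condition $P_\Lambda=m_{\lambda,\mu}+\text{lower}$ on the bisymmetric side is the image under $\varphi$ of the triangularity condition on the superspace side. Third, the scalar product \eqref{newsp} is precisely the one induced from the superspace scalar product by $\varphi$, up to the overall factor $(-1/q)^{m(m-1)/2}$ noted in the footnote (which is irrelevant for orthogonality). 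With these three compatibilities checked, applying $\varphi$ to the Macdonald superpolynomial $P_\Lambda(x,\theta;q,t)$ whose existence and uniqueness is granted by \cite[Thm.~1]{BDLM2} produces a bisymmetric polynomial satisfying both conditions 1) and 2), and conversely any bisymmetric polynomial satisfying these conditions lifts via $\varphi^{-1}$ to a symmetric superpolynomial satisfying the superspace conditions; uniqueness thus transfers as well.

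The main obstacle will be the third point: carefully matching the superspace scalar product against \eqref{newsp}. The subtlety is that the map $\varphi$ strips a $\theta_1\cdots\theta_m$ factor and a Vandermonde, and one must verify that the induced pairing on power sums $p_{\lambda,\mu}=s_\lambda(x)\,p_\mu(x,y)$ factorizes as $\delta_{\lambda\omega}\delta_{\mu\eta}\,q^{|\lambda|}z_\mu(q,t)$. The $z_\mu(q,t)$ factor is inherited directly from the bosonic part, but the prefactor $q^{|\lambda|}$ originates from the pairing $\langle\!\langle \tilde p_k,\tilde p_\ell\rangle\!\rangle$ in superspace (which carries an extra power of $q$ per fermionic power sum), and the Kronecker $\delta_{\lambda\omega}$ comes from the fact that after dividing by $\Delta(x)$, the fermionic sector's antisymmetric component is naturally indexed by the Schur basis $s_\lambda(x)$. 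Tracking these factors through the definition of the superspace pairing in \cite{BDLM1,BDLM2} is essentially a computation in the fermionic sector, and yields \eqref{newsp} up to the signed constant absorbed in the footnote.
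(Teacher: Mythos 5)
Your proposal is correct and matches the paper's own treatment: the paper presents Theorem \ref{theo1} precisely as the translation of Theorem 1 of \cite{BDLM2} through the map that extracts the coefficient of $\theta_1\cdots\theta_m$ and divides by the Vandermonde, with the basis correspondences $m_\Lambda\mapsto m_{\lambda,\mu}$ and $p_\Lambda\mapsto p_{\lambda,\mu}$ established in Appendix \ref{mpbi} and the scalar-product mismatch factor $(-1/q)^{\binom{m}{2}}$ (arising from $q^{|\Lambda^a|-|\lambda|}$) absorbed exactly as in the footnote. The three compatibilities you isolate are the whole content of the translation, so nothing is missing.
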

We stress that the monomial expansion of $P_{\Lambda}$ is independent of $N-m$ 
(granted that $N-m$ is large enough) and thus $N-m$ can be considered infinite.
Unexpectedly, a similar independence upon $m$ holds.
Let us denote by $n$ the total degree of the pair $(\la,\,\mu)$, that is,
\beq n=|\la|+|\mu| \, .
\eeq
In the  previous theorem, the restriction $\ell(\lambda) \leq m$
in the correspondence \eqref{cora} is lifted when $m \geq n$.  
 What is 
remarkable, and totally unexpected from the supersymmetric point of view, 
is that even though the ordering 
\eqref{domidebut} appears to be highly dependant of $m$, 
the
monomial expansion \eqref{mac1} 
of the bisymmetric polynomial $P_\La$ 
 does not depend on $m$  
whenever $m \geq n$.  This will be referred to as the  {\it stable sector}
of
the  bisymmetric Macdonald polynomials. This phenomenon is described schematically in Figure \ref{fig1}.
Let us illustrate the stability property of
$P_\La$, where $\Lambda \leftrightarrow (\emptyset,(2))$, 
by displaying the monomial
 decompositions when $n=2$ for four different values of $m$: 
\begin{align}P_{(0;2)}&=m_{\emptyset,(2)}
+{\frac { \left( 1-t \right)  \left(1+ q\right) }{1-qt}}\,m_{\emptyset,(1,1)}+{\frac { \left( 1-t \right) }{1-qt}}\,m_{{(1),(1)}}
\nonumber\\
P_{(1,0;2)}&=m_{\emptyset,(2)}+{\frac { \left( 1-t \right)  \left(1+ qt \right) }{1-q{t}^{2}}}
\,m_{\emptyset,(1,1)}
\nonumber\\
P_{(2,1,0;2)}&=m_{\emptyset,(2)}+{\frac { \left( 1-t \right)  \left(1+ qt \right) }{1-q{t}^{2}}}
\,m_{\emptyset,(1,1)}
\nonumber\\
P_{(3,2,1,0;2)}&=m_{\emptyset,(2)}+{\frac { \left( 1-t \right)  \left(1+ qt \right) }{1-q{t}^{2}}}
\,m_{\emptyset,(1,1)}
\end{align}
The first case, for which $m=1<2=n$, does not belong to the stable sector. In the other three cases, 
one recovers three identical expressions (even though the corresponding
monomials depend on different sets of variables).

\begin{figure}[ht]\label{fig1}
\caption{{To each dot corresponds a family of superpolynomials indexed by superpartitions of total degree $|\Lambda^a|+|\Lambda^s|=n+m(m-1)/2$ and whose component $\La^a$ has exactly $m$ parts.  The sector $m\geq n$ (indicated in gray) is the domain in which the bisymmetric polynomials are stable. In other words, for a fixed value of $m$, all the points below and on the diagonal represent equivalent families, for which  convenient representatives are the points on the diagonal (larger dots). In the complementary region, the superspace formalism (via the dominance ordering for superpartitions) is mandatory.}}
\label{mplusgrandn}
\begin{center}
\begin{pspicture}(0,0)(5,5)

\psset{linestyle=none, fillstyle=solid,fillcolor=lightgray}
\pscustom{
  \psline(0.5,0.5)(4.2,0.5)(4.2,4.2)
}

\psset{linestyle=solid}
\psline{->}(0.5,0.5)(4.5,0.5)

\psline{->}(0.5,0.5)(0.5,4.5)

{
\rput(0.25,0.5){{\scriptsize $0$}}
\rput(0.25,1.0){{\scriptsize $1$}} \rput(0.25,1.5){{\scriptsize $2$}}
\rput(0.25,2.0){{\scriptsize $3$}} \rput(0.25,2.5){{\scriptsize $4$}}
\rput(0.25,3.0){{\scriptsize $5$}} \rput(0.25,3.5){{\scriptsize $6$}}
\rput(0.25,4.0){{\scriptsize $7$}} \rput(0.25,4.7){{\scriptsize $n$}}

\rput(0.5,0.25){{\scriptsize $0$}}
\rput(1.0,0.25){{\scriptsize $1$}} \rput(1.5,0.25){{\scriptsize $2$}} \rput(2.0,0.25){{\scriptsize $3$}}
\rput(2.5,0.25){{\scriptsize $4$}} \rput(3.0,0.25){{\scriptsize $5$}} \rput(3.5,0.25){{\scriptsize $6$}}
\rput(4.0,0.25){{\scriptsize $7$}}
\rput(4.7,0.25){{\scriptsize $m$}}
}

{
\psset{linestyle=solid}

\psline{-}(.5,.5)(4.2,4.2) 
}
{
\psset{dotsize=2pt}
\psdots(0.5,0.5)(0.5,1)(0.5,1.5)(0.5,2)(0.5,2.5)(0.5,3)(0.5,3.5)(0.5,4)
(1,0.5)(1,1)(1,1.5)(1,2)(1,2.5)(1,3)(1,3.5)(1,4)
(1.5,0.5)(1.5,1)(1.5,1.5)(1.5,2)(1.5,2.5)(1.5,3)(1.5,3.5)(1.5,4)
(2,0.5)(2,1)(2,1.5)(2,2)(2,2.5)(2,3)(2,3.5)(2,4)
(2.5,0.5)(2.5,1)(2.5,1.5)(2.5,2)(2.5,2.5)(2.5,3)(2.5,3.5)(2.5,4)
(3,0.5)(3,1)(3,1.5)(3,2)(3,2.5)(3,3)(3,3.5)(3,4)
(3.5,0.5)(3.5,1)(3.5,1.5)(3.5,2)(3.5,2.5)(3.5,3)(3.5,3.5)(3.5,4)
(4,0.5)(4,1)(4,1.5)(4,2)(4,2.5)(4,3)(4,3.5)(4,4)
}

{\psset{dotsize=3pt}
\psdots(.5,.5)(1,1)(1.5,1.5)(2,2)(2.5,2.5)(3,3)(3.5,3.5)(4,4)}
\end{pspicture}
\end{center}
\end{figure}

In the stable sector, it will thus be more natural to index 
the bisymmetric polynomial $P_\La:= P_{\lambda,\mu}$  
by the pair of partitions $(\lambda,\mu)$,
even more so that in this sector
the ordering \eqref{domidebut} can be replaced by the
following dominance ordering on pairs of partitions (cf. Proposition \ref{lemdo} in Appendix \ref{proof_order}):
for $(\la,\mu)$ and $(\omega, \eta)$ both of total degree $n$, 
\begin{equation}\label{domibi}
(\la,\mu) \geq (\omega,\eta)  \quad {\rm iff}  \quad
 \la_1+ \cdots + \la_i \geq \omega_1 +\cdots +\omega_i \quad \text{and} \quad  |\la|+ \mu_1 + \cdots + \mu_j \geq    |\omega |+ \eta_1 + \cdots + \eta_j \,  \, \, \forall i,j  ,
\end{equation}
where it is understood that $\lambda_k=0$ if $k>\ell(\lambda)$ (and similarly
for $\mu, \omega$ and $\eta$).

Because they are labelled by two partitions and, as we will see shortly, they are naturally viewed as a function of two sets of (commuting) variables, the bisymmetric
Macdonald polynomials in the stable sector
will be called double Macdonald polynomials.
Therefore, in the stable sector
Theorem~\ref{theo1} becomes:
\begin{theorem} \label{Pdef1} Let $(\lambda,\mu)$ be of total degree $n$.
Then the double Macdonald polynomials $P_{\lambda,\mu}(x,y;q,t)$,
where  $x=(x_1,\dots,x_m)$  and $y=(x_{m+1},\dots,x_N)$ (with  $m,N-m \geq n$)
are the unique bisymmetric polynomials such that
\begin{eqnarray}\label{Macsta}
& & P_{\la,\mu}(x,y;q,t) = m_{\la,\mu}(x,y) + \sum_{\omega,\eta < \la,\mu} c_{\la,\mu;\omega,\eta}(q,t) \, m_{\omega,\eta}(x,y)\nonumber \\
&  &\LL P_{\la,\mu} , P_{\omega,\eta} \RR_{q,t} = 0 \quad\text{if}\quad
(\la,\mu) \neq (\omega,\eta)
\end{eqnarray}
where the ordering on pairs of partitions 
and the scalar product are respectively defined in
\eqref{domibi} and \eqref{newsp}.
\end{theorem}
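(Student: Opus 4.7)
\emph{Proof proposal.}
The plan is to deduce Theorem~\ref{Pdef1} directly from Theorem~\ref{theo1} by translating the latter through the correspondence \eqref{cora} and invoking Proposition~\ref{lemdo} to identify the two orderings in the stable sector. Fix $(\lambda,\mu)$ of total degree $n$ and $m, N-m \geq n$. Since $\ell(\lambda) \leq |\lambda| \leq n \leq m$, the correspondence \eqref{cora} assigns a well-defined superpartition $\Lambda = (\lambda + \delta^m;\, \mu)$ of fermionic degree $m$, and the hypothesis $N - m \geq n = |\lambda|+|\mu|$ is exactly what Theorem~\ref{theo1} requires. Let $P_\Lambda(x,y;q,t)$ denote the unique bisymmetric polynomial supplied by that theorem; I would set $P_{\lambda,\mu} := P_\Lambda$ as the candidate for Theorem~\ref{Pdef1}. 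The orthogonality statements in \eqref{mac1} and \eqref{Macsta} then match verbatim, since the scalar product \eqref{newsp} is defined on bisymmetric power-sums indexed by pairs of partitions and is blind to the choice of representing superpartition.

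The remaining point is to reconcile the two triangular decompositions, which amounts to showing that when $(\lambda,\mu)$ and $(\omega,\eta)$ have common total degree $n$ and $m \geq n$, the relation $\Omega < \Lambda$ in the superpartition order \eqref{domidebut} is equivalent to $(\omega,\eta) < (\lambda,\mu)$ in the pair order \eqref{domibi}. This equivalence is the content of Proposition~\ref{lemdo} and will be the main obstacle. The driving observation is that in the stable regime $m \geq n$ the entries of $\lambda + \delta^m$ are large enough that the sorted compositions $\Lambda^* = (\lambda+\delta^m) \cup \mu$ and $\Lambda^\circledast = (\lambda+\delta^{m+1}) \cup \mu$ split cleanly into a top block arising from the staircase plus $\lambda$, followed by a bottom block contributed by $\mu$ and shifted by the constant $|\lambda|$. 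Computing the partial sums in each block and comparing with those of $\Omega^*, \Omega^\circledast$, the two dominance conditions decouple into the two chains of inequalities of \eqref{domibi}: the partial sums on $\lambda$ vs.\ $\omega$ come from the top block, while those on $|\lambda|+\mu_1+\cdots+\mu_j$ vs.\ $|\omega|+\eta_1+\cdots+\eta_j$ come from the bottom block.

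Once Proposition~\ref{lemdo} is established, the conclusion is immediate: the triangular condition \eqref{Macsta} becomes equivalent to that of \eqref{mac1}, and both existence and uniqueness of $P_{\lambda,\mu}$ are inherited from the corresponding properties of $P_\Lambda$. All the substantive content of Theorem~\ref{Pdef1} is thus concentrated in the combinatorial Proposition~\ref{lemdo}; the rest is routine bookkeeping through the bijection \eqref{cora}.
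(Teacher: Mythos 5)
Your proposal is correct and follows essentially the same route as the paper: Theorem~\ref{Pdef1} is obtained there precisely by transporting Theorem~\ref{theo1} through the correspondence \eqref{cora} (a bijection onto pairs of total degree $n$ once $m\geq n$) and replacing the super-dominance order \eqref{domidebut} by \eqref{domibi} via Proposition~\ref{lemdo}, with the scalar product unchanged. One caution: your heuristic for Proposition~\ref{lemdo} (that $\Lambda^*$ splits into a clean top block from $\lambda+\delta^m$ and a bottom block from $\mu$ shifted by $|\lambda|$, so the partial sums decouple) is too optimistic, since the parts of $\mu$ interleave with the lower staircase entries and the paper's actual proof is a nontrivial contradiction argument on the diagrams; but as you correctly identify and defer to that proposition as the sole substantive ingredient, the reduction itself is sound.
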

If we let $m$ and $N-m$ go to infinity, we obtain double Macdonald
{\it functions}.  We will nevertheless restrict ourselves to the finite case in this article. It should also be stressed that there is no solution to the two conditions \eqref{Macsta} in the non-stable sector if the ordering \eqref{domibi} is used. Therefore, in order to interpolate between the usual and double
Macdonald polynomials (corresponding respectively to the cases 
$m=0$ and $m\geq n$), the construction relying on the super-dominance ordering \eqref{domidebut} is necessary.


\subsection{Statement of the main results}
As already pointed out, the bisymmetric version of the monomials and the power-sums display a very simple factorization pattern, irrespectively of the value of $m$. Such a generic factorization is not expected to be observed for the bisymmetric Macdonald polynomials. However, it turns out that in the stable sector, such a factorization occurs, albeit in a non-obvious way.  Using the 
plethystic notation (reviewed in Section~\ref{secdef}) with $X=x_1+\cdots+x_m$
and $Y=x_{m+1}+\cdots+x_N$, the factorization of the double Macdonald polynomials reads (cf. Theorem \ref{teofac})
\beq \label{factor}
P_{\lambda,\mu}(x,y;q,t) = P_{\lambda}^{(q,qt)} \left[ X + \frac{q(1-t)}{1-qt}Y\right] P_{\mu}^{(qt,t)} \left[ Y\right] \, ,\eeq
where $P_{\lambda}^{(q,t)}(x)$ denotes the usual Macdonald polynomial 
$P_{\lambda}(x;q,t)$.

We briefly digress in order to comment on the consequences of this result for
the Macdonald polynomials with prescribed symmetry \cite{BDF}.  Let
$\mathcal A_t^{(x)}$ be the $t$-antisymmetrization (or Hecke antisymmetrization)
operator acting on the variables $x$, $\mathcal S_t^{(y)}$ be the $t$-symmetrization (or Hecke symmetrization)
operator acting on the variables $y$, $\Delta_t(x)$ be the $t$-Vandermonde
determinant in the variables $x$, and $ E_\gamma(x,y;q,t)$ be the non-symmetric Macdonald polynomials in the variables $x_1,\dots,x_m,y_1,\dots,y_{N-m}$ (the reader is refered to \cite{BDF,BDLM2} for the relevant definitions).
It is known \cite{BDLM2} that the double Macdonald polynomial (via the superpolynomial construction) $P_{\lambda,\mu}(x,y;q,t)$ is related to
the Macdonald polynomial with prescribed symmetry $ {\mathcal A_t^{(x)}} {\mathcal S^{(y)}_t}  E_\gamma(x,y;q,t)$ through
\beq
P_{\lambda,\mu}(x,y;q,t)  \propto \frac{1}{\Delta_t(x)} {\mathcal A_t^{(x)}} {\mathcal S^{(y)}_t}  E_\gamma(x,y;q,t)
\eeq
where $\gamma=(\gamma_1,\dots,\gamma_N)$ is any composition such that $(\gamma_1,\dots,\gamma_m)$ and $(\gamma_{m+1},\dots,\gamma_N)$ rearrange to the partitions $\lambda+\delta^m$ and 
$\mu$ respectively, and where $\propto$ means that the result holds up to a constant.  The factorization \eqref{factor} translates into a factorization
of Macdonald polynomials with prescribed symmetry. 
\begin{theorem}  Let $\gamma=(\gamma_1,\dots,\gamma_N)$ 
be a composition such that $\gamma_1,\dots,\gamma_m$ are all distinct
and such that
$m, N-m \geq |\gamma|- m(m-1)/2$.  Then, the Macdonald polynomials with prescribed symmetry  ${\mathcal A_t^{(x)}} {\mathcal S^{(y)}_t}  E_\gamma(x,y;q,t)$ is such that
\begin{equation} \label{eqprescribedfact}
{\mathcal A_t^{(x)}} {\mathcal S^{(y)}_t}  E_\gamma(x,y;q,t) \propto
\Delta_t(x) \, P_{\lambda}^{(q,qt)} \left[ X + \frac{q(1-t)}{1-qt}Y\right] P_{\mu}^{(qt,t)} \left[ Y\right] \, ,
\end{equation}
where $\lambda+\delta^m$ and $\mu$ are the partitions
corresponding respectively to the rearrangements of  $(\gamma_1,\dots,\gamma_m)$ and $(\gamma_{m+1},\dots,\gamma_N)$. 
\end{theorem}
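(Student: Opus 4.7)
The proof is essentially a direct translation of the factorization \eqref{factor} via the known link between double Macdonald polynomials and Macdonald polynomials with prescribed symmetry, so the plan is short and the main task is matching the hypotheses.

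First I would recall the identity from \cite{BDLM2} already quoted in the preceding discussion, namely
\begin{equation*}
P_{\lambda,\mu}(x,y;q,t) \;\propto\; \frac{1}{\Delta_t(x)}\, \mathcal{A}_t^{(x)}\mathcal{S}_t^{(y)}\, E_\gamma(x,y;q,t),
\end{equation*}
valid for any composition $\gamma$ whose first $m$ entries rearrange to $\lambda+\delta^m$ and whose remaining entries rearrange to $\mu$ (in particular, the first $m$ entries must be distinct, which is precisely the hypothesis on $\gamma$). Multiplying by $\Delta_t(x)$ gives
\begin{equation*}
\mathcal{A}_t^{(x)}\mathcal{S}_t^{(y)}\, E_\gamma(x,y;q,t) \;\propto\; \Delta_t(x)\, P_{\lambda,\mu}(x,y;q,t).
\end{equation*}

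Next I would translate the hypothesis $m, N-m \geq |\gamma|-m(m-1)/2$ into the stability condition of Theorem~\ref{Pdef1}. Since $\lambda+\delta^m$ is a partition, $|\lambda+\delta^m|=|\lambda|+m(m-1)/2$, and therefore
\begin{equation*}
|\gamma| \;=\; |\lambda+\delta^m| + |\mu| \;=\; |\lambda|+|\mu| + m(m-1)/2 \;=\; n + m(m-1)/2.
\end{equation*}
Hence $m, N-m \geq |\gamma|-m(m-1)/2$ is exactly $m, N-m \geq n$, so we are in the stable sector.

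Finally I would invoke the factorization theorem (Theorem~\ref{teofac}), which in the stable sector yields
\begin{equation*}
P_{\lambda,\mu}(x,y;q,t) \;=\; P_{\lambda}^{(q,qt)}\!\left[X+\frac{q(1-t)}{1-qt}Y\right] P_{\mu}^{(qt,t)}[Y].
\end{equation*}
Substituting into the previous displayed equation gives \eqref{eqprescribedfact}. The only point that requires a little care is checking that the stability bound needed for Theorem~\ref{teofac} matches the bound stated in the hypothesis, which is handled by the computation above; no further obstacle arises, since the correspondence between the $E_\gamma$ construction and the bisymmetric $P_{\lambda,\mu}$ is already available from \cite{BDLM2}.
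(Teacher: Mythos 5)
Your proposal is correct and follows exactly the route the paper takes: the theorem is obtained by combining the relation $P_{\lambda,\mu}(x,y;q,t) \propto \Delta_t(x)^{-1}\mathcal{A}_t^{(x)}\mathcal{S}_t^{(y)}E_\gamma(x,y;q,t)$ from \cite{BDLM2} with the factorization of Theorem~\ref{teofac}, after noting that $|\gamma|-m(m-1)/2=|\lambda|+|\mu|$ so the hypothesis is precisely the stability condition. Your explicit check of that degree bookkeeping is the only nontrivial step and it is done correctly.
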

The theorem means that if we take any
non-symmetric Macdonald polynomials indexed by a composition 
of sufficiently 
low degree, $t$-antisymmetrize with respect to the first $m$ variables, and $t$-symmetrize with respect to 
the remaining ones, then the
result is either zero (if there are repeated entries in the first $m$ 
entries of the composition) or, quite amazingly, a $t$-Vandermonde determinant times a
product of two Macdonald polynomials!  We should add that the  degenerate case 
$\la=\em $ of \eqref{eqprescribedfact} was already known  \cite[Proposition 2]{BDF}.

Returning to our main line, we stress that the factorization  \eqref{factor} offers 
the royal road to the study the properties of the double Macdonald polynomials. To a large extent, this study amounts to lift to the factorized form known properties of the usual Macdonald polynomials (cf. \cite[Chapter VI]{Mac}). In this way, we readily obtain the norm, the duality and the evaluation
 (respectively given by Corollary \ref{cor_norm}, 
Proposition \ref{propdu} and Corollary \ref{coroeval}). 

As a side result, we point out an interesting consequence of the duality:  a simple identity involving products of four Littlewood-Richardson coefficients (see Proposition \ref{les4c}). For any partitions 
$\lambda,\mu,\nu$ and $\omega$, we have
\beq
\sum_{\gamma,\eta,\sigma,\tau} (-1)^{|\tau|} 
c^{\gamma}_{\tau' \nu}
c^{\lambda}_{\gamma \eta} c^{\sigma}_{\eta \mu}  c^\omega_{\sigma \tau}
 =  \delta_{\lambda \nu} \delta_{\mu \omega}
\eeq
where $\gamma,\eta,\sigma,\tau$ run over 
all partitions and where $c^{\lambda}_{\mu \nu}$ is the corresponding
Littlewood-Richardson coefficient.

Implementing
the plethystic substitutions $X \mapsto X$ and
$X+Y \mapsto (X+Y)/(1-t)$ on the r.h.s. of \eqref{factor} defines the modified 
double Macdonald polynomials:
\beq \label{HHa}
H_{\lambda,\mu}(x,y;q,t) =  H_{\lambda}^{(q,qt)} \left[ X+ qY \right] H_{\mu}^{(qt,t)} \left[ tX + Y\right]\,,
\eeq
where $H_{\lambda}^{(q,t)}[X]= J_{\lambda}^{(q,t)}\left[ X/(1-t)\right]$ is the modified Macdonald polynomial ($J_{\lambda}^{(q,t)}(x)$ is the  integral form of the Macdonald polynomial $P_{\lambda}^{(q,t)}(x)$).
The expansion of $H_{\lambda,\mu}(x,y;q,t) $ in terms of the Schur functions associated to the irreducible characters of $B_n$, namely $ s_{\la,\mu}(x,y)=s_\la(x)\, s_\mu(y)$, define the double Kostka coefficients $ K_{\kappa, \gamma \; \la,\mu}(q,t)$:
\begin{equation} \label{Kosbisym}
H_{\la,\mu}(x,y;q,t)= \sum_{\kappa,\gamma} K_{\kappa, \gamma \; \la,\mu}(q,t) s_{\kappa, \gamma}(x,y).
\end{equation}
We show that  $K_{\kappa, \gamma \; \la,\mu}(1,1)$ is equal to 
the dimension of the irreducible
representation of $B_n$ indexed by the pairs of partitions $(\kappa,\gamma)$ (see Proposition \ref{Prop1}). 
This  is the first genuine contact with the hyperoctahedral group.
We then show that the basic properties of the double Kostka coefficients, namely, their positivity and symmetries, are immediate consequences of the factorization \eqref{HHa} and the related properties of the usual $q,t$-Kostkas (cf. Proposition \ref{Proppos} and Corollary \ref{Cor_symK}). 

Next, we  define a deformation of the Nabla operator \cite{BG}, 
denoted $\nabla^B$, whose eigenfunctions are 
$H_{\lambda,\mu}(x,y;q,t^{-1})$ and whose eigenvalues are given by a specific ratio of two double Kostkas.
Somewhat surprisingly, we can evaluate the Schur expansion (which happens to be positive) of $\nabla^Bs_{\em,(n)}$ exactly (cf. Proposition \ref{propaction}).
{From} the ensuing expression,  
we deduce the following two results which will provide our most significant  connection with the hyperoctahedral group  (using the notation $ [n]_{q,t}=({q^n-t^n})/({q-t})$): 
\begin{align}&\qquad
\L \nabla^B s_{\emptyset;(n)} , s_{\emptyset;(n)} \R_B =  
\frac{1}{(qt)^{\binom{n}2}}\left[ 
\begin{array}{c}
2n\\
n
\end{array}
\right]_{q,t}  \label{qtcatal}\\
&\qquad
\L \nabla^B s_{\em,(n)} , p_{\em,(1^n)} \R_B= \left( \frac{ [n+1]_{q,t} + [n]_{q,t} }{(qt)^{(n-1)/2}}\right)^n , \label{2nqt}
\end{align}
where $\L\cdot, \cdot \R_B$ is the hyperoctahedral version of the Hall scalar product.

In \cite{HaiC}, Haiman conjectured that for every Coxeter group $W$ there exists a doubly graded quotient ring $R^W$ of the coinvariant ring $C^W$
whose Hilbert series ${\rm Hilb}_{q,t}(R^W)$ satisfies
\beq \label{specialcaseintro}
{\rm Hilb}_{t^{-1},t}(R^W) = t^{-hn/2} \left([h+1]_t \right)^n.
\eeq
The existence of such modules has been demonstrated in \cite{Gor}
using the representation theory of Cherednik algebras.  When 
$W=B_n$, this formula specializes to
\beq \label{eqhilbintro}
{\rm Hilb}_{t^{-1},t}(R^{B_n}) =  \left(\frac{[2n+1]_t}{t^n} \right)^n,
\eeq
which is exactly the rhs of \eqref{2nqt} when $q=t^{-1}$.  Furthermore,
it is known \cite{BEG} that the alternating component of $R^{B_n}$
is given by the $t$-Catalan for the reflection group $B_n$
\beq
t^{-n^2}\prod_{i=1}^n \frac{(1-t^{2i+2n})}{(1-t^{2i})} \, ,
\eeq
which again corresponds to the rhs of \eqref{qtcatal} specialized to $q=t^{-1}$.
It is thus natural to surmise 
that 
\beq
{\rm Frob}_{t^{-1},t}(R^{B_n}) \sim \nabla_{q=t^{-1}}^B s_{\emptyset;(n)}
\eeq
where the symbol $\sim$ means that the equality holds up to a 
relabeling of the indices (to ensure that the trivial module appears only
at bidegree $(0,0)$, we probably need, in view of Corollary~\ref{coro1},
to relabel the indexing pair of partitions as $s_{\lambda,\mu} \mapsto s_{\lambda',\mu'}$). 

As for the generic $q,t$-case, it 
does not appear that ${\rm Frob}_{q,t}(R^{B_n}) \sim \nabla^B s_{\emptyset;(n)}$
(at least not with the natural grading stemming from \cite{Gor}).  Already in the $n=2$ case, the $B_2$-Catalan is 
$[5]_{q,t}+qt[1]_{q,t}$ (see for instance \cite{stump}) while \eqref{qtcatal} 
gives
$([5]_{q,t}+q^2t^2[1]_{q,t})/qt$ (which is in some sense a homogenization of the 
$B_2$-Catalan).

\subsection{Outline}Apart from the brief Section \ref{secdef}, 
 reviewing the notation,  and the Conclusion, the article is essentially divided in two parts.
The first one, Section  \ref{hyperS}, is devoted to the study of the double Macdonald polynomials defined in Theorem \ref{Pdef1}. The pivotal result is the establishment of an equivalence between two scalar products, from which the factorization form \eqref{factor} is deduced. The rest of the section is concerned with the derivation of direct consequences of this main formula. The second main part, Section \ref{nablaS}, is concerned with the investigation of generalizations of the Nabla operator. After reviewing how this operator is defined in the usual case, we present our heuristic approach that yields, among other things, the results mentioned in the previous subsection.  In the short Conclusion, we reassert a new role for the Macdonald superpolynomials as the precise and fully explicit  objects that interpolates between the usual and double versions
 of Macdonald 
polynomials.

Four appendices complete this article. Appendix \ref{mpbi} contains the details of the statements made in the starting subsection concerning the transformation of the monomials and the power sums, from superspace to bisymmetric functions. Appendix \ref{proof_order} is mainly  concerned with the proof of the
equivalence between the two dominance
orderings \eqref{domidebut} and  \eqref{domibi} in the stable sector. This crucial result is relegated to an appendix because it is fairly technical and also because in disguised form, it is probably known. As discussed in the Conclusion, the results demonstrated here for the double Macdonald polynomials can be readily lifted to superspace. In Appendix~\ref{app_proof_conj}, we show that certain conjectured results in superspace are now validated in the stable sector, by matching the statements in \cite{BDLM1,BDLM2} with those of the present paper. Tables of $B_n$ Kostka coefficients up to $n=3$ are presented in 
Appendix~\ref{app_table}.

\begin{acknow}
 The authors are extremely grateful to Patrick Desrosiers and 
Stephen Griffeth for helpful discussions, and to Mark Haiman for sharing his data on wreath Macdonald polynomials.
This work was  supported by NSERC, FQRNT, 
FONDECYT (Fondo Nacional de Desarrollo Cient\'{\i}fico y
Tecnol\'ogico de Chile) grant \#1130696, and by CONICYT (Comisi\'on Nacional de Investigaci\'on Cient\'ifica y Tecnol\'ogica de Chile) via 
the ``proyecto anillo ACT56''.
\end{acknow}

\section{Definitions} \label{secdef}

A partition $\lambda=(\lambda_1,\lambda_2,\dots)$ of degree $|\lambda|=\sum_i \lambda_i$ is a vector of non-negative integers such that
$\lambda_i \geq \lambda_{i+1}$ for $i=1,2,\dots$.
 The length $\ell(\lambda)$
of $\lambda$ is the number of non-zero entries of $\lambda$.
Each partition $\lambda$ has an associated Ferrers diagram
with $\lambda_i$ lattice squares in the $i^{th}$ row,
from the top to bottom. Any lattice square in the Ferrers diagram
is called a cell (or simply a square), where the cell $(i,j)$ is in the $i$th row and $j$th
column of the diagram.  
The conjugate $\lambda'$ of  a partition $\lambda$ is the partition whose diagram is
obtained by reflecting  the diagram of $\lambda$ about the main diagonal.
Given a cell $s=(i,j)$ in $\lambda$, we let 
\begin{equation} \label{eqarms}
a_{\lambda}(s)=\lambda_i-j\, , \qquad {\rm and} \qquad l_{\lambda}(s)=\lambda_j'-i \,  .
\end{equation}
The quantities $a_{\lambda}(s)$ and $l_{\lambda}(s)$ 
are respectively called the arm-length and leg-length.
We will also need their co-version:
\begin{equation} \label{eqcoarms}
a'_{\lambda}(s)=j-1\, , \qquad {\rm and} \qquad l'_{\lambda}(s)=i-1 \,  .
\end{equation}
We say that the diagram $\mu$ is contained in $\la$, denoted
$\mu\subseteq \la$, if $\mu_i\leq \la_i$ for all $i$.  We also let
$\lambda+\mu$ be the partitions whose entries are $(\lambda+\mu)_i=\lambda_i+\mu_i$, and $\lambda \cup \mu$ be the partition obtained by reordering the entries of the concatenation of $\lambda$ and $\mu$.  The dominance ordering on partitions is such that $\lambda \geq \mu$ iff $|\lambda|=|\mu|$ and $\lambda_1+\cdots+\lambda_i \geq \mu_1+\cdots+\mu_i$ for all $i$.

The Macdonald polynomials $P_\la(x;q,t)$, in the variables 
$x=x_1,x_2,\dots$, are characterized by the two conditions \cite{Mac}
 \begin{equation}\label{co12}\begin{array}{lll} 1)& P_{\lambda}(x;q,t) =
m_{\lambda} + \text{lower terms},\\
&\\
2)&\L  P_{\la}, P_{\mu} \R_{q,t} =0\quad\text{if}\quad \la\ne\mu \, .
\end{array}\end{equation}
The triangular decomposition refers to the dominance order on partitions 
 and the $m_\la$'s are the monomial symmetric
functions:
\beq
m_\lambda = {\sum_{\sigma \in S_N}}' x_{\sigma(1)}^{\lambda_1} \cdots x_{\sigma(N)}^{\lambda_N} 
\eeq
where the prime indicates a sum over distinct terms $x_{\sigma(1)}^{\lambda_1} \cdots x_{\sigma(N)}^{\lambda_N}$.
The  orthogonality relation is defined in the power-sum basis $p_\la=p_{\la_1}\cdots p_{\la_\ell}$, with { $p_r=\sum_{i\geq 1} x^r_i$ }, as
\begin{equation}\label{ortoqt}  \L {p_\la},
{p_\mu}\R_{q,t}=  \delta_{\la\mu}\, z_{\lambda}(q,t)
\qquad\text{where}\qquad
z_{\la}(q,t)= z_{\lambda}
\prod_{i=1}^{\ell(\la)}\frac{1-q^{\la_i}}{1-t^{\la_i}} = 
 \prod_{i \geq 1} i^{n_{\la}(i)} {n_{\la}(i)!} \left(\prod_{i=1}^{\ell(\la)}\frac{1-q^{\la_i}}{1-t^{\la_i}}\right) ,
\end{equation}
$n_{\la}(i)$  being the number of parts in $\la$ equal to $i$. 
We stress the notational distinction between the scalar product $\L\cdot,\cdot\R_{q,t}$ and its  bisymmetric version $\LL \cdot, \cdot\RR_{q,t}$.

The Jack polynomials $P_{\lambda}(x;\alpha)$ and Schur functions $s_{\lambda}$
can be defined respectively as the limits $q=t^{\alpha}, t \to 1$ and $q=t$
of the Macdonald polynomials.  In the latter case, the scalar product
\eqref{ortoqt} reduces to the Hall scalar product $\L p_{\lambda},p_\mu \R =
\delta_{\lambda \mu} z_{\lambda}$ which is such that $\L s_{\lambda},s_\mu \R =
\delta_{\lambda \mu}$.

We will use the language of $\lambda$-rings (or plethysms) \cite{Ber,Las}.  
 The power-sum $p_i$ acts
on the ring of rational functions in $x_1,\dots,x_N,q,t$ with coefficients in a field $\mathbb K$ (usually taken to be $\mathbb Q$) as
\beq 
p_i \left[ \frac{   \sum_{\alpha} c_\alpha u_\alpha}{  \sum_{\beta} d_\alpha v_\beta}
  \right] = \frac{ \sum_{\alpha} c_\alpha u_\alpha^i}{ \sum_{\beta} d_\alpha v_\beta^i}
\eeq
where $c_\alpha,d_{\beta} \in \mathbb K$ and  where
$u_\alpha,v_{\beta}$ are monomials in $x_1,\dots,x_N,q,t$.
Since the power-sums form a basis of the ring of symmetric functions, this action extends uniquely to an action of
the ring of symmetric functions 
on the ring of rational functions in $x_1,\dots,x_N,q,t$ with coefficients in $\mathbb K$.  In this notation, a symmetric function $f(x)$
 is equal to $f[X]$, where 
$x=(x_1,x_2,\dots,x_N)$ and $X=x_1+x_2+\cdots+x_N$.

\section{Double Macdonald polynomials}\label{hyperS}

\subsection{A remarkable factorization property}
As indicated in the Introduction, we are interested in the stable bisymmetric version of the Macdonald polynomials as defined in Theorem
\ref{Pdef1}. The stability property is captured by the condition $m\geq n$. In other words,
if $m$ and $N-m$ are sufficiently large ($\geq |\lambda|+|\mu|$), then the bisymmetric
Macdonald polynomial $P_{\lambda,\mu}(x,y;q,t)$ stabilizes, in the sense that its monomial expansion becomes
independent of $m$ and $N$.  Within the stability sector, we can thus let $m \to \infty$ and 
$N-m \to \infty,$ and obtain ``double Macdonald functions'' indexed by two infinite sets of indeterminates $x=x_1,x_2,\dots$ and 
$y=y_1,y_2,\dots$ (corresponding respectively to $x_1,\dots,x_m$ and $x_{m+1},\dots,x_N$
in the limit $m \to \infty$ and 
$N-m \to \infty$).   The distinction between ``functions''and ``polynomials'' 
will not be necessary here and we shall always consider that our alphabets are finite. 

{From} now on, we will use
the ``plethystic'' notation which is central to the derivation of our results
(see Section~\ref{secdef}).  For the remainder of this article, $X$ and $Y$ will
stand respectively for
$X=x_1+x_2+\cdots+x_m$ and 
$Y=y_1+y_2+\cdots+y_{N-m}=x_{m+1}+\cdots+x_{N}$.

We first prove that the scalar product \eqref{newsp} can be rewritten in a 
much more convenient
way for our purposes.  It is in some sense the strongest result of this section.
\begin{lemma} \label{lemmaequivsp}
If $m \geq |\lambda|+|\mu|$ and 
$N-m \geq |\lambda|+|\mu|$, then the scalar product \eqref{newsp} is equal 
 to the
scalar product $\LL \cdot, \cdot \RR'$
defined as
\beq \label{scalarautre}
\LL p_{\lambda}\left[ X + \frac{q(1-t)}{1-qt}Y\right] 
 p_{\mu} \left[ Y\right],
p_{\nu}\left[ X + \frac{q(1-t)}{1-qt}Y\right] 
 p_{\omega} \left[ Y\right] \, \RR' = \delta_{\lambda \nu} \delta_{\mu \omega}
q^{|\lambda|}
z_{\lambda}(q,qt) z_{\mu}(qt,t) \, .
\eeq
\end{lemma}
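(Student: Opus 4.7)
The natural approach is to use the fact that a nondegenerate bilinear form on a finite-dimensional vector space is determined by its reproducing (Cauchy) kernel, so I shall prove the lemma by showing that both scalar products give rise to the same Cauchy kernel on the space of bisymmetric polynomials in $x$ and $y$. Before that, one should verify that, in the stable regime $m, N-m \geq |\lambda|+|\mu|$, both $\{s_\lambda(x)\, p_\mu(x,y)\}_{\lambda,\mu}$ and $\{p_\lambda[X + c\, Y]\, p_\mu[Y]\}_{\lambda,\mu}$, with $c := q(1-t)/(1-qt)$, are indeed bases of that space. The first family is unitriangular against the standard basis $\{s_\lambda(x)\, p_\mu(y)\}$ since $p_\mu(x,y) = p_\mu(y) + (\text{strictly positive }x\text{-degree terms})$; the second is invertible for generic $q,t$ because the formula $p_r(x) = p_r[X+c\, Y] - c^r p_r(y)$ allows one to recover the separated power-sums $p_r(x)$ and $p_r(y)$.

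The Cauchy kernel of the first scalar product factorizes, after applying the classical Cauchy identity for Schur functions (to the parameter $q$) and the Macdonald power-sum Cauchy identity, into
\[
K_1 \,=\, \sum_{\lambda,\mu} \frac{s_\lambda(x)\, p_\mu(x,y)\, s_\lambda(x')\, p_\mu(x',y')}{q^{|\lambda|}\, z_\mu(q,t)} \,=\, \Omega\!\left[\frac{XX'}{q} + \frac{(1-t)(X+Y)(X'+Y')}{1-q}\right].
\]
Applying the Macdonald power-sum Cauchy identity to each factor of the candidate scalar product $\LL \cdot,\cdot\RR'$, with parameters $(q,qt)$ and $(qt,t)$ respectively, gives
\[
K_2 \,=\, \Omega\!\left[\frac{(1-qt)(X+c\, Y)(X'+c\, Y')}{q(1-q)} + \frac{(1-t)\, YY'}{1-qt}\right].
\]

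It remains to check $K_1 = K_2$, which reduces to comparing the coefficients of $XX'$, $XY'$, $X'Y$ and $YY'$ inside the plethystic bracket. Expanding $(X+c\, Y)(X'+c\, Y') = XX' + c(XY' + X'Y) + c^2 YY'$ and using the identity $c \cdot (1-qt)/(q(1-q)) = (1-t)/(1-q)$, which is a direct consequence of the definition of $c$, the cross-terms $XY'$ and $X'Y$ pick up the coefficient $(1-t)/(1-q)$ in both kernels; the coefficient of $XX'$ in $K_2$ equals $(1-qt)/(q(1-q)) = 1/q + (1-t)/(1-q)$, matching $K_1$; and the coefficient of $YY'$ in $K_2$ telescopes via
\[
\frac{q(1-t)^2}{(1-q)(1-qt)} + \frac{1-t}{1-qt} \,=\, \frac{1-t}{1-qt}\cdot \frac{q(1-t)+ (1-q)}{1-q} \,=\, \frac{1-t}{1-qt}\cdot \frac{1-qt}{1-q} \,=\, \frac{1-t}{1-q},
\]
again matching $K_1$. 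Hence $K_1 = K_2$, and the two scalar products coincide. The main (and essentially only) technical point is that the specific value $c = q(1-t)/(1-qt)$ is precisely what forces the telescoping of the $YY'$-coefficient; this identity of kernels is in some sense the raison d'\^etre of the factorization announced in \eqref{factor}.
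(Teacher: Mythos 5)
Your proof is correct, but it takes a genuinely different route from the paper's. The paper works directly with the adjoints: it computes $u_r^{\perp}$ and $v_r^{\perp}$ (for $u_r=p_r[X+\tfrac{q(1-t)}{1-qt}Y]$, $v_r=p_r[Y]$) with respect to each scalar product, via the change of variables \eqref{reluv}--\eqref{reluvbar} and the chain rule, and checks that the two resulting recursions on $\LL u_\lambda v_\mu, u_\nu v_\omega\RR$ coincide, so that both pairings reduce to the same base case. You instead compare the reproducing kernels: since each scalar product makes an explicit family orthogonal with known norms, each determines a Cauchy kernel, and by the standard dual-basis criterion (\cite[VI.2.7]{Mac}) equality of kernels forces the two bilinear forms to agree on the basis $\{u_\lambda v_\mu\}$, hence everywhere. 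Your verification that the two plethystic arguments agree coefficient-by-coefficient on $XX'$, $XY'$, $X'Y$, $YY'$ is exactly right (each comparison is a rational identity in $q,t$, hence holds after $q\mapsto q^r$, $t\mapsto t^r$ for every $r$, which is what the plethysm requires); the only notational slip is writing $c^r$ for what should be $c(q^r,t^r)=q^r(1-t^r)/(1-q^rt^r)$, which does not affect the argument. Both proofs need the same preliminary observation that $m,N-m\geq n$ makes $\{u_\lambda v_\mu\}$ a basis in each degree. What your approach buys is economy and a conceptual explanation: your kernel $K_1$ is precisely the kernel $\Pi$ that the paper only introduces later in \eqref{ker}, so your argument unifies the lemma with that section and makes transparent why the specific coefficient $q(1-t)/(1-qt)$ is forced. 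The paper's recursion argument is more elementary (no appeal to the kernel/dual-basis criterion) and makes the adjoint operators explicit, at the cost of a longer computation.
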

\begin{proof}
 The scalar product \eqref{newsp} can be rewritten as
 \beq \label{newsp2}
\LL s_{\lambda}[X]
 p_{\mu}[X+Y],
s_{\nu}[X]
 p_{\omega}[X+Y]\RR_{q,t} = q^{|\lambda|}\L s_{\lambda},\,s_\nu\R\,\L
 p_{\mu},\,
 p_{\omega}\R_{q,t} = q^{|\lambda|} \delta_{\lambda \nu} \times \delta_{\mu \omega}
 z_{\mu}(q,t).
\eeq
Equation  \eqref{newsp2} is then equivalent to
\beq \label{newnewsp}
\LL p_{\lambda}\left[ X\right] 
 p_{\mu} \left[ X+Y\right],
p_{\nu}\left[ X \right] 
 p_{\omega} \left[ X+Y\right] \RR_{q,t} = \delta_{\lambda \nu} \delta_{\mu \omega}
q^{|\lambda|} z_{\lambda}z_{\mu}(q,t)
\eeq
given that the Hall scalar product 
is such that $\langle p_{\lambda}, p_{\nu} \rangle = \delta_{\lambda \nu} 
z_{\lambda}$.
Notice that by homogeneity the extra factor $q^{|\lambda|}$ can be carried from \eqref{newsp2} to 
\eqref{newnewsp}.

We now show that the scalar products 
$\LL \cdot, \cdot \RR_{q,t}$
and $\LL \cdot, \cdot \RR'$ coincide.
Let 
\beq p_r\left[X+ \frac{q(1-t)}{1-qt}Y \right]=u_r,\quad
p_r\left[Y \right]=v_r,\quad p_r\left[X\right]= \bar u_r,\quad\text{and}\quad 
p_r\left[X+Y\right]= \bar v_r.\eeq
 The quantities $u_r$ and $v_r$ 
are related to
$\bar u_r$ and $\bar v_r$ through 
\beq \label{reluv}
u_r=
p_r\left[\frac{(1-q)X}{1-qt} + \frac{q(1-t)(X+Y)}{1-qt} \right]=
 \frac{1-q^r}{1-q^rt^r} \bar u_r + \frac{q^r(1-t^r)}{1-q^rt^r} \bar v_r
\eeq
and
\beq \label{reluvbar}
v_r= p_r[-X+X+Y]=-\bar u_r+ \bar v_r \, .
\eeq
With respect to the scalar product $\LL \cdot , \cdot \RR'$, we have
(using  $\partial_{x}=\frac{\partial}{\partial x}$)
\beq \label{eqperpp}
u_r^{\perp'} = \frac{1-q^r}{1-q^rt^r} q^r r \partial_{u_r} \qquad
{\rm and} \qquad v_r^{\perp'} = \frac{1-q^rt^r}{1-t^r}  r \partial_{v_r}
\eeq
while with respect to the scalar product $\LL \cdot , \cdot \RR_{q,t}$, 
we have
\beq \label{eqperp}
\bar u_r^{\perp} = q^r r \partial_{\bar u_r} \qquad
{\rm and} \qquad \bar v_r^{\perp} = \frac{1-q^r}{1-t^r}  r \partial_{\bar v_r}
\eeq
where $h^\perp$ and $h^{\perp'}$ are defined respectively 
such that
$\LL h f , g \RR_{q,t}= \LL f , h^\perp g \RR_{q,t}$ and
$\LL h f , g \RR'= \LL f , h^{\perp'} g \RR'$.
Given that $\{u_{\lambda} v_{\mu}\}_{\lambda,\mu}$ is a basis of the space of bisymmetric functions of a given total degree $n=|\lambda|+|\mu|$, the lemma will follow if we can show that
\begin{equation} 
\LL u_{\lambda} v_{\mu} , u_{\nu} v_{\omega}  \RR' 
=
\LL u_{\lambda} v_{\mu} ,  u_{\nu}  v_{\omega}  \RR_{q,t} \, ,
\end{equation}
or equivalently, that
\begin{equation} 
\LL  u_{\emptyset} v_{\emptyset} ,  u_{\lambda}^{\perp'} v_{\mu}^{\perp'}  u_{\nu}  v_{\omega}  \RR' 
= \LL  u_{\emptyset} v_{\emptyset} ,  u_{\lambda}^\perp v_{\mu}^{\perp}  u_{\nu}  v_{\omega}  \RR_{q,t} \, .
\end{equation}
In order to do so, it suffices to compare the recursions induced by $u_r$
and $v_r$;  repeated applications of the recursions will, by homogeneity, either lead to zero or
\begin{equation} 
\LL u_{\emptyset} v_{\emptyset} ,  u_{\emptyset} v_{\emptyset}    \RR_{q,t} 
=
\LL u_{\emptyset} v_{\emptyset} ,   u_{\emptyset} v_{\emptyset}   \RR' = 1
\end{equation}
and the result will follow.   

 Note that the conditions 
$m \geq |\lambda|+|\mu|$ and $N-m \geq |\lambda|+|\mu|$
ensure that $u_1,u_2,\dots$ and $v_1,v_2,\dots$ can be considered 
independent (and similarly for  $\bar u_1,\bar u_2,\dots$ and $\bar v_1,\bar v_2,\dots$).  Observe also from \eqref{reluv} and \eqref{reluvbar}
that $\partial_{\bar u_r}$ and $\partial_{\bar v_r}$
commute with $u_s$ and $v_s$ if $r\neq s$.
Now let $\lambda=(r^k) \cup \hat \lambda$, $\mu=(r^m) \cup \hat \mu$,
$\nu=(r^\ell) \cup \hat \nu$ and $\omega=(r^n) \cup \hat \omega$,
where $\hat \lambda$,  $\hat \mu$,  $\hat \nu$ and  $\hat \omega$
do not contain parts of size $r$. 
 On the one hand, we have
from \eqref{eqperpp}
\beq
\LL u_{\lambda} v_{\mu} ,  u_{\nu}  v_{\omega}  \RR' 
= \LL u_r^k v_r^m
u_{\hat \lambda} v_{\hat \mu} , u_r^\ell v_r^n
 u_{\hat \nu}  v_{\hat \omega}  \RR' 
= \LL u_r^{k-1} v_r^m
 u_{\hat \lambda} v_{\hat \mu} , u_r^{\perp'} u_r^\ell v_r^n
 u_{\hat \nu}  v_{\hat \omega}  \RR' = q^r r \ell \frac{1-q^r}{1-q^rt^r}
\LL u_r^{k-1} v_r^m
 u_{\hat \lambda} v_{\hat \mu} , u_r^{\ell-1} v_r^n
u_{\hat \nu}  v_{\hat \omega}  \RR' \label{eqcomp1}
\eeq
On the other hand, using \eqref{reluv}, \eqref{eqperp}, 
and the chain rule for derivatives, e.g.,
\beq \frac{\d}{\d \bar u_r}=\frac{\d u_r}{\d \bar u_r}\frac{\d}{\d u_r}+\frac{\d v_r}{\d \bar u_r}\frac{\d}{\d v_r},
\eeq 
we get
\begin{align}
 \LL u_{\lambda} v_{\mu} &,  u_{\nu}  v_{\omega}  \RR_{q,t} \nonumber \\
& = \LL u_r^k v_r^m
u_{\hat \lambda} v_{\hat \mu} , u_r^\ell v_r^n
 u_{\hat \nu}  v_{\hat \omega}  \RR_{q,t} \nonumber \\
& = 
\frac{1-q^r}{1-q^rt^r}  \LL u_r^{k-1} v_r^m
u_{\hat \lambda} v_{\hat \mu} , \bar u_r^\perp  u_r^\ell v_r^n
 u_{\hat \nu}  v_{\hat \omega}  \RR_{q,t} +
 \frac{q^r(1-t^r)}{1-q^rt^r}  \LL u_r^{k-1} v_r^m
u_{\hat \lambda} v_{\hat \mu} , \bar v_r^\perp  u_r^\ell v_r^n
 u_{\hat \nu} v_{\hat \omega}  \RR_{q,t} \nonumber  \\
& = 
\left( \frac{1-q^r}{1-q^rt^r} \right)^2 q^r r \ell  \LL u_r^{k-1} v_r^m
u_{\hat \lambda} v_{\hat \mu} , u_r^{\ell-1} v_r^n
 u_{\hat \nu}  v_{\hat \omega}  \RR_{q,t} - 
 \frac{1-q^r}{1-q^rt^r} q^r r n  \LL u_r^{k-1} v_r^m
u_{\hat \lambda} v_{\hat \mu} , u_r^{\ell} v_r^{n-1}
 u_{\hat \nu}  v_{\hat \omega}  \RR_{q,t} \nonumber  \\
& + 
\left( q^r \frac{1-t^r}{1-q^r t^r} \right)^2
r \ell \frac{1-q^r}{1-t^r}  \LL u_r^{k-1} v_r^m
u_{\hat \lambda} v_{\hat \mu} , u_r^{\ell-1} v_r^n
 u_{\hat \nu} v_{\hat \omega}  \RR_{q,t} + q^r \frac{1-t^r}{1-q^rt^r}rn 
\frac{1-q^r}{1-t^r} \LL u_r^{k-1} v_r^m
u_{\hat \lambda} v_{\hat \mu} , u_r^{\ell} v_r^{n-1}
 u_{\hat \nu}  v_{\hat \omega}  \RR_{q,t} \nonumber  \\
& =  q^r r \ell \frac{1-q^r}{1-q^rt^r}
\LL u_r^{k-1} v_r^m
 u_{\hat \lambda} v_{\hat \mu} , u_r^{\ell-1} v_r^n
 u_{\hat \nu}  v_{\hat \omega}  \RR_{q,t} \label{eqcomp2}
\end{align}
Comparing \eqref{eqcomp1} and \eqref{eqcomp2} we see that the two $u_r$ recursions coincide.  

We now do the same for the
$v_r$ recursions. From \eqref{eqperp} we find
\beq
\LL u_{\lambda} v_{\mu} , u_{\nu} v_{\omega}  \RR' 
= \LL u_r^k v_r^m
u_{\hat \lambda} v_{\hat \mu} , u_r^\ell v_r^n
 u_{\hat \nu}  v_{\hat \omega}  \RR' 
= \LL u_r^{k} v_r^{m-1}
 u_{\hat \lambda} v_{\hat \mu} , v_r^{\perp'} u_r^\ell v_r^n
 u_{\hat \nu}  v_{\hat \omega}  \RR' = \frac{1-q^rt^r}{1-t^r} nr
\LL u_r^{k} v_r^{m-1}
 u_{\hat \lambda} v_{\hat \mu} , u_r^{\ell} v_r^{n-1}
 u_{\hat \nu}  v_{\hat \omega}  \RR'
\eeq
while, using \eqref{reluvbar} and \eqref{eqperp}, we get
\begin{align}
\LL u_{\lambda} v_{\mu} ,  u_{\nu} v_{\omega}  \RR_{q,t}
& = \LL u_r^k v_r^m
u_{\hat \lambda} v_{\hat \mu} , u_r^\ell v_r^n
 u_{\hat \nu}  v_{\hat \omega}  \RR_{q,t} \nonumber \\
& = 
\LL u_r^{k} v_r^{m-1}
u_{\hat \lambda} v_{\hat \mu} , \bar v_r^\perp  u_r^\ell v_r^n
 u_{\hat \nu}  v_{\hat \omega}  \RR_{q,t} -
 \LL u_r^{k} v_r^{m-1}
u_{\hat \lambda} v_{\hat \mu} , \bar u_r^\perp  u_r^\ell v_r^n
u_{\hat \nu}  v_{\hat \omega}  \RR_{q,t} \nonumber \\
& = 
\frac{1-q^r}{1-t^r}r \ell q^r \frac{1-t^r}{1-q^rt^r} \LL u_r^{k} v_r^{m-1}
u_{\hat \lambda} v_{\hat \mu} , u_r^{\ell-1} v_r^n
 u_{\hat \nu} v_{\hat \omega}  \RR_{q,t} + 
 \frac{1-q^r}{1-t^r} r n  \LL u_r^{k} v_r^{m-1}
u_{\hat \lambda} v_{\hat \mu} , u_r^{\ell} v_r^{n-1}
 u_{\hat \nu}  v_{\hat \omega}  \RR_{q,t} \nonumber \\
& - 
q^rr \ell  \frac{1-q^r}{1-q^r t^r}  \LL u_r^{k} v_r^{m-1}
u_{\hat \lambda} v_{\hat \mu} , u_r^{\ell-1} v_r^n
 u_{\hat \nu}  v_{\hat \omega}  \RR_{q,t} + q^r rn  \LL u_r^{k} v_r^{m-1}
u_{\hat \lambda} v_{\hat \mu} , u_r^{\ell} v_r^{n-1}
u_{\hat \nu} v_{\hat \omega}  \RR_{q,t} \nonumber \\
& =   \frac{1-q^rt^r}{1-t^r} rn
\LL u_r^{k} v_r^{m-1}
 u_{\hat \lambda} v_{\hat \mu} , u_r^{\ell} v_r^{n-1}
 u_{\hat \nu}  v_{\hat \omega}  \RR_{q,t}
\end{align}
The two $v_r$ recursions are thus also seen to coincide, which completes
the proof of the assertion that the scalar products $\LL \cdot, \cdot \RR'$
and $\LL \cdot, \cdot \RR_{q,t}$ are equal.
\end{proof}
We now have all the tools to establish our key result;
the double Macdonald polynomials have a totally unexpected, albeit rather non-trivial, decomposition into a product of two ordinary Macdonald polynomials. As mentioned in the introduction, the special case $\lambda=\emptyset$
of the factorization is essentially contained in \cite{BDF}.
\begin{theorem}\label{teofac} If $m \geq |\lambda|+|\mu|$ and 
$N-m \geq |\lambda|+|\mu|$, then
\beq \label{facto}
P_{\lambda,\mu}(x,y;q,t) = P_{\lambda}^{(q,qt)} \left[ X + \frac{q(1-t)}{1-qt}Y\right] P_{\mu}^{(qt,t)} \left[ Y\right] \, ,\eeq
where $P_{\lambda}^{(q,t)}(x)$ stands for the usual Macdonald polynomial
 $P_{\lambda}(x;q,t)$.
\end{theorem}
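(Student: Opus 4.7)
The plan is to establish that the right-hand side of~\eqref{facto},
\[
F_{\la,\mu}(x,y;q,t) := P_\la^{(q,qt)}\!\left[X + \tfrac{q(1-t)}{1-qt}Y\right] P_\mu^{(qt,t)}[Y],
\]
satisfies the two defining properties of $P_{\la,\mu}$ stated in Theorem~\ref{Pdef1}; the uniqueness asserted there then forces $F_{\la,\mu}=P_{\la,\mu}$. Orthogonality will follow almost for free from Lemma~\ref{lemmaequivsp}: that lemma lets me compute $\LL F_{\la,\mu},F_{\omega,\eta}\RR_{q,t}$ via the equivalent scalar product $\LL\cdot,\cdot\RR'$. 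Setting $Z=X+\tfrac{q(1-t)}{1-qt}Y$ and expanding each Macdonald factor in the power-sums $p_\gamma[Z]$ and $p_\delta[Y]$, the diagonal form of $\LL\cdot,\cdot\RR'$ on that basis makes the inner product decouple as $q^{|\la|}\langle P_\la^{(q,qt)},P_\omega^{(q,qt)}\rangle_{q,qt}\cdot\langle P_\mu^{(qt,t)},P_\eta^{(qt,t)}\rangle_{qt,t}$, which vanishes unless $(\la,\mu)=(\omega,\eta)$ by the standard orthogonality of Macdonald polynomials.

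The technical heart of the proof is verifying the triangularity $F_{\la,\mu}=m_{\la,\mu}+\sum_{(\omega,\eta)<(\la,\mu)}\star\,m_{\omega,\eta}$ with respect to the pair-ordering~\eqref{domibi}. My plan is to expand $F_{\la,\mu}$ in the basis $m_{\omega,\rho}=s_\omega(x)m_\rho(y)$ in four steps: first write $P_\la^{(q,qt)}=\sum_{\nu\le\la}d_{\la\nu}\,s_\nu$ and $P_\mu^{(qt,t)}=\sum_{\tau\le\mu}e_{\mu\tau}\,s_\tau$, which are unitriangular in dominance; next apply the Schur coproduct
\[
s_\nu[X+cY]=\sum_{\omega\subseteq\nu}c^{\,|\nu|-|\omega|}\,s_\omega(x)\,s_{\nu/\omega}(y),\qquad c=\tfrac{q(1-t)}{1-qt};
\]
then expand $s_{\nu/\omega}(y)\,s_\tau(y)=\sum_\sigma c^{\sigma}_{\nu/\omega,\tau}\,s_\sigma(y)$ via Littlewood--Richardson; finally pass from Schur to monomial in $y$ through the Kostka matrix, again unitriangular in dominance. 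Two partial-sum estimates then deliver the two clauses of~\eqref{domibi}: from $\omega\subseteq\nu$ and $\nu\le\la$ we get $\sum_{i\le k}\omega_i\le\sum_{i\le k}\la_i$, and from $c^{\sigma}_{\gamma\tau}\neq 0\Rightarrow\sigma_j\ge\tau_j$ with $|\gamma|=|\la|-|\omega|$ we get $\sum_{j\le l}\sigma_j\le(|\la|-|\omega|)+\sum_{j\le l}\tau_j\le(|\la|-|\omega|)+\sum_{j\le l}\mu_j$, which rearranges precisely to $|\omega|+\sum_{j\le l}\rho_j\le|\la|+\sum_{j\le l}\mu_j$. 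Equality in both clauses forces $\omega=\nu=\la$, whereupon the $y$-factor collapses to $P_\mu^{(qt,t)}(y)$ and contributes $m_\mu(y)$ with coefficient one; hence the coefficient of $m_{\la,\mu}$ in $F_{\la,\mu}$ is exactly $1$.

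The main obstacle I foresee is the bookkeeping in the previous paragraph: three separate triangular structures (Schur-unitriangularity of the two Macdonald factors, Littlewood--Richardson positivity, and Kostka unitriangularity) must be stitched together, and the resulting partial-sum bounds have to align exactly with the two distinct clauses of~\eqref{domibi}. All of the analytic content behind orthogonality is already packaged in Lemma~\ref{lemmaequivsp}, so that step is short; the triangularity check is where the proof of~\eqref{facto} really lives, after which Theorem~\ref{Pdef1} supplies $F_{\la,\mu}=P_{\la,\mu}$ and hence~\eqref{facto}.
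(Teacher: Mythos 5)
Your proposal follows essentially the same route as the paper: orthogonality is delegated entirely to Lemma~\ref{lemmaequivsp}, and unitriangularity is obtained by chaining the Schur/monomial triangularity of the two Macdonald factors with the coproduct of $s_\nu\left[X+cY\right]$, a Littlewood--Richardson step, and the two partial-sum estimates matching the two clauses of \eqref{domibi}, with the leading coefficient $1$ extracted exactly as in the paper. One small correction: since $c=\frac{q(1-t)}{1-qt}$ is not a monomial, $s_{\nu/\omega}\left[cY\right]$ is \emph{not} equal to $c^{|\nu|-|\omega|}s_{\nu/\omega}\left[Y\right]$ --- it is merely some symmetric function of degree $|\nu|-|\omega|$ in $Y$ with no useful triangularity --- but your argument only uses this degree together with the containment $\tau\subseteq\sigma$ forced by nonvanishing Littlewood--Richardson coefficients, so the conclusion is unaffected (the paper makes this same observation explicitly).
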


It should be commented that although 
the factorization of Theorem~\ref{teofac} seems highly asymmetric in 
$X$ and $Y$, the plethystic substitution \eqref{plethysubs} will transform
it into the much more symmetrical expression \eqref{HH}.

\begin{proof} We need to show that the products of Macdonald polynomials 
appearing on the rhs of  \eqref{facto} 
are unitriangular when expanded in the monomial basis and orthogonal with respect to the scalar product $\LL \cdot, \cdot \RR_{q,t}$ defined in
\eqref{newsp}.

The products of Macdonald polynomials
$P_{\lambda}^{(q,qt)} \left[ X + \frac{q(1-t)}{1-qt}Y\right] P_{\mu}^{(qt,t)} \left[ Y\right]$ are (basically by definition) orthogonal with respect to the
scalar product $\LL \cdot, \cdot \RR'$
defined in \eqref{scalarautre} (notice that by homogeneity the extra factor $q^{|\lambda|}$ does not have any effect on the orthogonality).  By 
Lemma~\ref{lemmaequivsp}, the orthogonality is then immediate.

We now show the unitriangularity of 
$
P_{\lambda}^{(q,qt)} \left[ X + \frac{q(1-t)}{1-qt}Y\right] P_{\mu}^{(qt,t)} \left[ Y\right]
$ in the monomial basis.  By triangularity of the Macdonald polynomials, 
both in the Schur and monomial bases, we
get
\beq
P_{\lambda}^{(q,qt)} \left[ X + \frac{q(1-t)}{1-qt}Y\right] P_{\mu}^{(qt,t)} \left[ Y\right] = \sum_{\nu \leq \lambda } * \,
s_\nu \left[ X + \frac{q(1-t)}{1-qt}Y\right] \sum_{\omega \leq \mu} *\,m_\omega[Y],
\eeq 
where $*$ stands for some irrelevant coefficients (that we will keep denoting $*$).
We have
\beq
s_{\nu} \left[ X + \frac{q(1-t)}{1-qt}Y\right]
=\sum_{\rho \subseteq \nu, \gamma \subseteq \nu} c^{\nu}_{\rho \gamma} \,s_{\rho}[X] \,
s_{\gamma}  \left[\frac{q(1-t)}{1-qt}Y\right]=
\sum_{\rho \subseteq \nu, \sigma : |\sigma|+|\rho|=|\nu|} * \, s_{\rho}[X] \,
s_{\sigma}  \left[ Y \right]
\eeq
Note that in the last equation, we only used $|\sigma|=|\gamma|$
and $|\gamma|+|\rho|=|\nu|$ (there is no triangularity when
$s_{\gamma}  \left[\frac{q(1-t)}{1-qt}Y\right]$ is expanded in the
$s_{\sigma}[Y]$ basis).  Moreover, it is an elementary fact that
$$
s_{\sigma}[Y] m_{\omega}[Y] = m_{\sigma+\omega}[Y] +
\sum_{\beta < \sigma+\omega} * \, m_{\beta}[Y] \, .
$$
Therefore
\beq
P_{\lambda}^{(q,qt)} \left[ X + \frac{q(1-t)}{1-qt}Y\right] P_{\mu}^{(qt,t)} \left[ Y\right] = \sum_{\nu \leq \lambda, \omega \leq \mu} 
\sum_{\rho \subseteq \nu, \sigma : |\sigma|+|\rho|=|\nu|}
\sum_{\beta \leq \sigma+\omega}
* \,
s_\rho [ X]\, m_\beta[Y]
\eeq
{From} $\lambda \geq \nu$ and $\nu \supseteq \rho$, we obtain
\beq
\lambda_1 + \cdots + \lambda_i \geq \nu_1 + \cdots + \nu_i \geq
\rho_1 + \cdots + \rho_i 
\eeq
while from $\mu \geq \omega$, $ |\rho|+ |\sigma|=|\nu|=|\lambda|$ and $\omega+\sigma \geq \beta$, we get
\beq
|\lambda| + \mu_1 + \cdots + \mu_i \geq   |\rho|+ |\sigma| + \omega_1 + \cdots + \omega_i   \geq
 |\rho|+ \omega_1+\sigma_1 + \cdots + \omega_i+\sigma_i \geq |\rho|+\beta_1+\cdots + \beta_i \, ,
\eeq
which proves the triangularity.  The unitriangularity is immediate from the unitriangularity of the Macdonald polynomials when expanded in the Schur or the monomial basis.
\end{proof}

For later references, we state explicitly the Jack limit.
\begin{corollary}  
If $m \geq |\lambda|+|\mu|$ and 
$N-m \geq |\lambda|+|\mu|$, then in the limit $q=t^\alpha, t \to 1$ we obtain
\beq \label{eqjack}
P_{\lambda,\mu}(x,y;\alpha) = P_{\lambda}^{(\alpha/(\alpha+1))} \left[ X + \frac{1}{\alpha+1}Y\right] P_{\mu}^{(\alpha+1)} \left[ Y\right] \, ,\eeq
where $P_{\lambda}^{(\alpha)}(x)$ stands for the usual Jack polynomial
 $P_{\lambda}(x;\alpha)$.  
\end{corollary}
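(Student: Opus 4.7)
The plan is to take the Jack limit $q = t^\alpha$, $t \to 1$ of the factorization established in Theorem~\ref{teofac}. The coefficients appearing in the monomial expansion of $P_{\lambda,\mu}(x,y;q,t)$ are rational functions of $q$ and $t$ that specialize continuously at $q = t^\alpha, t = 1$ (otherwise the usual Jack polynomials would not exist either), so the left-hand side limits to $P_{\lambda,\mu}(x,y;\alpha)$ by definition. All the substantive work is thus confined to the right-hand side.

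First, I would compute the scalar plethystic factor. Setting $s = 1-t$ and expanding $t^\alpha(1-t) = s + O(s^2)$ and $1 - t^{\alpha+1} = (\alpha+1)s + O(s^2)$ gives
\[
\lim_{t \to 1,\, q = t^\alpha} \frac{q(1-t)}{1-qt} = \frac{1}{\alpha+1},
\]
so the plethystic alphabet $X + \frac{q(1-t)}{1-qt}Y$ becomes $X + \frac{1}{\alpha+1}Y$ in the limit, matching \eqref{eqjack}.

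Next, I would reparametrize each Macdonald factor so as to recognize its Jack limit. For $P_\lambda^{(q,qt)}$, observe that under $q = t^\alpha$ one has $q = (qt)^{\alpha/(\alpha+1)}$; setting $t_1 := qt = t^{\alpha+1}$, the substitution $(q,qt) = (t_1^{\alpha/(\alpha+1)},t_1)$ with $t_1 \to 1$ is precisely the standard Macdonald-to-Jack specialization, yielding $P_\lambda^{(\alpha/(\alpha+1))}$. For $P_\mu^{(qt,t)}$, the first parameter satisfies $qt = t^{\alpha+1}$, so the pair $(qt,t)$ is already in the canonical form $(t^{\alpha+1},t)$ with Jack parameter $\alpha+1$, giving $P_\mu^{(\alpha+1)}$ as $t \to 1$. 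Assembling these three limits inside the factorized product of Theorem~\ref{teofac} reproduces the right-hand side of \eqref{eqjack}.

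The only point requiring real care is the interchange of the Jack limit with the plethystic substitution of alphabets. This is essentially a continuity statement: plethystic substitution acts rationally on power sums, and the Macdonald polynomials expand in the power-sum basis with coefficients in $\mathbb{Q}(q,t)$ that remain regular at $q = t^\alpha, t = 1$, so the limit commutes with the substitution. This is the closest thing to an obstacle in the argument, and it is routine.
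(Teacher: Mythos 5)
Your proposal is correct and follows essentially the same route as the paper, which states this corollary as an immediate consequence of Theorem~\ref{teofac} and justifies the plethystic limit in a footnote by computing $\lim_{q=t^\alpha,\,t\to 1} \frac{q^r(1-t^r)}{1-q^rt^r} = \frac{1}{\alpha+1}$ on each power sum $p_r$. The only refinement worth noting is that the limit must be taken of the coefficient $\frac{q^r(1-t^r)}{1-q^rt^r}$ acting on $p_r[Y]$ for each $r$ (not merely of the scalar $\frac{q(1-t)}{1-qt}$), and it is precisely because this limit equals $\frac{1}{\alpha+1}$ independently of $r$ that the result can be written as a plethysm by $X+\frac{1}{\alpha+1}Y$ with $\alpha$ unaffected by the plethysm --- a point your final paragraph essentially covers.
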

It is important to stress that in \eqref{eqjack}, the plethystic notation is such that $p_r$ acts on the ring of rational functions in the variables $x_1,\dots,x_m,y_1,\dots,y_{N-m}$ over the field $\mathbb Q(\alpha)$.  Consequently,
$\alpha$ is not affected by the plethysm\footnote{This is easily seen: 
$$ p_r\left[ X + \frac{1}{\alpha+1}Y\right]=\lim_{\substack{q=t^\a\\ t\rw 1}}p_r\left[ X + \frac{q(1-t)}{1-qt}Y\right]= \lim_{\substack{q=t^\a\\ t\rw 1}}\left(p_r\left[ X\right] + \frac{q^r(1-t^r)}{1-q^rt^r}p_r\left[Y\right]\right)= p_r[X]+  \frac{1}{\alpha+1}p_r[Y] \, .
$$}, that is,
\beq
p_r\left[ X + \frac{1}{\alpha+1}Y\right] = p_r[X]+  \frac{1}{\alpha+1}\, p_r[Y] \, .
\eeq

\subsection{Norm}  
Using Theorem~\ref{teofac} we obtain rather directly the expression for the norm of the double Macdonald polynomials.

\begin{corollary}\label{cor_norm}
The norm of the double Macdonald polynomial $P_{\lambda,\mu}(x,y;q,t)$ is:
\begin{equation} \label{normmacdobn}
\LL P_{\la,\mu}(x,y;q,t), P_{\la,\mu}(x,y;q,t) \RR_{q,t} =q^{|\la|} b_\la(q,qt)^{-1} b_\mu(qt,t)^{-1} =:b_{\la,\mu}(q,t)^{-1}  , 
\end{equation}
where 
\begin{equation}
b_\la(q,t) = \prod_{s\in \la} \frac{1-q^{a(s)} t^{l(s)+1}}{1-q^{a(s)+1} t^{l(s)}}.
\end{equation}
\end{corollary}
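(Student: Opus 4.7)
The plan is to combine the factorization from Theorem~\ref{teofac} with the scalar-product equivalence of Lemma~\ref{lemmaequivsp}, reducing the computation to two independent copies of the classical Macdonald norm formula \cite[VI.(6.19)]{Mac}, which in our notation reads $\langle P_\nu^{(q,t)},P_\nu^{(q,t)}\rangle_{q,t}=b_\nu(q,t)^{-1}$.

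Concretely, I would first invoke Theorem~\ref{teofac} to express $P_{\lambda,\mu}(x,y;q,t)$ as $P_{\lambda}^{(q,qt)}\bigl[X+\tfrac{q(1-t)}{1-qt}Y\bigr]\,P_{\mu}^{(qt,t)}[Y]$, and then use Lemma~\ref{lemmaequivsp} to replace $\LL\cdot,\cdot\RR_{q,t}$ by $\LL\cdot,\cdot\RR'$. The key observation is that \eqref{scalarautre} exhibits $\LL\cdot,\cdot\RR'$ as a tensor product of two standard Macdonald scalar products (with an overall factor $q^{|\lambda|}$): on power sums, $p_\lambda\!\left[X+\tfrac{q(1-t)}{1-qt}Y\right]$ and $p_\mu[Y]$ pair orthogonally, weighted respectively by $z_\lambda(q,qt)$ and $z_\mu(qt,t)$. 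Since the power sums form a basis and this factorization is bilinear (and the factor $q^{|\lambda|}$ pulls out cleanly thanks to bi-homogeneity), it extends to arbitrary symmetric functions: for any homogeneous $F,F',G,G'$,
\beq
\LL F\!\left[X+\tfrac{q(1-t)}{1-qt}Y\right]G[Y],\,F'\!\left[X+\tfrac{q(1-t)}{1-qt}Y\right]G'[Y]\RR' \;=\; q^{|F|}\langle F,F'\rangle_{q,qt}\,\langle G,G'\rangle_{qt,t}.
\eeq

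Specializing $F=F'=P_\lambda^{(q,qt)}$ and $G=G'=P_\mu^{(qt,t)}$ and invoking the classical norm formula in each factor then yields
\beq
\LL P_{\lambda,\mu},\,P_{\lambda,\mu}\RR_{q,t} \;=\; q^{|\lambda|}\,\langle P_\lambda^{(q,qt)},P_\lambda^{(q,qt)}\rangle_{q,qt}\,\langle P_\mu^{(qt,t)},P_\mu^{(qt,t)}\rangle_{qt,t} \;=\; q^{|\lambda|}\,b_\lambda(q,qt)^{-1}\,b_\mu(qt,t)^{-1},
\eeq
which is the desired identity.

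No step of this outline should present a serious obstacle: Theorem~\ref{teofac} has already done the heavy lifting, and the tensor-product structure of $\LL\cdot,\cdot\RR'$ is essentially the content of the definition \eqref{scalarautre}. The only mild point to verify is that the substitutions $p_r\mapsto p_r[X+\tfrac{q(1-t)}{1-qt}Y]$ and $p_r\mapsto p_r[Y]$ genuinely realize two algebraically independent copies of the ring of symmetric functions in the stable range $m,N-m\geq|\lambda|+|\mu|$, but this is exactly the independence already exploited in the proof of Lemma~\ref{lemmaequivsp}.
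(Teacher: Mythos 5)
Your proposal is correct and follows exactly the paper's own argument: the paper likewise sets $Z=X+\tfrac{q(1-t)}{1-qt}Y$, uses Lemma~\ref{lemmaequivsp} to rewrite $\LL\cdot,\cdot\RR_{q,t}$ as $q^{|\la|}\langle\cdot,\cdot\rangle_{q,qt}\otimes\langle\cdot,\cdot\rangle_{qt,t}$ on the factors $p_\la[Z]p_\mu[Y]$, and then applies the classical norm formula \cite[VI.4.11]{Mac} to each factor. No gaps; the tensor-product observation you single out is precisely the step the paper relies on.
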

\begin{proof}
Let $Z=X + \frac{q(1-t)}{1-qt}Y$.   In Lemma~\ref{lemmaequivsp} we have shown that the scalar product \eqref{newsp} is equivalent to the scalar product
\begin{equation}
\LL p_\la[Z] p_\mu[Y] , p_\nu[Z] p_\omega[Y] \RR' = \delta_{\la \nu} \delta_{\mu \omega} q^{|\la|} z_\la(q,qt) z_\mu(qt,t) = q^{|\la|}\langle p_\la, p_\nu \rangle_{q,qt} \times \langle p_\mu, p_\omega \rangle_{qt,t} \, .
\end{equation}
Therefore, the factorized form of $P_{\la,\mu}(x,y;q,t)$ also implies a factorization of its scalar product:
\begin{equation}
\LL P_{\la,\mu}(x,y;q,t), P_{\la,\mu}(x,y;q,t) \RR_{q,t} =q^{|\la|} \langle P_\la(x;q,qt) , P_\la(x;q,qt) \rangle_{q,qt} \times \langle P_{\mu}(x;qt,t) , P_{\mu}(x;qt,t)\rangle_{qt,t} \, , 
\end{equation} 
and the result follows from the norm of the Macdonald polynomials \cite[VI.4.11]{Mac}
\begin{equation}
\langle P_\la (x;q,t) , P_\la(x;q,t) \rangle_{q,t} =  b_\la(q,t)^{-1}.
\end{equation}
\end{proof}
If we define the normalized version of the double Macdonald polynomials as
\begin{equation}
Q_{\la,\mu}(x,y;q,t) = b_{\la,\mu}(q,t) P_{\la,\mu}(x,y;q,t),
\end{equation}
where $b_{\la,\mu}(q,t)=\LL P_{\la,\mu}, P_{\la,\mu}\RR_{q,t}^{-1}$  was defined in \eqref{normmacdobn},  we obtain
\begin{equation}\label{psQP}
\LL Q_{\la,\mu}(x,y;q,t) , P_{\nu,\omega}(x,y;q,t) \RR_{q,t} = \delta_{\la \nu} \delta_{\mu \omega}.
\end{equation}

\subsection{Kernel}
The form \eqref{newsp} of the scalar product leads to a natural 
generalization of the Macdonald kernel.  Let $u$  (resp. $v$)
be the union of the alphabets $(x_1,x_2,\dots)$ and  $(y_1,y_2,\dots)$
(resp.  $(z_1,z_2,\dots)$ and  $(w_1,w_2,\dots)$).  Defining
\begin{equation}
\Pi:=\Pi(u,v;q,t)= \prod_{i,j} \frac{(t u_i v_j ; q)_\infty}{( u_i v_j ; q)_\infty} \; \prod_{ k,l} \frac{1}{1-q^{-1}x_k z_l}\label{ker}
\end{equation}
we obtain by standard manipulations that
\begin{equation}
\Pi = \left\{  \sum_{\mu} z_\mu(q,t)^{-1}  p_\mu(x,y)p_\mu(z,w) \right\} \;  \left\{  \sum_{\la} \frac{1}{q^{|\la|}}  s_\la( x)s_\la( z) \right\}=
 \sum_{\la, \mu} z_{\la,\mu}(q,t)^{-1} p_{\la,\mu}(x,y) p_{\la,\mu}(z,w) \, .
\end{equation}
The duality \eqref{psQP} then implies that
\begin{equation}
\Pi = \sum_{\la,\mu} P_{\la,\mu}(x,y;q,t) Q_{\la,\mu}(z,w;q,t).
\end{equation}

\subsection{Specializations}  In \cite{BDLM1}, a picture 
describing the various specializations of the Macdonald polynomials in superspace was presented.  Figure~\ref{newlimits} gives the corresponding picture in the case of the double Macdonald polynomials.  In the figure,
there are two Hall-Littlewood limits ($P_{\lambda,\mu}(x,y;t)$
and $\bar P_{\lambda,\mu}(x,y;t)$), one Jack limit ($P_{\lambda,\mu}^{(\alpha)}(x,y)$)
and a one-parameter Schur limit ($s_{\lambda,\mu}(x,y;t)=P_{\lambda,\mu}(x,y;q,t)$) that specializes to the corresponding limits of the Hall-Littlewood and Jack limits.
\begin{figure}[ht]
\caption{{\footnotesize  Limiting  cases of the 
double Macdonald polynomials}}
\label{newlimits}
$$\begin{diagram}
\node{}\node{P_{\lambda,\mu}(x,y;q,t)} \arrow{sw,t}{q\to 0}\arrow{s,lr}{q=t^\alpha}{t\to 1}\arrow{se,t}{q\to \infty} \node{}\\
\node{P_{\lambda,\mu}(x,y;t)} \arrow{s,l}{t\to 0} 
\node{P_{\lambda,\mu}^{(\alpha)}(x,y)}
\arrow{s,r}{\alpha\to 1} \node{\bar P_{\lambda,\mu}(x,y;1/t)}
 \arrow{s,r}{t\to \infty} \\
\node{s_{\lambda,\mu}(x,y)}  \node{s^\mathrm{Jack}_{\lambda,\mu}(x,y)}   \node{\bar s_{\lambda,\mu}(x,y)}  \\
\node{} \node{s_{\lambda,\mu}(x,y;t)}\arrow{nw,b}{t\to 0}\arrow{n,r}{t\to 1}\arrow{ne,b}{t\to \infty} \node{}
\end{diagram}
$$
\end{figure}
Given the factorized form \eqref{facto}
of $P_{\lambda,\mu}(x,y;q,t)$, we can give each of these limits explicitly.
The Jack limit was presented in \eqref{eqjack}.  The Hall-Littlewood limits are
\beq
P_{\lambda,\mu}(x,y;t) = s_{\lambda} (x)  P_{\mu}(y;t)\qquad {\rm and}
\qquad \bar P_{\lambda,\mu}(x,y;t) = s_{\lambda}\left[X+(1-1/t)Y \right] P_{\mu}(y;1/t) \, ,
\eeq
where $P_{\mu}(y;t)$ is the Hall-Littlewood polynomial (the limit $q=0$ of the 
corresponding Macdonald polynomial). 
Finally, the Schur limit is 
\beq
s_{\lambda,\mu}(x,y;t)= P_{\lambda}^{(t,t^2)} \left[ X + \frac{t}{1+t}Y\right] P_{\mu}^{(t^2,t)} \left[ Y\right] \, ,
\eeq
which specializes to
\beq
s_{\lambda,\mu}(x,y) = s_{\lambda}(x) s_{\mu}(y) \,, \qquad
s^\mathrm{Jack}_{\lambda,\mu}(x,y) = P_{\lambda}^{(1/2)} \left[ X + \frac{1}{2}Y\right] P_{\mu}^{(2)} \left[ Y\right] \qquad {\rm and} \qquad \bar s_{\lambda,\mu}(x,y) =
s_{\lambda}[X+Y] s_{\mu}(y) \, ,
\eeq
where we recall that $P^{(\alpha)}(y)$ is the Jack polynomial.
Not considered in Figure~\ref{newlimits} are the limits $q=1$ and $t=1$,
which give respectively 
\beq
e_{\lambda,\mu}(x,y)=e_{\lambda}[X+Y]s_{\mu}(y) \qquad
{\rm and} \qquad m_{\lambda,\mu}(x,y)=s_\lambda(x)m_\mu(y)
\eeq
the analogs of the elementary and monomial symmetric functions.

\subsection{Duality} 
Let $\omega_X$ be the standard involution
\begin{equation}
\omega_X \, p_r[X] = (-1)^{r-1}p_r[X] = (-1)^r p_r[-X]
\end{equation}
which is such that $\omega_X s_{\la}[X]=s_{\la'}[X]$.  The involution
 $\omega = \omega_{X} \circ \omega_{Y}$ is such that for any elements $a(q,t)$ and $b(q,t)$
of $\mathbb Q(q,t)$ we have
\beq
\omega p_r\left[a(q,t)X + b(q,t) Y \right] = 
a(q^r,t^r) \, \omega_X \, p_r[X] +  b(q^r,t^r)\, 
\omega_Y \, p_r[Y] =  (-1)^{r-1}  p_r\left[a(q,t)X + b(q,t) Y \right] \, .
\eeq
Hence $\omega$ acts as the usual involution on symmetric functions in any alphabet made out of a combination of $X$ and $Y$.  

Let us now define the more general automorphism $\omega^B_{q,t}$ as
\begin{align}
& \omega_{q,t}^B \, p_r[X+Y]  =(-1)^{r-1}\frac{1-q^{-r}}{1-t^{-r}}p_r[X+Y] = \omega \, p_r\left[ \frac{t}{q} \left( \frac{1-q}{1-t} \right) (X+Y) \right] \\
& \omega^B_{q,t} \, p_r[X]=(-1)^r t^r p_r[X]= \omega \, p_r[-tX]
 & 
\end{align}
Clearly, the inverse of $\omega_{q,t}^B$ is given by
\begin{equation}\label{OmegaInv}
(\omega_{q,t}^{B})^{-1}  = \left(\frac{q}{t}\right)^n  \omega^B_{t^{-1}, q^{-1}}.
\end{equation}
\begin{proposition}\label{propdu} The following dualities hold:
\begin{align}
 \omega_{q,t}^B \, P_{\la,\mu}(x,y;q,t) &= Q_{\mu', \la'}(x,y;t^{-1},q^{-1}), 
\label{stat1}\\
\omega_{q,t}^B\, Q_{\la,\mu}(x,y;q,t) &= (t/q)^n P_{\mu', \la'}(x,y;t^{-1},q^{-1}), \label{stat2}
\end{align}
where $|\la|+|\mu|=n$.   In particular (see \eqref{psQP}):
\begin{equation} \label{dualomqt}
\LL \omega_{q,t}^B \,  P_{\mu',\la'}(x,y;q,t)  ,   P_{\nu,\omega}(x,y;t^{-1},q^{-1})  \RR_{t^{-1},q^{-1}} = \delta_{\la\nu} \delta_{\mu \omega}.
\end{equation}
\end{proposition}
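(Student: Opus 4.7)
The strategy is to apply $\omega_{q,t}^B$ to the factorization of Theorem~\ref{teofac}, invoke the classical Macdonald duality $\omega_{q,t}P_\lambda^{(q,t)} = Q_{\lambda'}^{(t,q)}$ on each factor, and match the result with the factorization of $Q_{\mu',\lambda'}(x,y;t^{-1},q^{-1})$ in the dual parameters.

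The key computational input is the action of $\omega_{q,t}^B$ on the two plethystic power sums $p_r[Z]$ (with $Z=X+\frac{q(1-t)}{1-qt}Y$) and $p_r[Y]$. Using the definitions of $\omega_{q,t}^B$ together with the inversion $p_r[X] = p_r[Z] - \frac{q^r(1-t^r)}{1-(qt)^r}p_r[Y]$, one verifies by direct computation that
\beq
\omega_{q,t}^B p_r[Z] = \omega_{q,qt}\,p_r[tY] = (-1)^{r-1}\frac{t^r(1-q^r)}{1-(qt)^r}p_r[Y]\,,\qquad \omega_{q,t}^B p_r[Y] = \omega_{qt,t}\,p_r\!\left[\tfrac{t}{q}Z''\right]\,,
\eeq
where $Z'' := X + \frac{1-q}{1-qt}Y$. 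The first identity is immediate once one notices that $\omega_{q,t}^B p_r[X]$ and $\omega_{q,t}^B p_r[X+Y]$ combine in exactly the right ratio to cancel the $p_r[X]$ contribution. The second is established by matching coefficients of $p_r[X]$ and $p_r[Y]$ separately, using $p_r[\tfrac{t}{q}Z''] = \tfrac{t^r}{q^r}p_r[X] + \tfrac{t^r(1-q^r)}{q^r(1-(qt)^r)}p_r[Y]$.

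Since $\omega_{q,t}^B$ is a ring homomorphism on bisymmetric functions, it distributes over the factorization $P_{\lambda,\mu}(x,y;q,t) = P_\lambda^{(q,qt)}[Z] \, P_\mu^{(qt,t)}[Y]$. The two identities above, combined with the classical Macdonald duality, yield
\beq
\omega_{q,t}^B P_{\lambda,\mu}(x,y;q,t) = Q_{\lambda'}^{(qt,q)}[tY] \cdot Q_{\mu'}^{(t,qt)}\!\left[\tfrac{t}{q}Z''\right].
\eeq
Pulling out the scalars via $Q_\nu[cA;a,b] = c^{|\nu|}Q_\nu[A;a,b]$, and then applying the parameter-inversion invariance $P_\nu(x;q,t) = P_\nu(x;q^{-1},t^{-1})$ (which follows from $\L\cdot,\cdot\R_{q^{-1},t^{-1}} = (t/q)^n\L\cdot,\cdot\R_{q,t}$ on degree-$n$ symmetric functions together with the parameter-free triangularity condition), the expression is recognized as a scalar multiple of $P_{\mu',\lambda'}(x,y;t^{-1},q^{-1}) = P_{\mu'}^{(t,qt)}[Z''] \, P_{\lambda'}^{(qt,q)}[Y]$. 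The remaining scalar is verified to coincide with $b_{\mu',\lambda'}(t^{-1},q^{-1})$ using Corollary~\ref{cor_norm} and the elementary identity $b_\nu(q^{-1},t^{-1}) = (q/t)^{|\nu|}b_\nu(q,t)$, which follows from a direct calculation on the product formula for $b_\nu(q,t)$. This establishes \eqref{stat1}.

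Statement \eqref{stat2} is then deduced from \eqref{stat1} by applying $(\omega_{q,t}^{B})^{-1} = (q/t)^{n}\omega^B_{t^{-1},q^{-1}}$ from \eqref{OmegaInv}, while \eqref{dualomqt} follows by specializing \eqref{stat1} with $(\lambda,\mu)\mapsto(\mu',\lambda')$ and combining with the $(P,Q)$-biorthogonality \eqref{psQP} at the dual parameters $(t^{-1},q^{-1})$. The main obstacle is the second plethystic identity in the key display: unlike the action on $p_r[Z]$, the image of $p_r[Y]$ under $\omega_{q,t}^B$ genuinely mixes both alphabets, and recognizing the resulting combination as $\omega_{qt,t}$ applied to the scalar multiple $\tfrac{t}{q}Z''$ of the ``dual'' alphabet is the crucial technical step that makes the whole factorization argument close up cleanly.
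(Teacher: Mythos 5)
Your proposal is correct and follows essentially the same route as the paper's proof: apply $\omega^B_{q,t}$ factor by factor to the factorization \eqref{facto}, observe that the alphabet $X+\frac{q(1-t)}{1-qt}Y$ is sent to a multiple of $Y$ while $Y$ is sent to a multiple of the dual alphabet $X+\frac{1-q}{1-qt}Y$, invoke the classical Macdonald duality on each factor, and match scalars via $b_\nu(q^{-1},t^{-1})=(q/t)^{|\nu|}b_\nu(q,t)$ against the factorization of $P_{\mu',\la'}(x,y;t^{-1},q^{-1})$. The only cosmetic difference is that you package the duality through the deformed involution $\omega_{q,t}$ acting on a rescaled alphabet, whereas the paper uses the plain $\omega$ applied to the plethystically twisted alphabet $\frac{1-q}{1-t}Z$; these are equivalent statements of \cite[VI.(5.1)]{Mac}.
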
 \label{propdual}
Observe that as expected, 
the $B_n$-analogue of the conjugation sends the pair of partitions
$\mu,\lambda$ to the pair of partitions $\lambda',\mu'$.
\begin{proof}  We will prove  \eqref{stat1}.  Relation
\eqref{OmegaInv} will then imply \eqref{stat2}.

First, from the factorized form of the double Macdonald polynomial \eqref{facto}, we have:
\begin{align}
P_{\mu',\la'}(x,y;t^{-1},q^{-1}) &= P_{\mu'}^{(t^{-1},(qt)^{-1})}\left[ X + \frac{(q^{-1})(1-t^{-1})Y  }{1-(qt)^{-1}}\right] P_{\la'}^{( (qt)^{-1},q^{-1}  )}[Y]  \nonumber \\
& = P_{\mu'}^{(t ,qt )}\left[ X + \frac{(1-q)Y  }{1-qt}\right] P_{\la'}^{( qt,q  )}[Y]
\end{align}
where, in the second equality, we used the symmetry $P_\la^{(q^{-1}, t^{-1})}[X] = P_\la^{(q,t)}[X]$.  The duality can now be computed explicitly.  We have
\begin{align}
\omega^B_{q,t} \, P_{\la,\mu}(x,y;q,t) &= \omega P_{\la}^{(q,qt)}\left[ t \frac{1-q}{1-qt} Y \right] \, \omega P_{\mu}^{(qt,t)} \left[  \left(\frac{t}{q} \right)   \frac{1-qt}{1-t} \left(   X+\frac{Y(1-q)}{1-qt}\right) \right] \\
& = t^{|\la|} b_{\la'}(qt,q) P_{\la'}^{(qt,q)}[Y] \left( \frac{t}{q} \right)^{|\mu|} b_{\mu'}(t,qt) P_{\mu'}^{(t,qt)} \left[  X + \frac{(1-q)Y}{1-qt}  \right]
\end{align}
where we have used 
$P_\la^{(q, t)}[\tau Z] = \tau^{|\la|}P_\la^{(q,t)}[Z]$ ($\tau$ stands for any monomial in $q$ and $t$) and the usual duality \cite{Mac}
\begin{equation}
\omega P_\la^{(q,t)}\left[  \frac{1-q}{1-t} Z\right] = b_{\la'}(t,q) P_{\la'}^{(t,q)}[Z].
\end{equation}
Using the relation $b_\la(q,t) = (t/q)^{|\la|}b_\la(q^{-1},t^{-1})$, we then obtain
\begin{equation}
 \omega_{q,t}^B \, P_{\la,\mu}(x,y;q,t) = t^{|\mu|} b_{\mu'}(t^{-1}, (qt)^{-1}) \,
b_{\la'}((qt)^{-1},q^{-1}) P_{\mu', \la'}(x,y;t^{-1},q^{-1}) \, ,
\end{equation}
which completes the proof.
\end{proof}
The 
scalar product \eqref{newsp} does not behave well in the
limits $q=t=0$ and $q=t=\infty$.
An interesting consequence of Proposition~\ref{propdual} is that it can be used to relate the corresponding specializations of the double Macdonald polynomials.  In effect, from the definition of $\omega_{q,t}^B$, we have
that \eqref{dualomqt} is equivalent to
\begin{equation} 
\LL \omega^B_{1,1} \,  P_{\mu',\la'}(x,y;q,t)  ,   P_{\nu,\omega}(x,y;t^{-1},q^{-1})  \RR_{1,1} = \delta_{\la\nu} \delta_{\mu \omega}.
\end{equation}
The limit $q=t=\infty$ of this result is then well defined and reads
\begin{equation} \label{dualom}
\LL \omega^B_{1,1} \, 
 s_{\mu'}[X+Y] s_{\la'}(y)  ,  
 s_\nu(x) s_{\omega}(y)  \RR_{1,1} = \delta_{\la\nu} \delta_{\mu \omega}.
\end{equation}
The following  
identity involving products of four
Littlewood-Richardson coefficients (which to the best of our knowledge is not in
the literature) follows from the previous 
duality.  
\begin{proposition}\label{les4c} For any partitions 
$\lambda,\mu,\nu$ and $\omega$, we have
\beq
\sum_{\gamma,\eta,\sigma,\tau} (-1)^{|\tau|} 
c^{\gamma}_{\tau' \nu}
c^{\lambda}_{\gamma \eta} c^{\sigma}_{\eta \mu}  c^\omega_{\sigma \tau}
 =  \delta_{\lambda \nu} \delta_{\mu \omega}
\eeq
where $\gamma,\eta,\sigma,\tau$ run over 
all partitions and where $c^{\lambda}_{\mu \nu}$ is the corresponding
Littlewood-Richardson coefficient.   Equivalently, if as in \cite{BVO} we let
\begin{equation}
c^\nu_{\lambda \mu \eta}=\sum_{\xi} c^{\xi}_{\lambda \mu} c^{\nu}_{\xi \eta}
\end{equation}
the identity reads
\begin{equation}
\sum_{\eta,\tau} (-1)^{|\tau|} 
c^{\lambda}_{\tau' \nu \eta}
 c^{\omega}_{\eta \mu \tau} 
 =  \delta_{\lambda \nu} \delta_{\mu \omega}
\end{equation}
\end{proposition}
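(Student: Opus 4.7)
The plan is to derive the identity by direct expansion of the duality \eqref{dualom}. The first preparation is to pin down the action of $\omega^B_{1,1}$ on the basis $\{s_\alpha[X]\, s_\beta[X+Y]\}$ of the bisymmetric space. From the definition of $\omega^B_{q,t}$ at $q=t=1$, one reads off $\omega^B_{1,1} p_r[X+Y] = (-1)^{r-1} p_r[X+Y]$ and $\omega^B_{1,1} p_r[X] = (-1)^r p_r[X]$; hence $\omega^B_{1,1}$ acts on the $X+Y$ factor as the usual involution $\omega$, sending $s_\phi[X+Y] \mapsto s_{\phi'}[X+Y]$, and on the $X$ factor as multiplication by $(-1)^{\deg}$, sending $s_\eta[X] \mapsto (-1)^{|\eta|} s_\eta[X]$. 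A parallel observation is that the scalar product \eqref{newsp} at $q=t=1$ factorizes on this basis as
\[
\LL s_a[X]\,f[X+Y],\, s_b[X]\,g[X+Y] \RR_{1,1} = \delta_{ab}\,\langle f,g\rangle,
\]
where $\langle\cdot,\cdot\rangle$ is the standard Hall scalar product on symmetric functions.

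Next, I would rewrite both arguments of \eqref{dualom} in the basis $\{s_\alpha[X]\, s_\beta[X+Y]\}$ by applying the plethystic addition formula
\[
s_\rho\bigl[(X+Y)-X\bigr] = \sum_{\xi,\eta} c^\rho_{\xi\eta}\,(-1)^{|\eta|}\, s_\xi[X+Y]\, s_{\eta'}[X]
\]
to each of $s_{\lambda'}[Y]$ and $s_\omega[Y]$, and then expanding the products $s_{\mu'}[X+Y]\,s_\xi[X+Y]$ and $s_{\tau'}[X]\,s_\nu[X]$ using Littlewood-Richardson coefficients. This produces four LR coefficients in total. When $\omega^B_{1,1}$ is applied to the first argument, the sign $(-1)^{|\eta|}$ from the $s_{\lambda'}[Y]$ expansion is cancelled by the $(-1)^{|\eta'|}=(-1)^{|\eta|}$ produced by $\omega^B_{1,1}$ on $s_{\eta'}[X]$, so only the sign $(-1)^{|\tau|}$ coming from the $s_\omega[Y]$ expansion survives.

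Pairing the two expansions via the factorized scalar product then collapses the $X$- and $X+Y$-summation indices through two Kronecker deltas, leaving a sum of four Littlewood-Richardson coefficients weighted by $(-1)^{|\tau|}$. A relabeling of the dummy variables together with the conjugation symmetry $c^{a'}_{b'c'} = c^a_{bc}$ rewrites the expression exactly in the form of the claimed identity, while the value $\delta_{\lambda\nu}\delta_{\mu\omega}$ is precisely the content of \eqref{dualom}. The ``equivalently'' formulation follows immediately by absorbing each of the pairs $\{c^\gamma_{\tau'\nu},\, c^\lambda_{\gamma\eta}\}$ and $\{c^\sigma_{\eta\mu},\, c^\omega_{\sigma\tau}\}$ into a triple LR coefficient via $\sum_\xi c^\xi_{\alpha\beta}\,c^\nu_{\xi\gamma} = c^\nu_{\alpha\beta\gamma}$. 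The main obstacle is simply careful bookkeeping of signs and partition conjugations through the plethystic substitutions; once those match, the combinatorial identity drops out of the duality with no further input.
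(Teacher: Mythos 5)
Your proposal is correct and follows essentially the same route as the paper's proof: both expand the two sides of the duality \eqref{dualom} in the basis $s_\alpha[X]\,s_\beta[X+Y]$ via $s_\rho[(X+Y)-X]$, use the same sign cancellation between the $(-1)^{|\eta|}$ from that expansion and the $(-1)^{|\eta|}$ produced by $\omega^B_{1,1}$ on the $X$-factor, and conclude by the factorized orthogonality \eqref{orthospec} followed by relabeling and $c^{a'}_{b'c'}=c^a_{bc}$. No substantive differences.
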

\begin{proof} 
The scalar product \eqref{newsp} in the limit $q=t=1$ is such that
\begin{equation} \label{orthospec}
\LL s_{\mu}[X] \, s_{\lambda}[X+Y], s_{\gamma}[X]\, s_{\nu}[X+Y] \RR_{1,1}
= \delta_{\mu \gamma} \delta_{\lambda \nu} \, .
\end{equation}
Moreover, the action of $\omega^B_{1,1}$ on $s_{\mu}[X] \, s_{\lambda}[X+Y]$ is easily seen to be
\beq
\omega^B_{1,1} \, s_{\mu}[X] \, s_{\lambda}[X+Y] = (-1)^{|\mu|}s_{\mu}[X] \, s_{\lambda'}[X+Y] \, .
\eeq
It will thus prove convenient to expand  $s_{\mu'}[X+Y] s_{\la'}(y)$ and  
$s_\nu(x) s_{\omega}(y)$ in the  $s_{\mu}[X] \, s_{\lambda}[X+Y]$ basis. We have
\beq \label{eqdeuxfois}
s_{\lambda'}[Y] = s_{\lambda'}[-X+X+Y]=
\sum_{\gamma,\eta} c^{\lambda'}_{\gamma \eta} s_{\gamma}[-X] s_{\eta}[X+Y]  
= \sum_{\gamma,\eta} c^{\lambda'}_{\gamma \eta} (-1)^{|\gamma|} s_{\gamma'}[X] 
s_{\eta}[X+Y] \, .  
\eeq
Hence
\beq \label{expansion1}
\omega_{1,1}^B \, s_{\mu'}[X+Y] s_{\la'}[Y] = \omega_{1,1}^B \sum_{\gamma,\eta,\sigma} 
 (-1)^{|\gamma|} c^{\lambda'}_{\gamma \eta} c_{\eta \mu'}^{\sigma} s_{\gamma'}[X] 
s_{\sigma}[X+Y] = \sum_{\gamma,\eta,\sigma} 
 c^{\lambda'}_{\gamma \eta} c_{\eta \mu'}^{\sigma} s_{\gamma'}[X] 
s_{\sigma'}[X+Y] \, .
\eeq
Similarly, after some straightforward computations (using again 
\eqref{eqdeuxfois}),  we obtain
\begin{equation} \label{expansion2}
s_{\nu}[X] s_{\omega}[Y] =
\sum_{\tau,\pi,\xi} (-1)^{|\tau|} c^{\xi}_{\nu \tau'}
c^{\omega}_{\pi \tau}  s_{\xi}[X] s_{\pi}[X+Y]  \, .
\end{equation}
Finally, replacing the expansions \eqref{expansion1} and 
\eqref{expansion2} in \eqref{dualom}, we get thanks to
\eqref{orthospec} the identity
\beq
\sum_{\gamma,\eta,\sigma,\tau} (-1)^{|\tau|} c^{\lambda'}_{\gamma \eta} c^{\sigma}_{\eta \mu'} c^{\gamma'}_{\nu \tau'} c^\omega_{\sigma' \tau} =  \delta_{\lambda \nu} 
\delta_{\mu \omega}\, .
\eeq
The proposition then follows since
\beq
\sum_{\gamma,\eta,\sigma,\tau} (-1)^{|\tau|} c^{\lambda'}_{\gamma \eta} c^{\sigma}_{\eta \mu'} c^{\gamma'}_{\nu \tau'} c^\omega_{\sigma' \tau} =  
\sum_{\gamma,\eta,\sigma,\tau} (-1)^{|\tau|} c^{\lambda}_{\gamma' \eta'} c^{\sigma'}_{\eta' \mu} c^{\gamma'}_{\nu \tau'} c^\omega_{\sigma' \tau} =
\sum_{\gamma,\eta,\sigma,\tau} (-1)^{|\tau|} 
c^{\gamma}_{\tau' \nu}
c^{\lambda}_{\gamma \eta} c^{\sigma}_{\eta \mu}  c^\omega_{\sigma \tau} \, ,
\eeq
where we used $c^{\lambda}_{\mu \nu}= c^{\lambda'}_{\mu' \nu'}$ and 
$c^{\lambda}_{\mu \nu} = c^{\lambda}_{\nu \mu}$.
\end{proof}

\subsection{Evaluation}  We now provide an explicit formula for the  evaluation of the double Macdonald polynomials. The first point to clarify is the way we could specialize the variables $x$ and $y$. For this we recall that the most general evaluation of the usual Macdonald polynomial is \cite[eqs VI (6.16)-(6.17)]{Mac}
\begin{equation}\label{defw}
P_\la^{(q,t)}\left[ \frac{1-u}{1-t} \right] = \prod_{s \in \la} \frac{ t^{l'(s)} -q^{a'(s)} u }{1-q^{a(s)}  t^{l(s)+1}  } =: w_\la(u;q,t)
\end{equation}
where the plethysm is such that:
\begin{equation}
p_r\left[\frac{1-u}{1-t} \right] = \frac{1-u^r}{1-t^r}.
\end{equation}
Considering the factorized expression \eqref{factor}, one sees that in order to evaluate the first term $P_\la^{(q,qt)} \left[ X+ \frac{q(1-t)}{1-qt}Y\right] $ we need to bring the argument in the proper form, that is, set $X$ and $Y$ such that
\begin{equation}
\label{coeva} X+\frac{q(1-t)}{1-qt}Y=\frac{1-u}{1-qt}.\end{equation}
This clearly requires $X$ and $Y$ to be of the form
\begin{equation}X=q^a(1+qt+\cdots+(qt)^m)\qquad\text{and}\qquad Y=t^b(1+t+\cdots + t^{N-m}),\end{equation}
and the condition \eqref{coeva} further imposes $a=1-m$ and $b=m$ (so that the resulting $u$ is $q^mt^N$). 
We thus define the evaluation as
\begin{equation}\label{defeva}
\mathrm E_{N,m}\bigl(P_{\la,\mu}(x,y;q,t)\bigr)= P_{\la,\mu}(x,y;q,t) \Big|_{x_i=\frac{t^{i-1}}{q^{m-i}}, y_i=t^{m+i-1}}.
\end{equation}

\begin{corollary} \label{coroeval}
With the evaluation defined in \eqref{defeva} and $w_{\lambda}(u;q,t)$ defined in \eqref{defw}, we have
\begin{equation} \label{stableEval}
\mathrm E_{N,m}\bigl(P_{\la,\mu}(x,y;q,t)\bigr) = \frac{t^{m|\mu|}}{q^{(m-1)|\la|}  }  w_\la(q^mt^{N};q,qt) w_\mu(t^{N-m};qt,t).
\end{equation}
\end{corollary}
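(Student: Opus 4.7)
The plan is to start from the factorization \eqref{facto} of Theorem~\ref{teofac} and then evaluate each factor separately using the known evaluation formula \eqref{defw} for the ordinary Macdonald polynomials, combined with homogeneity.

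First I would compute the action of the specialization $x_i = t^{i-1}/q^{m-i}$, $y_i = t^{m+i-1}$ on the power-sums:
\begin{align*}
p_r[X]\big|_{\rm eval} &= \sum_{i=1}^m \frac{t^{r(i-1)}}{q^{r(m-i)}} = \frac{1}{q^{r(m-1)}}\cdot\frac{1-(qt)^{rm}}{1-(qt)^r},\\
p_r[Y]\big|_{\rm eval} &= t^{rm}\cdot\frac{1-t^{r(N-m)}}{1-t^r}.
\end{align*}
Using the plethystic rule $p_r\!\left[X+\tfrac{q(1-t)}{1-qt}Y\right] = p_r[X] + \tfrac{q^r(1-t^r)}{1-(qt)^r}p_r[Y]$, I would combine these two contributions over the common denominator $q^{r(m-1)}(1-(qt)^r)$. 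The key telescoping is
\[
1-(qt)^{rm} + (qt)^{rm}\bigl(1-t^{r(N-m)}\bigr) = 1 - q^{rm}t^{rN},
\]
so that
\[
p_r\!\left[X+\frac{q(1-t)}{1-qt}Y\right]\!\bigg|_{\rm eval} = \frac{1}{q^{r(m-1)}}\cdot\frac{1-(q^mt^N)^r}{1-(qt)^r} = \frac{1}{q^{r(m-1)}}\,p_r\!\left[\frac{1-q^mt^N}{1-qt}\right].
\]
Similarly $p_r[Y]\big|_{\rm eval} = t^{rm}\,p_r\!\left[\frac{1-t^{N-m}}{1-t}\right]$.

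Next, since $P_\lambda^{(q,qt)}$ is a weighted homogeneous polynomial of degree $|\lambda|$ in the power-sums (each $p_r$ having degree $r$), the overall factor $q^{-r(m-1)}$ pulls out as $q^{-(m-1)|\lambda|}$. Thus, by \eqref{defw} with parameters $(q,qt)$ and $u = q^m t^N$,
\[
P_\lambda^{(q,qt)}\!\left[X+\frac{q(1-t)}{1-qt}Y\right]\!\bigg|_{\rm eval} = \frac{1}{q^{(m-1)|\lambda|}}\,P_\lambda^{(q,qt)}\!\left[\frac{1-q^mt^N}{1-qt}\right] = \frac{1}{q^{(m-1)|\lambda|}}\,w_\lambda(q^mt^N;q,qt).
\]
The analogous argument for the second factor gives
\[
P_\mu^{(qt,t)}[Y]\big|_{\rm eval} = t^{m|\mu|}\,P_\mu^{(qt,t)}\!\left[\frac{1-t^{N-m}}{1-t}\right] = t^{m|\mu|}\,w_\mu(t^{N-m};qt,t).
\]
Multiplying the two specializations yields \eqref{stableEval}.

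The main obstacle is really only the plethystic bookkeeping in the first factor — verifying that the two pieces combine to the ``simple'' form $\tfrac{1}{q^{m-1}}\cdot\tfrac{1-q^mt^N}{1-qt}$, which is precisely the form dictated in the heuristic discussion preceding the corollary through \eqref{coeva}. Once that telescoping identity is in hand, homogeneity and the classical evaluation formula \eqref{defw} dispatch the rest immediately.
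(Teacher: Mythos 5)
Your proposal is correct and follows essentially the same route as the paper: factor via Theorem~\ref{teofac}, observe that the specialization turns the two plethystic arguments into $q^{-(m-1)}\frac{1-q^mt^N}{1-qt}$ and $t^m\frac{1-t^{N-m}}{1-t}$ respectively, and then apply homogeneity together with the classical evaluation formula \eqref{defw}. The only cosmetic difference is that you carry out the telescoping at the level of individual power-sums $p_r$, whereas the paper computes the geometric sums $X$ and $Y$ directly; the content is identical.
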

\begin{proof}
The specialization gives
\begin{equation}
X=x_1+x_2+ \ldots + x_m = \frac{1}{q^{m-1}} + \frac{t}{q^{m-2}} + \ldots + t^{m-1}=q^{1-m}\,\frac {1-(qt)^m}{1-qt}
\end{equation}
and
\begin{equation}
Y=y_{1}+\ldots + y_{N-m} =  t^m + \ldots + t^{N-1}=t^m\,\frac{1-t^{N-m}}{1-t}.
\end{equation}
The evaluation of the two  terms in the factorization \eqref{factor} is immediate:
\begin{align}
\mathrm E_{N,m }\left(P_\la^{(q,qt)} \left[ X+ \frac{q(1-t)}{1-qt}Y\right]\right) 
= \frac{1}{q^{(m-1)|\la|}} P_\la^{(q,qt)} \left[  \frac{1-q^mt^N}{1-qt} \right] = \frac{1}{q^{(m-1)|\la|}} w_\la(q^mt^N;q,qt)
\end{align}
and
\begin{equation}
\mathrm E_{N,m}( P_{\mu}^{(qt,t)}[Y] ) =  P_{\mu}^{(qt,t)}\left[ t^m\, \frac{1-t^{N-m}}{1-t} \right]  =  t^{m |\mu|} P_{\mu}^{(qt,t)}\left[ \frac{1-t^{N-m}}{1-t} \right] = t^{m|\mu|}w_\mu(t^{N-m};qt,t).
\end{equation}
whose product yields the announced result.
\end{proof}

\subsection{Kostka coefficients}

Recall that the integral form of the Macdonald polynomials is  \cite[VI.8.3]{Mac}
\beq
J_{\lambda}(x;q,t) = c_{\lambda}(q,t) P_{\lambda}(x;q,t)
\eeq
where
\beq
c_\la(q,t) = \prod_{s\in \la} (1-q^{a(s)} t^{l(s)+1}) \, .
\eeq
We define the integral form of the double Macdonald polynomials to be
\beq\label{Bnintform}
J_{\lambda,\mu} (x,y;q,t) =  c_{\lambda}(q,qt) \, c_{\mu} (qt,t) \, P_{\lambda,\mu}(x,y;q,t)\, .
\eeq 
Recall also the definition of the modified Macdonald polynomials
\beq \label{Kusuel}
H_{\lambda}(x;q,t)= J_{\lambda}^{(q,t)}\left[\frac{X}{1-t}\right]
\eeq
which, when expanded in the Schur basis,
are such that (this is equivalent to  \cite[VI.8.11]{Mac})
\beq
H_{\lambda}(x;q,t) = \sum_{\mu} K_{\mu \lambda} (q,t)\, s_{\mu}(x)
\eeq
where $K_{\mu \lambda} (q,t)$ is the $q,t$-Kostka coefficient.  It has been
shown in \cite{Hai} that  $K_{\mu \lambda} (q,t) \in \mathbb N[q,t]$.
Recall also that $K_{\mu \lambda} (1,1)=\chi^\mu_{(1^n)}$, where $\chi^\mu_{(1^n)}$ is value of the irreducible $S_n$-character $\chi^\mu$ at the class of the identity \cite[VI.8.16]{Mac}. Equivalently, $K_{\mu \lambda} (1,1)$ corresponds to the number of standard tableaux of shape $\mu$ (cf. \cite[eqs (2.1) and (9.7)]{Ber}).

By analogy, we
define 
\beq
H_{\lambda,\mu}(x,y;q,t) = \varphi(J_{\lambda,\mu}(x,y;q,t))
\eeq
where $\varphi$ is the the homomorphism whose action on the power sums is
\begin{equation} 
p_n[X] \mapsto p_n[X] \qquad {\rm and} \qquad p_n[X+Y] \mapsto \frac{1}{1-t^n}
p_n[X+Y] \, .
\end{equation}
Note that the homomorphism $\varphi$ is equivalent to 
the plethystic substitution 
\beq \label{plethysubs}
X \mapsto X \qquad {\rm and} \qquad X+Y \mapsto \frac{1}{(1-t)} (X+Y)
\eeq
 From Theorem~\ref{teofac}, after some straightforward manipulations, we obtain 
that
\begin{align} \label{HH}
H_{\lambda,\mu}(x,y;q,t) &=  J_{\lambda}^{(q,qt)} \left[ \frac{X}{1-qt} + \frac{qY}{1-qt}\right] J_{\mu}^{(qt,t)} \left[ \frac{Y}{1-t} + \frac{tX}{1-t}\right]\nonumber \\ &
= H_{\lambda}^{(q,qt)}[X+qY] \, H_{\mu}^{(qt,t)}[tX+Y] \, . 
\end{align}
Our interest is to introduce $B_n$ analogues of the Kostka coefficients by expanding 
$H_{\lambda,\mu}(x,y;q,t)$ in terms of the Schur functions associated to the irreducible characters of $B_n$ (see \cite[p. 178]{Mac}): 
\beq s_{\la,\mu}(x,y)=s_\la(x)\, s_\mu(y).
\label{sxy}
\eeq
Observe that from Theorem~\ref{teofac}, these Schur functions correspond
to  the specialization $q=t=0$ of $P_{\la,\mu}(x;q,t)$, namely 
$s_{\la,\mu}(x,y)=P_{\la,\mu}(x,y;0,0)$ (see also \cite{BDLM1,BDLM2}).

Now define the double Kostka coefficients $ K_{\kappa, \gamma \; \la,\mu}(q,t)$ through
the expansion 
\begin{equation}
H_{\la,\mu}(x,y;q,t)= \sum_{\kappa,\gamma} K_{\kappa, \gamma \; \la,\mu}(q,t) s_{\kappa, \gamma}(x,y).
\end{equation}
The reader is referred to Appendix~\ref{app_table} for tables of 
 double Kostka coefficients up to degree $n=3$.
We first connect $K_{\kappa, \gamma \; \la,\mu}(1,1)$ to the representation theory
of the hyperoctahedral group $B_n$.
\begin{proposition} \label{Prop1}
 Let $\lambda$ and $\mu$ be such that $|\lambda|+|\mu|=n$.
Then  $K_{\kappa, \gamma \; \la,\mu}(1,1)$ is the dimension of the irreducible
representation of $B_n$ indexed by the pairs of partitions $\kappa,\gamma$.
In particular,  $K_{\kappa, \gamma \; \la,\mu}(1,1)$ does not depend on $\lambda$ and
$\mu$.
\end{proposition}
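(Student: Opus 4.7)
The plan is to exploit the factorized expression \eqref{HH} for $H_{\la,\mu}$ and specialize it at $q=t=1$. First I would observe that in this limit the two plethystic arguments collapse to the same thing: $X+qY \mapsto X+Y$ and $tX+Y \mapsto X+Y$. The parameters $(q,qt)$ and $(qt,t)$ of the two modified Macdonald factors both become $(1,1)$, so
\begin{equation}
H_{\la,\mu}(x,y;1,1) \;=\; H_{\la}^{(1,1)}[X+Y]\, H_{\mu}^{(1,1)}[X+Y].
\end{equation}

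Next, I would invoke the classical fact recalled just before the proposition, namely $K_{\mu\la}(1,1)=\chi^\mu_{(1^n)}=f^\mu$, the number of standard tableaux of shape $\mu$ (equivalently, the dimension of the $S_n$-irreducible indexed by $\mu$). Together with the Frobenius-characteristic identity $p_1^n=\sum_{\mu\vdash n} f^\mu s_\mu$ for the regular representation of $S_n$, this gives
\begin{equation}
H_{\la}^{(1,1)}[Z] \;=\; \sum_{\nu \vdash |\la|} f^\nu s_\nu[Z] \;=\; p_1[Z]^{|\la|},
\end{equation}
independently of $\la$. Applying this with $Z=X+Y$ to both factors yields
\begin{equation}
H_{\la,\mu}(x,y;1,1) \;=\; p_1[X+Y]^{\,|\la|+|\mu|} \;=\; \bigl(p_1[X]+p_1[Y]\bigr)^{n},
\end{equation}
which is manifestly independent of the pair $(\la,\mu)$.

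Expanding by the binomial theorem and reusing $p_1^k=\sum_{|\kappa|=k} f^\kappa s_\kappa$ in each of the alphabets $x$ and $y$, I obtain
\begin{equation}
H_{\la,\mu}(x,y;1,1) \;=\; \sum_{k=0}^{n}\binom{n}{k} p_1[X]^{k} p_1[Y]^{n-k}
\;=\; \sum_{\kappa,\gamma:\,|\kappa|+|\gamma|=n} \binom{n}{|\kappa|} f^\kappa f^\gamma \, s_\kappa(x)\, s_\gamma(y).
\end{equation}
Reading off the coefficient of $s_{\kappa,\gamma}(x,y)=s_\kappa(x)s_\gamma(y)$ gives $K_{\kappa,\gamma;\la,\mu}(1,1)=\binom{n}{|\kappa|} f^\kappa f^\gamma$, which is plainly independent of $(\la,\mu)$. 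To conclude, I would recall that the irreducible representations of $B_n$ are parametrized by bipartitions $(\kappa,\gamma)$ with $|\kappa|+|\gamma|=n$, and their dimensions are precisely $\binom{n}{|\kappa|} f^\kappa f^\gamma$ (obtained by inducing the outer tensor product $V^\kappa\boxtimes V^\gamma$ from $B_{|\kappa|}\times B_{|\gamma|}$ to $B_n$). This identifies $K_{\kappa,\gamma;\la,\mu}(1,1)$ with $\dim V^{(\kappa,\gamma)}$, as claimed.

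There is no serious obstacle: the only point that deserves care is the specialization $H_\la^{(q,t)} \to p_1^{|\la|}$ at $q=t=1$, which follows cleanly from the already-quoted identity $K_{\mu\la}(1,1)=f^\mu$ of \cite[VI.8.16]{Mac} and does not require taking any singular plethystic limit directly in \eqref{Kusuel}. Everything else is the binomial expansion plus the classical dimension formula for $B_n$-irreducibles.
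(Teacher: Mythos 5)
Your proof is correct and follows the same main line as the paper's: specialize the factorization \eqref{HH} at $q=t=1$, use $H_{\la}(x;1,1)=p_1^{|\la|}$ (which you justify via $K_{\mu\la}(1,1)=f^{\mu}$ rather than by citing the identity directly, but that is the same fact), and conclude that $H_{\la,\mu}(x,y;1,1)=p_1[X+Y]^n$ independently of $(\la,\mu)$. The only divergence is in the final identification. The paper invokes the $B_n$-Frobenius identity $p_{\rho}[X+Y]\,p_{\sigma}[X-Y]=\sum_{\kappa,\gamma}\chi^{\kappa,\gamma}_{\rho,\sigma}\,s_{\kappa,\gamma}(x,y)$ from Macdonald's Appendix B and specializes to the identity class $(\rho,\sigma)=(1^n,\emptyset)$, so the coefficients are read off as character values at the identity, i.e.\ dimensions, without ever computing them. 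You instead extract the coefficient explicitly by a binomial expansion, obtaining $\binom{n}{|\kappa|}f^{\kappa}f^{\gamma}$, and then match this against the standard dimension formula for $B_n$-irreducibles obtained by induction from $B_{|\kappa|}\times B_{|\gamma|}$. Both endings are valid; the paper's is slightly slicker and stays entirely inside the characteristic-map formalism, while yours has the small bonus of producing the explicit closed form $\binom{n}{|\kappa|}f^{\kappa}f^{\gamma}$ for $K_{\kappa,\gamma\;\la,\mu}(1,1)$ along the way (the paper only records the combinatorial interpretation as pairs of standard tableaux after the proof).
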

\begin{proof}  When $q=t=1$, it is known \cite[eq. (9.6)]{Ber} that 
\beq
H_{\lambda}(x;1,1)=p_1^{|\lambda|}.
\eeq
It thus follows from  \eqref{HH} that
\beq
H_{\lambda,\mu}(x,y;1,1) = H_{\lambda}^{(1,1)}[X+Y] H_{\mu}^{(1,1)}[X+Y]
=\left( p_1[X+Y] \right)^{|\lambda|+|\mu|}=  p_{1^n}[X+Y]  \, .
\eeq
But it is known (see {\cite[p. 178]{Mac}}) that
\beq 
p_{\rho}[X+Y] p_{\sigma}[X-Y] = \sum_{\kappa,\gamma} \chi^{\kappa,\gamma}_{\rho,\sigma} \, s_{\kappa,\gamma} (x,y)
\eeq
where $ \chi^{\kappa,\gamma}_{\rho,\sigma}$ stands for the $B_n$-character indexed by the irreducible representation $\kappa,\gamma$ at the class indexed by $\rho,\sigma$.  If we set $\rho=1^n$
and $\sigma= \emptyset$, we obtain that
\beq
H_{\lambda,\mu}(x,y;1,1) =p_{1^n}[X+Y] = \sum_{\kappa,\gamma} \chi^{\kappa,\gamma}_{1^n,\emptyset} \, s_{\kappa,\gamma} (x,y)
\eeq
which proves the proposition since the class indexed by $1^n,\emptyset$ is the
class of the identity, in which case the character yields the dimension of the representation.
\end{proof}
Proposition~\ref{Prop1} implies that
$K_{\kappa, \gamma \; \la,\mu}(1,1)$ is the number of pairs of standard Young tableaux  of respective shapes $\kappa$ and $\gamma$ filled (without repetitions) 
with the numbers $\{1,2,3,\ldots,n\}$, where $n=|\kappa|+|\gamma|$. For instance, $K_{(2,1),(1) \; \la,\mu}(1,1)=8$ is the number of pairs of standard Young tableaux of shape $(2,1),(1)$:
\begin{equation}\begin{split}
&{\tableau[scY]{1&2\\3 \\}}\; ,\;{\tableau[scY]{4\\  \bl{}}} \quad \qquad {\tableau[scY]{1&3\\2 \\}}\; ,\;{\tableau[scY]{4\\  \bl{}}} \quad \qquad  {\tableau[scY]{1&2\\4 \\}}\; ,\;{\tableau[scY]{3\\  \bl{}}} \quad\qquad {\tableau[scY]{1&4\\2 \\}}\; ,\;{\tableau[scY]{3\\  \bl{}}} \\
& \\
 &{\tableau[scY]{1&3\\4 \\}}\; ,\;{\tableau[scY]{2\\  \bl{}}} \quad\qquad  {\tableau[scY]{1&4\\3 \\}}\; ,\;{\tableau[scY]{2\\  \bl{}}} \quad\qquad {\tableau[scY]{2&3\\4 \\}}\; ,\;{\tableau[scY]{1\\  \bl{}}} \quad\qquad {\tableau[scY]{2&4\\3 \\}}\; ,\;{\tableau[scY]{1\\  \bl{}}}
\end{split}
\end{equation}

We now show the positivity of the coefficients
$K_{\kappa, \gamma \; \la, \mu}(q,t)$.
The proposition implies that the double Kosktas (of $B_n$-type)
can be obtained from the usual 
$q,t$-Kostka (of $S_n$-type). 
Therefore, from the combinatorial point of view, this new hyperoctahedral perspective does not shed any light on the usual $q,t$-Kostka. In Section \ref{Con}, we will discuss how an
interesting new combinatorics appears in the non-stable sector that interpolates between
the $S_n$ and $B_n$ cases.

\begin{proposition}\label{Proppos}
We have
\begin{equation}\label{bipartkostka}
K_{\kappa, \gamma \; \la, \mu}(q,t) = \sum_{\nu,\omega,\alpha, \beta, \rho, \sigma}  K_{\nu \la}(q,qt) \,K_{\omega \mu}(qt,t)\,  c_{\alpha \beta}^\nu\, c_{\rho \sigma}^\omega\, c_{\alpha \rho}^\kappa \,c_{\beta \sigma}^\gamma \,q^{|\beta|} t^{|\rho|}
\end{equation}
where  $c_{\alpha \beta}^\la$ is the corresponding
Littlewood-Richardson coefficient. In particular, 
 $K_{\kappa, \gamma \; \la,\mu}(q,t) \in \mathbb N[q,t]$.
\end{proposition}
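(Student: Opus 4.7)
The plan is to start directly from the factorized form of the modified double Macdonald polynomial in \eqref{HH} and carry out a step-by-step Schur expansion on each side, using at each plethystic addition the Littlewood--Richardson rule. There is essentially no difficulty beyond bookkeeping; the identity is an assembly of known expansions.

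First I would write
\begin{equation*}
H_{\lambda}^{(q,qt)}[X+qY] \;=\; \sum_{\nu} K_{\nu\la}(q,qt)\, s_\nu[X+qY],
\qquad
H_{\mu}^{(qt,t)}[tX+Y] \;=\; \sum_{\omega} K_{\omega\mu}(qt,t)\, s_\omega[tX+Y],
\end{equation*}
by the definition \eqref{Kusuel} of the ordinary modified Macdonald polynomials. Next, using that Schur functions satisfy $s_\nu[A+B]=\sum_{\alpha,\beta} c^\nu_{\alpha\beta}\, s_\alpha[A]\,s_\beta[B]$ and the homogeneity $s_\beta[qY]=q^{|\beta|}s_\beta[Y]$, I would expand
\begin{equation*}
s_\nu[X+qY] \;=\; \sum_{\alpha,\beta} q^{|\beta|}\, c^\nu_{\alpha\beta}\, s_\alpha(x)\,s_\beta(y),
\qquad
s_\omega[tX+Y] \;=\; \sum_{\rho,\sigma} t^{|\rho|}\, c^\omega_{\rho\sigma}\, s_\rho(x)\,s_\sigma(y).
\end{equation*}

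Then I multiply the two expansions and collect the products of Schur functions on each alphabet separately via the Littlewood--Richardson rule
\begin{equation*}
s_\alpha(x)\,s_\rho(x)=\sum_\kappa c^\kappa_{\alpha\rho}\, s_\kappa(x),\qquad
s_\beta(y)\,s_\sigma(y)=\sum_\gamma c^\gamma_{\beta\sigma}\, s_\gamma(y).
\end{equation*}
Inserting these in \eqref{HH} and reading off the coefficient of $s_{\kappa,\gamma}(x,y)=s_\kappa(x)s_\gamma(y)$ yields exactly \eqref{bipartkostka}.

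Finally, for the positivity statement $K_{\kappa,\gamma\,;\la,\mu}(q,t)\in\mathbb N[q,t]$, each ingredient on the right-hand side is manifestly non-negative: the Littlewood--Richardson coefficients $c^{\nu}_{\alpha\beta},\,c^\omega_{\rho\sigma},\,c^\kappa_{\alpha\rho},\,c^\gamma_{\beta\sigma}$ are non-negative integers, the monomials $q^{|\beta|}t^{|\rho|}$ lie in $\mathbb N[q,t]$, and by Haiman's theorem $K_{\nu\la}(q,t)\in\mathbb N[q,t]$ so that the substitutions $(q,t)\mapsto(q,qt)$ and $(q,t)\mapsto(qt,t)$ yield $K_{\nu\la}(q,qt),\,K_{\omega\mu}(qt,t)\in\mathbb N[q,t]$. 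The only subtle step is recognizing that the specialization of Haiman's positivity is preserved under these substitutions, but this is immediate since both $(q,qt)$ and $(qt,t)$ are monomial substitutions sending $\mathbb N[q,t]$ into itself; accordingly no genuine obstacle arises and the proof reduces to the plethystic manipulation outlined above.
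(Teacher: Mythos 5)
Your proposal is correct and follows essentially the same route as the paper: expand both factors of \eqref{HH} in Schur functions via the $q,t$-Kostka coefficients, split each $s_\nu[X+qY]$ and $s_\omega[tX+Y]$ with the Littlewood--Richardson rule and homogeneity, then recombine the Schur functions in each alphabet, with positivity following from Haiman's theorem and the nonnegativity of the Littlewood--Richardson coefficients. No gaps.
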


\begin{proof} From \eqref{Kusuel} and \eqref{HH}, we have
\begin{align}
H_{\lambda,\mu}(x,y;q,t)& =\sum_{\nu,\omega}   K_{\nu \lambda}(q,qt)\,
 K_{\omega \mu} (qt,t) \,s_{\nu}[X+qY]\,s_{\omega}[tX+Y] \\
& =\sum_{\nu,\omega, \alpha, \beta,\rho,\sigma}  K_{\nu \lambda}(q,qt)\,
 K_{\omega \mu}(qt,t) \,c_{\alpha \beta}^\nu \,c^\omega_{\rho \sigma}\,
s_{\alpha}[X] \,s_{\beta}[qY] \,s_{\rho}[tX]\, s_{\sigma}[Y] \\
& =\sum_{\nu,\omega,\alpha,\beta,\rho,\sigma}  K_{\nu \lambda}(q,qt)\,
 K_{\omega \mu}(qt,t)\, c_{\alpha \beta}^\nu\, c^\omega_{\rho \sigma}  \,q^{|\beta|}t^{| \rho |}
\,s_{\alpha}[X] \,s_{\rho}[X]\, s_{\beta}[Y]\, s_{\sigma}[Y] \\
& =\sum_{\nu,\omega,\alpha,\beta, \rho, \sigma, \kappa,\gamma}  K_{\nu \lambda}(q,qt)\,
 K_{\omega \mu}(qt,t) \,c_{\alpha \beta}^\nu \,c^\omega_{\rho \sigma} \,c^\kappa_{\alpha \rho} \,c^{\gamma}_{\beta \sigma} q^{|\beta|} t^{|\rho|} \,s_{\kappa}[X]\, s_{\gamma}[Y] \\
& =\sum_{\kappa, \gamma} \left( \sum_{\nu,\omega,\alpha,\beta,\rho, \sigma}  K_{\nu \lambda}(q,qt)\,
 K_{\omega \mu}(qt,t) \,c_{\alpha \beta}^\nu \,c^\omega_{\rho \sigma} \,c^\kappa_{\alpha \rho} \,c^{\gamma}_{\beta \sigma}\,  q^{|\beta|} t^{|\rho |} \right) s_{\kappa,\gamma}(x,y).
\end{align}
This proves the first assertion.   The positivity of 
$K_{\kappa, \gamma \; \la, \mu}(q,t)$ then follows from the positivity of the
$q,t$-Kostkas and the Littlewood-Richardson coefficients. 
\end{proof}
\begin{remark}\label{rem1}
  Propositions~\ref{Prop1}  and \ref{Proppos} suggest that there exists a bigraded module
of the regular representation of the  hyperoctahedral group $B_n$ whose
Frobenius series  corresponds to the Schur expansion of
$H_{\lambda, \mu}(x,y;q,t)$.  This point will be discussed further in
Remark~\ref{remfrob}.
\end{remark}

\begin{corollary} \label{Cor_symK} The double Kostka coefficients have the following symmetries:
\begin{equation}\label{symbipartK}
K_{\kappa, \gamma \; \la, \mu}(q,t) = q^{\bar n(\mu', \la')}\,t^{\bar n(\la, \mu)} \,K_{\gamma', \kappa' \; \la,\mu}(q^{-1}, t^{-1}) \qquad {\rm and} \qquad K_{\kappa, \gamma \; \la, \mu}(q,t) =K_{\gamma', \kappa' \; \mu',\la'}(t, q) ,
\end{equation}
where $\bar n(\la,\mu) = n(\la)+ |\mu| + n(\mu')+n(\mu)$.
\end{corollary}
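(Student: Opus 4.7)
The plan is to derive both identities from the explicit expansion \eqref{bipartkostka} of Proposition~\ref{Proppos} by substituting the corresponding classical symmetries of the ordinary modified $q,t$-Kostka coefficients. The two relevant inputs are the transposition identity $K_{\nu\mu}(q,t) = K_{\nu'\mu'}(t,q)$ (a consequence of the Macdonald--Haiman duality for modified Macdonald polynomials) and the conjugation identity $K_{\nu\mu}(q,t) = q^{n(\mu')}\,t^{n(\mu)}\,K_{\nu'\mu}(q^{-1},t^{-1})$, together with the Littlewood--Richardson identities $c^{\nu}_{\alpha\beta} = c^{\nu'}_{\alpha'\beta'} = c^{\nu}_{\beta\alpha}$.

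For the second (transposition) identity, I would apply the transposition symmetry to each of the two ordinary Kostkas in \eqref{bipartkostka}, sending $K_{\nu\la}(q,qt)\,K_{\omega\mu}(qt,t) \mapsto K_{\nu'\la'}(qt,q)\,K_{\omega'\mu'}(t,qt)$. I then reindex the Littlewood--Richardson summation via $c^{\kappa'}_{\alpha\rho} = c^{\kappa}_{\alpha'\rho'}$ (and similarly for $\gamma$) and relabel the dummy variables via $\nu \leftrightarrow \omega$ together with a conjugation and swap $(\alpha,\beta)\leftrightarrow(\sigma,\rho)$. Under this reorganization the $c^{\kappa}_{\alpha\rho}c^{\gamma}_{\beta\sigma}$ factors become $c^{\gamma'}_{\alpha\rho}c^{\kappa'}_{\beta\sigma}$ and the $q^{|\beta|}t^{|\rho|}$ monomial becomes $t^{|\beta|}q^{|\rho|}$, which is exactly the expansion \eqref{bipartkostka} of $K_{\gamma',\kappa';\mu',\la'}(t,q)$.

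For the first (conjugation) identity, I would apply the ordinary conjugation symmetry to each Kostka factor in \eqref{bipartkostka} evaluated at $(q^{-1},t^{-1})$ with indices $(\gamma',\kappa';\la,\mu)$. This generates the prefactor $q^{-n(\la')}(qt)^{-n(\la)}\cdot(qt)^{-n(\mu')}t^{-n(\mu)}$, which after splitting $(qt)^{n(\la)}=q^{n(\la)}t^{n(\la)}$ and $(qt)^{n(\mu')}=q^{n(\mu')}t^{n(\mu')}$ collects as $q^{-(n(\la)+n(\la')+n(\mu'))}\,t^{-(n(\la)+n(\mu')+n(\mu))}$. The leftover $q^{-|\beta|}t^{-|\rho|}$ monomial, after the LR conjugations are absorbed via relabelling, is rewritten using the homogeneity constraints $|\alpha|+|\beta|=|\nu|=|\la|$, $|\alpha|+|\rho|=|\kappa|$ and $|\kappa|+|\gamma|=|\la|+|\mu|$; this produces $q^{|\beta|}t^{|\rho|}$ together with the extra $|\la|$ (resp. $|\mu|$) entering the $q$ (resp. $t$) exponent, yielding precisely the target prefactor $q^{\bar n(\mu',\la')} t^{\bar n(\la,\mu)}$.

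The main obstacle is the meticulous bookkeeping of $q$- and $t$-exponents in the conjugation case. The quantity $\bar n(\la,\mu) = n(\la) + |\mu| + n(\mu') + n(\mu)$ is mildly asymmetric in $\la,\mu$, and its four summands arise from three distinct sources that have to line up exactly: the two conjugation prefactors $q^{n(\la')}(qt)^{n(\la)}$ and $q^{n(\mu')}t^{n(\mu)}$ (properly split using $(qt)^k=q^kt^k$), together with the homogeneity compensation $|\mu|$ coming from inverting the $t^{|\rho|}$ monomial; analogously for $\bar n(\mu',\la')$. Once the classical Kostka and LR symmetries are inserted, the remainder is routine but tedious algebra.
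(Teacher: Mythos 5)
Your proposal is correct, but it proceeds by a genuinely different route than the paper. The paper never touches the explicit Kostka expansion of Proposition~\ref{Proppos}: it works directly at the level of the factorized polynomials, writing $H_{\lambda,\mu}=H_{\lambda}^{(q,qt)}[X+qY]\,H_{\mu}^{(qt,t)}[tX+Y]$ and applying the classical symmetries $H_{\lambda}(x;q,t)=q^{n(\lambda')}t^{n(\lambda)}\omega H_{\lambda}(x;q^{-1},t^{-1})$ and $H_{\lambda}(x;q,t)=\omega H_{\lambda'}(x;t,q)$ to each factor, then rescaling the plethystic alphabets ($[X+qY]=q[q^{-1}X+Y]$, $[tX+Y]=t[X+t^{-1}Y]$, which is where the extra $q^{|\lambda|}t^{|\mu|}$ enters) and recognizing the result as $q^{\bar n(\mu',\la')}t^{\bar n(\la,\mu)}\,\omega_B H_{\lambda,\mu}(x,y;q^{-1},t^{-1})$, where $\omega_B s_{\lambda,\mu}=s_{\mu',\lambda'}$. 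Your coefficient-level computation is the faithful shadow of this: the LR conjugations you invoke are exactly what the paper hides in the identity $\omega s_{\lambda}[X+qY]=s_{\lambda'}[X+qY]$ (its footnote), and the homogeneity compensation $q^{-|\lambda|}t^{-|\mu|}$ you extract from inverting $q^{|\beta|}t^{|\rho|}$ is the alphabet-rescaling step. I checked that your index gymnastics does close up: for the transposition identity the relabelling $(\alpha,\beta,\rho,\sigma)\mapsto(\sigma',\rho',\beta',\alpha')$ together with $\nu\leftrightarrow\omega$ (conjugated) matches all six LR factors and the monomial; for the conjugation identity the needed constraint is $|\rho|+|\sigma|=|\omega|=|\mu|$ (which follows from the constraints you list, though you do not quote it directly), giving exactly the extra $|\la|$ and $|\mu|$ in the exponents. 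The paper's route buys brevity and conceptual transparency (everything reduces to two one-line polynomial identities plus $\omega_B$); yours buys a self-contained verification requiring only Proposition~\ref{Proppos} and standard LR/Kostka symmetries, at the cost of heavier bookkeeping.
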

\begin{proof}   For the usual Kostka coefficients, the analogous form of the first symmetry is equivalent to (see e.g., \cite[eq. (9.9)]{Ber}) 
\beq
H_{\lambda}(x;q,t)= q^{n(\lambda')} \,t^{n(\lambda)}\, \omega H_{\lambda}(x;q^{-1},t^{-1})
\eeq
where $\omega$ is such that $\omega s_{\mu}=s_{\mu'}$.  If we extend $\omega$
to the space of bisymmetric functions as
$\omega s_{\mu}[X]= s_{\mu'}[X]$ and $\omega s_{\mu}[Y]= s_{\mu'}[Y]$
(which implies for instance that $\omega s_{\lambda}[X+qY]=s_{\lambda'}[X+qY]$\footnote{This is seen from
$$\omega s_{\lambda}[X+qY]=\sum_{\mu,\nu} c^{\lambda}_{\mu \nu} s_{\mu}[X] q^{|\nu|} s_{\nu}[Y] =\sum_{\mu,\nu} c^{\lambda}_{\mu \nu} s_{\mu'}[X] q^{|\nu|} s_{\nu'}[Y]=\sum_{\mu,\nu} c^{\lambda'}_{\mu' \nu'} s_{\mu'}[X]  s_{\nu'}[qY] 
=s_{\lambda'}[X+qY]$$.}), 
we obtain from
 \eqref{HH} 
\begin{align}
H_{\lambda,\mu}(x,y;q,t)& =  q^{n(\lambda')} (qt)^{n(\lambda)} \omega \left( H_{\lambda}^{(q^{-1},(qt)^{-1})}[X+qY] \right)  (qt)^{n(\mu')} t^{n(\mu)} \omega \left( H_{\mu}^{((qt)^{-1},t^{-1})}[tX+Y] \right) \nonumber \\
 & =  q^{n(\lambda')} (qt)^{n(\lambda)} q^{|\lambda|} \omega \left( H_{\lambda}^{(q^{-1},(qt)^{-1})}[q^{-1}X+Y] \right)  (qt)^{n(\mu')} t^{n(\mu)} t^{|\mu|} \omega \left( H_{\mu}^{((qt)^{-1},t^{-1})}[X+t^{-1}Y] \right) \nonumber \\
 & =   q^{\bar n(\mu', \la')}t^{\bar n(\la, \mu)} 
 \omega \left( H_{\lambda}^{(q^{-1},(qt)^{-1})}[q^{-1}X+Y] \right)  \omega \left( H_{\mu}^{((qt)^{-1},t^{-1})}[X+t^{-1}Y] \right) \\
 & =   q^{\bar n(\mu', \la')}t^{\bar n(\la, \mu)} 
 \omega  H_{\lambda,\mu}(y,x;q^{-1},t^{-1})
\end{align}
But this amounts to 
\beq
H_{\lambda,\mu}(x,y;q,t) =  q^{\bar n(\mu', \la')}t^{\bar n(\la, \mu)}  \omega_B H_{\lambda,\mu}(x,y;q^{-1},t^{-1}) 
\eeq
where $\omega_B s_{\lambda,\mu}(x,y)=s_{\mu',\lambda'}(x,y)$, which implies the 
first relation.

For the second relation, we use the other Macdonald symmetry \cite[eq. (9.8)]{Ber}
\beq
H_{\lambda}(x;q,t) = \omega H_{\lambda'}(x;t,q)
\eeq
to obtain
\beq
H_{\lambda,\mu}(x,y;q,t) =   \omega \left( H_{\lambda'}^{(qt,q)}[X+qY] \right)  
\omega \left( H_{\mu'}^{(t,qt)}[tX+Y] \right) = \omega H_{\mu',\lambda'}(y,x;t,q) 
\, .
\eeq
But this is equivalent to 
\beq \label{omegaBrel}
H_{\lambda,\mu}(x,y;q,t) 
=\omega_B H_{\mu',\lambda'}(x,y;t,q)
\eeq
and the result follows.
\end{proof}

It is known that 
for $|\lambda|=n$  \cite[Propo. 3.5.20]{Haiman} or \cite[ex. 2 p. 362]{Mac}
\beq \label{Kostka1}
K_{(n) \la}(q,t) = t^{n(\la)} \qquad {\rm and} \qquad K_{(1^n) \la}(q,t) = q^{n(\la')} 
\eeq 
These correspond to the two inequivalent representations of 
dimension 1 of $S_n$.
In the case of $B_n$, there are 4 inequivalent representations of dimension 1. 
The corresponding double  Kostkas, which will prove important in the next section,
are given in the next Corollary.
\begin{corollary}\label{speckostka}
Let $|\la|+|\mu|=n$.  We have
\begin{equation}
K_{(n),\emptyset \; \la, \mu}(q,t) = q^{n(\la)}t^{|\mu|+n(\la) + n(\mu)} \qquad {\rm and}
\qquad K_{\emptyset, (n) \; \la, \mu}(q,t)=q^{|\la|+n(\la)}t^{n(\la)+n(\mu)} \, .
\end{equation}
By symmetry, i.e. using the first relation of \eqref{symbipartK}, we 
also have
\begin{equation} \label{eqspec}
K_{\emptyset, (1^n) \; \la, \mu}(q,t)=q^{|\la|+n(\la')+n(\mu')}t^{n(\mu')} \qquad 
{\rm and}
\qquad K_{(1^n), \emptyset \; \la, \mu}(q,t)= q^{n(\la') + n(\mu')}t^{|\mu|+n(\mu')}.
\end{equation}
\end{corollary}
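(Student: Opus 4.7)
The plan is to compute the first two specializations directly by singling out the coefficient of $s_{(n)}[X]$ (resp.\ $s_{(n)}[Y]$) in the factorized expression \eqref{HH}, and then to obtain the remaining two specializations as a formal consequence of the symmetry \eqref{symbipartK}.

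For $K_{(n),\emptyset \, ; \, \la,\mu}(q,t)$, observe that the coefficient of $s_{(n)}(x) s_{\emptyset}(y)=s_{(n)}[X]$ in $H_{\la,\mu}(x,y;q,t)$ is unchanged if we set $Y=0$. Under this specialization the factorization \eqref{HH} becomes
\begin{equation*}
H_{\la,\mu}(x,0;q,t) = H_{\la}^{(q,qt)}[X]\, H_{\mu}^{(qt,t)}[tX] = \sum_{\nu,\omega} K_{\nu\la}(q,qt)\,K_{\omega\mu}(qt,t)\, t^{|\omega|}\, s_{\nu}[X]\, s_{\omega}[X],
\end{equation*}
where we have used $s_{\omega}[tX]=t^{|\omega|}s_{\omega}[X]$. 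The coefficient of $s_{(n)}[X]$ in $s_{\nu}[X]s_{\omega}[X]$ equals the Littlewood--Richardson coefficient $c^{(n)}_{\nu\omega}$, which by the Pieri rule is $1$ if $\nu=(a)$, $\omega=(n-a)$ for some $0\le a\le n$, and $0$ otherwise. Combined with the degree constraints $|\nu|=|\la|$, $|\omega|=|\mu|$, this forces $\nu=(|\la|)$, $\omega=(|\mu|)$. Plugging in the known evaluations $K_{(r),\la}(q,t)=t^{n(\la)}$ from \eqref{Kostka1} with the appropriate parameters yields
\begin{equation*}
K_{(n),\emptyset \, ; \, \la,\mu}(q,t) = K_{(|\la|),\la}(q,qt)\, K_{(|\mu|),\mu}(qt,t)\, t^{|\mu|} = (qt)^{n(\la)} \, t^{n(\mu)} \, t^{|\mu|},
\end{equation*}
which is the desired expression.

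For $K_{\emptyset,(n)\,;\,\la,\mu}(q,t)$ I would run the parallel argument setting $X=0$: from \eqref{HH} one obtains
\begin{equation*}
H_{\la,\mu}(0,y;q,t) = H_{\la}^{(q,qt)}[qY]\, H_{\mu}^{(qt,t)}[Y] = \sum_{\nu,\omega} K_{\nu\la}(q,qt)\,K_{\omega\mu}(qt,t)\, q^{|\nu|}\, s_{\nu}[Y]\, s_{\omega}[Y],
\end{equation*}
and the same Pieri argument and the same specializations of \eqref{Kostka1} give the stated formula. Finally, the two identities \eqref{eqspec} follow by applying the first symmetry in \eqref{symbipartK} to the two formulas just proved and simplifying $\bar n(\mu',\la')-n(\la)$ and $\bar n(\la,\mu)-n(\la)-n(\mu)-|\mu|$ using the definition $\bar n(\la,\mu)=n(\la)+|\mu|+n(\mu')+n(\mu)$. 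The only potential obstacle is the bookkeeping on the exponents, which is a short direct check; there is no conceptual difficulty since the vanishing of $c^{(n)}_{\nu\omega}$ outside the single-row case reduces the double sum to a single term.
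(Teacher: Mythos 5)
Your proposal is correct and follows essentially the same route as the paper: set one alphabet to zero in the factorization \eqref{HH}, note that $c^{(n)}_{\nu\omega}$ vanishes unless $\nu$ and $\omega$ are single rows, use homogeneity of the Kostka coefficients to pin down $\nu=(|\la|)$, $\omega=(|\mu|)$, and invoke \eqref{Kostka1}; the remaining two cases then follow from the symmetry \eqref{symbipartK} exactly as in the paper. The only difference is cosmetic — the paper writes out the $X=0$ case and declares the $Y=0$ case analogous, whereas you do both explicitly.
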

\begin{proof}
We will only prove the expression for $K_{\emptyset, (n) \; \la, \mu}(q,t)$ since that 
of $K_{(n), \emptyset \; \la, \mu}(q,t)$ can be obtained in a similar way.  From
\eqref{HH}, $K_{\emptyset, (n) \; \la, \mu}(q,t)$ is equal to the coefficient of
$s_{\emptyset,(n)}(x,y)=s_{(n)}[Y]$ in $H_{\la,\mu}(x,y;q,t)$.  In the product form \eqref{HH} we can thus set $X=0$ and search for the coefficient of $s_{(n)}[Y]$ in
$H_{\lambda}^{(q,qt)}[qY] H_{\mu}^{(qt,t)}[Y]=q^{|\lambda|}
H_{\lambda}^{(q,qt)}[Y] H_{\mu}^{(qt,t)}[Y]$.
Let us first expand the latter product:
\beq H_{\lambda}^{(q,qt)}[Y] H_{\mu}^{(qt,t)}[Y]= \sum_{\eta,\rho} 
K_{\eta\la}(q,qt) \, K_{\rho\mu}(qt,t) \, s_\eta[Y]\, s_\rho[Y]=
\sum_{\eta,\rho,\om} 
K_{\eta\la}(q,qt) \, K_{\rho\mu}(qt,t)\,c_{\eta \rho}^\om \, s_\om[Y].\eeq
For $\om=(n)$, it follows that the non-vanishing Littlewood-Richardson coefficients $c_{\eta \rho}^{(n)}$ are equal to $\delta_{\eta, (n_1)}\delta_{\rho ,(n_2)}$ for non-negative integers $n_1$ and $n_2$ satisfying $n_1+n_2=n$. The 
homogeneity of the Kostka coefficients then imposes
 $n_1=|\lambda|$ and
$n_2=|\mu|$.
 The desired  coefficient is thus equal to
$q^{|\lambda|} K_{(n_1) \,\lambda}(q,qt)  K_{(n_2)\, \mu}(qt,t)$.  Using \eqref{Kostka1}, the result follows. 
\end{proof}

\section{The Nabla Operator}\label{nablaS}

\subsection{Review of the usual case: Nabla operator and Frobenius series}
A useful technique to demonstrate that a given symmetric polynomial is Schur positive is to link this expansion to the decomposition of a representation into irreducible ones.
By means of the characteristic map,
 a symmetric polynomial is transformed into a {class function} 
 of $S_n$. 
Under this map, a Schur function is associated to an  irreducible character 
of $S_n$. It follows that a symmetric function $F$ associated to a representation is Schur positive: this symmetric function can be expanded as $F=\sum c_\la s_\la$, where by construction $c_\la$ is a nonnegative integer since it corresponds
to the multiplicity of the irreducible representation $\la$ in the larger representation under consideration.

 Let 
\beq
\tilde H_{\lambda}(x;q,t) = t^{n(\lambda)} H_{\lambda}(x;q,t^{-1})
= \sum_{\mu} \tilde K_{\mu \lambda}(q,t) s_\mu(x)\,,
\eeq
(hence $\tilde K_{\mu \lambda}(q,t)=  t^{n(\lambda)}  K_{\mu \lambda}(q,t^{-1}) $).
Garsia and Haiman \cite{GH} were able to construct bigraded $S_n$-modules 
$\mathcal H_\mu$ (the so-called Garsia-Haiman modules)
such that
\beq
 {\rm Frob}_{q,t} (\mathcal H_\mu) = \tilde H_{\lambda}(x;q,t) \, .
\eeq
(More precisely, this result is conjectured in \cite{GH} and proved in \cite{Hai}.)
The Garsia-Haiman modules belong to the larger module $\mathcal D_n$
of diagonal harmonics whose
dimension is ${(n+1)^{n-1}}$. 
Quite remarkably, the bigraded Hilbert series of $\mathcal D_n$ can be calculated from the
the operator $\nabla$
introduced by Bergeron and Garsia \cite{BG} (see also \cite{Betal}
and \cite{Ber}, beginning of Section 9.6) and defined as follows: 
\begin{equation}\nabla \tilde H_\la=q^{n(\la')}\, t^{n(\la)} \tilde H_\la \, .
\end{equation}
In other words, $\nabla$ is defined from its action on $\tilde H_{\la}$ and because these polynomials form a basis, this provides an action on any symmetric function. It has been proven (and this is a highly non-trivial result)
that the action of $\nabla$  on $e_n$, when expanded in the Schur basis,
\begin{equation}\nabla e_n=\sum_\la c_\la(q,t) s_\la\,,\end{equation}
can be interpreted as the bigraded Frobenius series of $\mathcal D_n$ 
(see e.g., \cite[Theorem 4.2.5]{Haiman} and Sections 3.5, 4.1 and 4.2 therein for the subsequent results of this section).
Moreover, in the case $q=1/t$, it has been shown that
\beq \label{tfactor} 
\L\nabla_{q=1/t} e_n,p_1^n\R=\left(\frac{[n+1]_t}{t^{{n}/{2}}}\right)^{n-1}\,,
\eeq
where
\beq [k]_t=\frac{(1-t^k)}{(1-t)} = (1+ t + \ldots + t^{k-1}).
\eeq
Specializing this result to $t=1$ one recovers   $(n+1)^{n-1}$ as the dimension of $\mathcal D_n$.  Note that for $q$ and $t$ generic, there is no factorization of the type
\beq \label{nopower}
\L\nabla e_n,p_1^n\R= \bigl( f(q,t) \bigr)^{n-1}\,,
\eeq
for some polynomial $f(q,t) \in \mathbb N[q,t]$.  
The coefficient of  $e_n$ in $\nabla e_n$ 
\beq
C_n(q,t) = \L \nabla e_n, e_n\R
\,, \eeq
is also of particular interest.
It corresponds to the bigraded Hilbert series of the subspace
$\mathcal A_n$ of alternants in $\mathcal D_n$.  There is a combinatorial
interpretation for $C_n(q,t)$ (reviewed in \cite[Chapter 3]{Hag}), but there is no known closed-form expression.  However, when $q=1/t$, $C_n(q,t)$ reduces to 
\beq \label{catalant}
C_{n}(t^{-1},t) = t^{-\binom{n}{2}}\frac{1}{[n+1]_t}
\left[ 
\begin{array}{c}
2n\\
n
\end{array}
\right]_{t} \,,
\eeq
which is a $t$-analogue of the Catalan number
\beq
C_n = \frac{1}{n+1}\binom{2n}{n} \, .
\eeq
The dimension of the subspace of alternants $\mathcal A_n$ is thus given
by the $n$-th Catalan number.

\subsection{Heuristic strategy for the Nabla operator in the  hyperoctahedral case}

Proposition~\ref{Prop1} provides a connection between the hyperoctahedral group  and the double Macdonald polynomials.  In this subsection, we 
will deepen this connection by the introduction of Nabla operators that
will produce another $B_n$ datum, namely the $B_n$ version of the dimension  formula $(n+1)^{n-1}$.

Recall that  the group $S_n$ is a Coxeter group of type $A$, namely $A_{n-1}$. 
The formula $(n+1)^{n-1}$ turns out to be the $A_{n-1}$ version
of the dimension formula  $(h+1)^r$ \cite{Gor}, where $h$ and $r$ are respectively the Coxeter number and the rank of the corresponding Coxeter group ($h=n$ and $r=n-1$ for $A_{n-1}$).
Now the hyperoctahedral group of rank $n$ is equivalent to the group of signed permutations of $n$ objects. Its generators are $\{s_0, s_1, . . . , s_{n-1}\}$, where $s_0$ is the sign change,  and where the remaining generators are the 
elementary transpositions that generate $S_n$. This description  
makes clear that the 
hyperoctahedral group is the Coxeter group $B_n$. Given that
 for $B_n$ the Coxeter number and the rank are respectively $h=2n$ and $r=n$,  the expected $B_n$-form of the dimension is $(2n+1)^n$. This is precisely the value we will obtain.

We first need to formulate the correct definition of the Nabla operators in the
$B_n$-case.  In order to do so, let us briefly reexamine the usual case from an heuristic-constructive point of view.  From \eqref{Kostka1}, the eigenvalue of $\nabla$ on $\tilde H_\la$ corresponds to 
$K_{(1^n)\la}(q,t^{-1})/ K_{(n)\la}(q,t^{-1})$, that is, to the ratio of the Kostka coefficients associated to representations of $S_n$ of dimension 1.
This is the guiding observation that we will use to define 
the $B$ version of the Nabla operator.

As noted before Corollary~\ref{speckostka}, 
in our case there are  four pairs of partitions whose Kostka coefficient is equal to 1 when $q=t=1$:
\begin{equation} \label{dimension1}
K_{(n),\em\; \la,\mu}(1,1)=K_{(1^n),\em \; \la,\mu}(1,1)=K_{\em,(n) \; \la,\mu}(1,1)=K_{\em,(1^n) \; \la,\mu}(1,1)=1.
\end{equation}
There are thus four choices for the rescaling factor that sets equal to 1 a given double Kostka.  We will choose to rescale
$ K_{(1^n),\em \; \la,\mu}(q,t)$ to 1 by redefining
$ H_{\lambda,\mu}(x,y;q,t)$ as (this choice has no impact on the definition of a Nabla operator)
\beq \label{defHwigle}
\tilde H_{\lambda,\mu}(x,y;q,t)= \frac{1}{ K_{(1^n),\em \; \la,\mu}(q,t^{-1})}
H_{\lambda,\mu}(x,y;q,t^{-1}) =  \sum_{\kappa,\gamma} \tilde K_{\kappa, \gamma \; \la,\mu}(q,t) s_{\kappa, \gamma}(x,y).
\eeq
By \eqref{eqspec} we have the more explicit relationship between $\tilde H_{\lambda,\mu}$ and $ H_{\lambda,\mu}$:
\beq
\tilde H_{\lambda,\mu}(x,y;q,t) 
= q^{-n(\la') - n(\mu')}t^{|\mu|+n(\mu')} H_{\lambda,\mu}(x,y;q,t^{-1}) \, .
\eeq
Observe also that $ \tilde K_{\kappa, \gamma \; \la,\mu}(q,t)$ now belongs to
$\mathbb N[q^{\pm 1},t^{\pm 1}]$.

Following our guiding observation and \eqref{dimension1}, 
there are two natural and independent ways to 
define a Nabla operator in the $B$ case, that is, two natural choices of the eigenvalue assigned to $\tilde H_{\lambda,\mu}(x,y;q,t)$.  First, we can let the 
eigenvalue be
\begin{equation}
\frac{K_{\emptyset,(n)\; \la,\mu}(q,t^{-1})}{K_{(1^n),\emptyset \; \la,\mu}(q,t^{-1})}  = \frac{  q^{|\la|+n(\la)}  t^{-n(\la)-n(\mu)}}{ q^{n(\la')+n(\mu')} t^{-|\mu|-n(\mu')} } 
=q^{|\la|+n(\la)-n(\la')-n(\mu')} t^{|\mu|+n(\mu')-n(\la)-n(\mu)} \, .
\end{equation}
Note that the two pairs of partitions,  $(\em,(n))$ and $((1^n),\em)$, are conjugate of each other.
Hence the first Nabla operator $\nabla^B$ is defined as 
\beq \label{actnabla}
\nabla^B \tilde H_{\la,\mu}= q^{|\lambda|+\hat{n}(\mu',\la')} t^{|\mu|+\hat{n}(\la,\mu)} \tilde H_{\la,\mu} \, ,
\end{equation}
where 
\beq
\hat{n}(\la,\mu) =  n(\mu')- n(\la) - n(\mu).
\eeq
Alternatively, we can also define the eigenvalue of the Nabla operator to be the ratio:
\begin{equation}
\frac{K_{\emptyset,(1^n)\; \la,\mu}(q,t^{-1})}{K_{(n),\emptyset \; \la,\mu}(q,t^{-1})}  = \frac{  q^{|\la|+n(\la')+n(\mu')}  t^{-n(\mu')}}{ q^{n(\la)} t^{-|\mu|-n(\lambda)-n(\mu)} } 
=q^{|\lambda|-n(\lambda)+n(\la')+n(\mu')  } t^{|\mu|-n(\mu')+n(\la)+n(\mu)} \, ,
\end{equation}
which leads to 
\beq \label{actnabla2}
\bar \nabla^B \tilde H_{\la,\mu}= q^{|\lambda|-\hat{n}(\mu',\la')} t^{|\mu|-\hat{n}(\la,\mu)} \tilde H_{\la,\mu} \, .
\end{equation}
The operator $\bar \nabla^B$ does not appear to have generic noteworthy properties. 
In particular, it is not always Schur positive (up to an overall sign) when acting on the
Schur functions $s_{\lambda,\mu}$ (not even when acting on $s_{\emptyset;1^n}$).

 From these two commuting operators, one can consider their product $\nabla^B \bar \nabla^B$, whose eigenvalue is readily evaluated from \eqref{actnabla} and \eqref{actnabla2}  to be $q^{2|\lambda|} t^{2|\mu|} $.  The factor 2 in both exponents further indicates that the square root of $\nabla^B \bar \nabla^B$ is well-defined and thus more fundamental.
This operator, 
defined as
\beq \label{actnablaprod}
\sqrt{\nabla^B \bar \nabla^B} \,
\tilde H_{\la,\mu}= q^{|\lambda|} t^{|\mu|} \tilde H_{\la,\mu} \, ,
\eeq
actually appears  to be more interesting.  Indeed, its action on a $B$-type Schur  displays a Schur-positive expansion (up to an overall sign).
\begin{conjecture}  The Schur-expansion coefficients of 
$\left(\sqrt{\nabla^B \bar \nabla^B} \right)^i s_{\lambda,\mu}$ belong
(up to an overall sign)
to $\mathbb N[q^{\pm 1}, t^{\pm 1}]$ for every integer $i$ and every pair of partitions $\lambda,\mu$.
\end{conjecture}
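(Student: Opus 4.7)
The plan is to exploit the fact that the eigenvalue $q^{|\la|} t^{|\mu|}$ in \eqref{actnablaprod} depends only on the pair of sizes $(|\la|,|\mu|)$ and not on the shapes themselves. Decomposing any Schur basis element as
\beq
s_{\kappa,\gamma} \;=\; \sum_{a+b=n} F_{a,b}^{\kappa,\gamma}, \qquad F_{a,b}^{\kappa,\gamma} \in \mathrm{span}\bigl\{\tilde H_{\la,\mu} : |\la|=a,\ |\mu|=b\bigr\},
\eeq
one obtains, for every integer $i$,
\beq
(\sqrt{\nabla^B \bar\nabla^B})^{i}\, s_{\kappa,\gamma} \;=\; \sum_{a+b=n} q^{ia}\, t^{ib}\, F_{a,b}^{\kappa,\gamma}.
\eeq
The Schur coefficient of $s_{\kappa',\gamma'}$ on the left-hand side is therefore a Laurent polynomial of the shape $\sum_{a+b=n} q^{ia}t^{ib}\, c^{\kappa,\gamma;\kappa',\gamma'}_{a,b}(q,t)$, whose rational ingredients $c^{\kappa,\gamma;\kappa',\gamma'}_{a,b}$ are independent of $i$. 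The conjecture thus reduces to the claim that these Laurent polynomials share a single overall sign as $(\kappa',\gamma')$ varies.

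First I would make the pieces $F_{a,b}^{\kappa,\gamma}$ explicit by combining the factorized form of Theorem~\ref{teofac} with the Schur expansion \eqref{bipartkostka}: the transition matrix from $\{s_{\kappa,\gamma}\}$ to $\{\tilde H_{\la,\mu}\}$ is the rational matrix inverse of the double Kostkas, so each $F_{a,b}^{\kappa,\gamma}$ is an explicit combination of products of ordinary $q,t$-Kostkas and Littlewood--Richardson coefficients. Next I would look for a uniform sign law: a function $\epsilon_i(\kappa,\gamma) \in \{\pm 1\}$, depending only on $i$ and on a coarse invariant of $(\kappa,\gamma)$, such that
\beq
\epsilon_i(\kappa,\gamma)\,[s_{\kappa',\gamma'}]\,(\sqrt{\nabla^B\bar\nabla^B})^{i}\, s_{\kappa,\gamma} \;\in\; \mathbb N[q^{\pm 1},t^{\pm 1}]
\eeq
for all $(\kappa',\gamma')$. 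The $n=1$ case is a useful guide: the coefficients of $s_{(1),\em}$ and $s_{\em,(1)}$ in $(\sqrt{\nabla^B\bar\nabla^B})^{i} s_{(1),\em}$ come out to $-qt\,[i-1]_{q,t}$ and $-qt\,[i]_{q,t}$ respectively, whose signs as elements of $\mathbb Z[q^{\pm 1},t^{\pm 1}]$ are controlled solely by whether $i\ge 1$ or $i\le 0$. One would hope an analogous dichotomy persists for general $n$, with $\epsilon_i(\kappa,\gamma)$ piecewise constant in $i$.

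The hard part will be establishing this sign consistency for all $i$ simultaneously. The analogous statement in the type-$A$ setting---uniform Schur positivity up to sign of $\nabla^{i} s_{\la}$ for every $\la$ and every $i\in\mathbb Z$---is itself a stubborn open problem, so a naive induction from $i=\pm1$ is not available: a composition of two \emph{Schur-positive-up-to-sign} operators need not preserve the sign structure, and cancellations among the $F_{a,b}^{\kappa,\gamma}$ must conspire globally. A more promising strategy, in the spirit of Remark~\ref{rem1}, is to construct a bigraded $B_n$-module $\mathcal D_n^B$---a wreath-product analogue of the diagonal harmonics $\mathcal D_n$, informed by Haiman's wreath Macdonald polynomials and by the Cherednik-algebra framework of \cite{Gor,BEG}---together with a family of natural bigraded submodules whose Frobenius series realizes $\epsilon_i(\kappa,\gamma)\,(\sqrt{\nabla^B\bar\nabla^B})^{i}\, s_{\kappa,\gamma}$. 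Schur positivity would then follow from the construction, and the sign $\epsilon_i(\kappa,\gamma)$ would acquire a representation-theoretic meaning, paralleling the role of the alternants in the type-$A$ story.
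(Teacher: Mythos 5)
The statement you are addressing is stated in the paper as a conjecture and is not proved there, so there is no argument of the authors' to compare yours against; the only question is whether your proposal closes the gap, and it does not. What you do establish is correct and is a sensible first reduction: since the eigenvalue $q^{|\la|}t^{|\mu|}$ of $\sqrt{\nabla^B\bar\nabla^B}$ on $\tilde H_{\la,\mu}$ in \eqref{actnablaprod} depends only on the bidegree $(|\la|,|\mu|)$, writing $s_{\kappa,\gamma}=\sum_{a+b=n}F^{\kappa,\gamma}_{a,b}$ with $F^{\kappa,\gamma}_{a,b}$ in the span of the $\tilde H_{\la,\mu}$ with $(|\la|,|\mu|)=(a,b)$ does give $(\sqrt{\nabla^B\bar\nabla^B})^{\,i}\, s_{\kappa,\gamma}=\sum_{a+b=n}q^{ia}t^{ib}F^{\kappa,\gamma}_{a,b}$ with the $F^{\kappa,\gamma}_{a,b}$ independent of $i$, and your $n=1$ computation (coefficients $-qt\,[i-1]_{q,t}$ and $-qt\,[i]_{q,t}$) checks out.

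Two essential steps remain unaddressed, however. First, the components $F^{\kappa,\gamma}_{a,b}$ are obtained by inverting the double-Kostka matrix, so their Schur coefficients $c^{\kappa,\gamma;\kappa',\gamma'}_{a,b}$ are a priori rational functions of $q,t$; your assertion that $\sum_{a+b=n}q^{ia}t^{ib}\,c^{\kappa,\gamma;\kappa',\gamma'}_{a,b}$ ``is therefore a Laurent polynomial'' does not follow from the decomposition. Laurent-polynomiality for every $i$ is itself part of what the conjecture asserts and must be proved; for $n=1$ the denominators $q-t$ happen to cancel, but nothing in your argument shows this persists for general $n$. Second, and more seriously, the sign-coherence and positivity step is exactly where you stop: you yourself note that the analogous type-$A$ statement is open, that composing sign-Schur-positive operators does not preserve the property, and that the cancellations among the $F^{\kappa,\gamma}_{a,b}$ ``must conspire globally.'' The proposed remedy --- constructing a bigraded $B_n$-module whose Frobenius series realizes $\epsilon_i(\kappa,\gamma)\,(\sqrt{\nabla^B\bar\nabla^B})^{\,i}\, s_{\kappa,\gamma}$ --- is a research program, not an argument: no such module is constructed, no candidate is specified, and no mechanism is offered to control the signs. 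As it stands, the proposal is a correct reformulation of the conjecture together with a verification at $n=1$, not a proof.
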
 
In the remainder of the section, we will  be solely concerned with 
the operator $\nabla^B$ which connects with $(2n+1)^n$.

\subsection{The explicit action of $\nabla^B$ on $s_{\emptyset,(n)}$.}
The analog of $\nabla e_n\equiv \nabla s_{(1^n)}$ in the $B_n$ case appears 
to be $\nabla^B s_{\emptyset,(n)}$, with $\nabla^B$ defined by \eqref{actnabla}.  The next proposition gives a simple 
expression for $\nabla^B s_{\emptyset,(n)}$.  
\begin{proposition} \label{propaction} We
have
\beq
\nabla^B s_{\emptyset,(n)} = \frac{1}{(qt)^{\binom{n}{2}}}
h_{n}\Bigl[ [n]_{q,t}X+ [n+1]_{q,t}Y \Bigr] \, ,
\eeq
where 
\beq [n]_{q,t}=\frac{q^n-t^n}{q-t}.
\eeq
\end{proposition}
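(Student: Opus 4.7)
The plan is to exploit the factorization of $\tilde H_{\lambda,\mu}$ that follows from Theorem~\ref{teofac}.  Rewriting \eqref{HH} in terms of $\tilde H$ via the single-alphabet relation $\tilde H_\lambda(x;q,t) = t^{n(\lambda)} H_\lambda(x;q,1/t)$, one obtains
\begin{equation*}
\tilde H_{\lambda,\mu}(x,y;q,t) \;=\; C_{\lambda,\mu}(q,t)\, \tilde H_\lambda^{(q,t/q)}[X+qY]\, \tilde H_\mu^{(q/t,t)}[X/t+Y] \, ,
\end{equation*}
where $C_{\lambda,\mu}(q,t) = q^{-n(\lambda')-n(\mu')-n(\lambda)}\, t^{|\mu|+n(\mu')+n(\lambda)+n(\mu)}$.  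A direct comparison of exponents then shows that the eigenvalue \eqref{actnabla} of $\nabla^B$ coincides with the composite eigenvalue of the operator $D_X\,D_Y\,(\nabla^{(q,t/q)}_1)^{-1}(\nabla^{(q/t,t)}_2)^{-1}$, where $D_X$ and $D_Y$ scale the alphabets $\Theta_1 = X+qY$ and $\Theta_2 = X/t+Y$ by $q$ and $t$ respectively, and the two parameterized usual $\nabla$ operators act on the corresponding single-alphabet factor.

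Next I would invert the linear change of alphabets $\Theta_1,\Theta_2$ to write $Y = (t\Theta_2-\Theta_1)/(t-q)$, so that
\begin{equation*}
s_{\emptyset,(n)} = h_n[Y] = h_n\!\left[\frac{t\Theta_2-\Theta_1}{t-q}\right].
\end{equation*}
The plethystic addition formula then expands this as a sum over $k$ of products $h_k[-\Theta_1/(t-q)]\, h_{n-k}[t\Theta_2/(t-q)]$, each factor being a single-alphabet symmetric function.  The parameterized $\nabla^{-1}$ operators act on these factors by the standard plethystic identities for the inverse Nabla on the $\tilde H_\nu$ basis, and the dilations $D_X,D_Y$ commute with everything and merely rescale the alphabets.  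Reassembling the pieces and converting back to the $(X,Y)$ alphabets produces an explicit bisymmetric expression for $\nabla^B s_{\emptyset,(n)}$.

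The hard part will be to simplify this expression to the closed form $(qt)^{-\binom{n}{2}} h_n[[n]_{q,t}X+[n+1]_{q,t}Y]$; the integers $[n]_{q,t}$ and $[n+1]_{q,t}$ should emerge as telescoping geometric sums in $q/t$ once the two single-alphabet Nabla actions are combined with the factor coming from $C_{\lambda,\mu}(q,t)$ and the $D_X D_Y$ rescaling.  A possibly cleaner alternative is to verify the identity coefficient-by-coefficient in the Schur basis $s_{\kappa,\gamma}(x,y)$: expand the left-hand side using the double Kostka formula of Proposition~\ref{Proppos} together with the eigenvalue \eqref{actnabla}, and expand the right-hand side directly via the principal specialization of Schur polynomials at the geometric progressions underlying $[n]_{q,t}$ and $[n+1]_{q,t}$, checking equality termwise.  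Either route ultimately relies on the factorization \eqref{HH} together with the explicit eigenvalue formula \eqref{actnabla}.
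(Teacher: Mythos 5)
Your plan is essentially the paper's own proof: the paper likewise expands $s_{\emptyset,(n)}=h_n[Y]$ via the plethystic addition formula in the two alphabets $X+qY$ and $X/t+Y$ coming from the factorization \eqref{HH}, recognizes each term $e_{n-\ell}\bigl[\tfrac{X+qY}{1-q/t}\bigr]\,h_{\ell}\bigl[\tfrac{X/t+Y}{1-q/t}\bigr]$ as proportional to the hook eigenfunction $\tilde H_{(1^{n-\ell}),(\ell)}$, multiplies by the eigenvalue from \eqref{actnabla}, and re-sums. The ``hard part'' you defer is in fact just a second application of the same addition formula (the eigenvalues $(q/t)^{n(n+1)/2-n\ell}t^{n}$ act as monomial rescalings of the two alphabets, from which $[n]_{q,t}$ and $[n+1]_{q,t}$ emerge directly); note also that the signs of the $n(\lambda)$ and $n(\mu)$ exponents in your constant $C_{\lambda,\mu}$ look reversed, though this has no bearing on the eigenvalue identification.
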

\begin{proof}
We recall the simple identity (see \cite{Mac} and
e.g. \cite[eq. (1.61)]{Hag})
\beq \label{simpleiden}
h_n[Z+W] = \sum_{\ell=0}^n h_{n-\ell}[Z] \, h_\ell[W] =  \sum_{\ell=0}^n (-1)^{n-\ell}  e_{n-\ell}[-Z] \, h_\ell[W]\,, 
\eeq
where the second equality follows by the duality relation:
with $\omega$ such that
\beq
\omega p_r[X]= (-1)^{r-1} p_r[X]=(-1)^r p_r[-X]\,,
\eeq
it follows that\beq
 p_\lambda[-X] = (-1)^{|\lambda|} \omega p_\lambda[X] \implies 
h_{\lambda}[-X] = (-1)^{|\lambda|} \omega h_{\lambda}[X] =  
(-1)^{|\lambda|} e_{\lambda}[X]\,.
\eeq
Using
 $h_\ell[tZ]=t^{\ell} h_{\ell}[Z]$ (and similar results when $q$ is replaced by $t$ or $h_{\ell}$ is replaced by $e_{\ell}$), and suitable choices for $Z$ and $W$, we can write
\beq
 \sum_{\ell=0}^n \left(-\frac{1}{t} \right)^{n-\ell} 
e_{n-\ell}\left[ \frac{X+qY}{1-q/t} \right] h_{\ell}\left[ \frac{X/t+Y}{1-q/t} \right] =
h_n \left[ \frac{-X/t-qY/t}{1-q/t} + \frac{X/t+Y}{1-q/t} \right]= h_n[Y] \,.
\eeq 
Since $s_{\emptyset,(n)}(x,y)=
s_{(n)}(y)=
h_n[Y] $, we have thus obtained the following expansion
\beq \label{acts}
s_{\emptyset,(n)}=
  \sum_{\ell=0}^n \left(-\frac{1}{t} \right)^{n-\ell} 
e_{n-\ell}\left[ \frac{X+qY}{1-q/t} \right] h_{\ell}\left[ \frac{X/t+Y}{1-q/t} \right]
\eeq
To obtain the action of $\nabla^B$ on $s_{\em,(n)}$, it thus suffices to find its action on each term of the sum on the rhs.
Recall the following expressions (see \cite{Mac} eq. VI.4.8 and 4.9)
\beq
P_{(1^k)}^{(q,t)}[Z] = e_k[Z] \qquad {\rm and} \qquad
P_{(k)}^{(q,t)}[Z] \propto h_k \left[\frac{(1-t)}{(1-q)} Z\right] \, ,
\eeq
where the symbol $\propto$ means that the result holds up to a multiplicative
constant.   Therefore
\beq
 H_{(1^k)}^{(q,q/t)}\left[ {X+qY} \right] \propto
 e_k\left[ \frac{X+qY}{1-q/t} \right] \qquad {\rm and} \qquad
H_{(k)}^{(q/t,1/t)}\left[ {X/t+Y} \right] \propto
 h_k\left[ \frac{X/t+Y}{1-q/t} \right]  \, ,
\eeq
which implies, using \eqref{HH} and \eqref{defHwigle},
that
\beq \label{Hhooks}
\tilde H_{(1^{n-\ell}),(\ell)}(x,y;q,t)  \propto
e_{n-\ell}\left[ \frac{X+qY}{1-q/t} \right] h_{\ell}\left[ \frac{X/t+Y}{1-q/t} \right] \, .
\eeq
In other words, the product $e_{n-\ell}\, h_\ell$ appearing in the  decomposition of $s_{\emptyset;(n)}$ in \eqref{acts} is proportional to $\tilde H_{(1^{n-\ell}),(\ell)}$, which is itself an eigenfunction of $\nabla^B$. Its eigenvalue is read off
 \eqref{actnabla} specialized to the case $\la=(1^{n-\ell})$ and $\mu=(\ell)$:
\beq
\nabla^B \tilde H_{(1^{n-\ell}),(\ell)}= \left(\frac{q}{t}\right)^{n(n+1)/2-n\ell} t^n \tilde H_{(1^{n-\ell}),(\ell)} \, .
\eeq
After straightforward manipulations, we thus get 
\begin{equation}
\begin{split}
\nabla^B s_{\emptyset;(n)}& =
 \left(\frac{q}{t}\right)^{n(n+1)/2} \sum_{\ell=0}^n \left(\frac{t^{n+1}}{q^n} \right)^{\ell}(-1)^{n-\ell}  
e_{n-\ell}\left[ \frac{X+qY}{1-q/t} \right] h_{\ell}\left[ \frac{X/t+Y}{1-q/t} \right]\\
& = \left(\frac{q}{t}\right)^{n(n+1)/2} \sum_{\ell=0}^n   
h_{n-\ell}\left[ \frac{-X-qY}{1-q/t} \right] h_{\ell}\left[ \frac{t^nX/q^n+t^{n+1}Y/q^n}{1-q/t} \right] \\
& = \left(\frac{1}{qt}\right)^{n(n-1)/2} 
h_{n}\Bigl[ [n]_{q,t}X+ [n+1]_{q,t}Y \Bigr]
\end{split}
\end{equation}
\end{proof}
In general the action of $\nabla^B$  on Schur functions, contrary to 
that of $\nabla$,
is not positive (up to an overall sign) when expanded in the Schur basis.
But what is remarkable here is that the Schur expansion of
the action of  $\nabla^B$ on $s_{\emptyset;(n)}$ can  be given in a closed form
for $q$ and $t$ generic (while a similar result for $\nabla e_n$
can only be established
when $q=1/t$).   
\begin{corollary} \label{frobnabla}
The action of $\nabla^B$ on $s_{\emptyset;(n)}$ expanded in the Schur 
basis  $s_{\lambda,\mu}=s_{\lambda}(x) s_{\mu}(y)$ is given by
\beq \label{hvss}
\nabla^B s_{\emptyset;(n)}
=\frac{1}{(qt)^{\binom{n}{2}} } \sum_{\lambda,\mu}
s_{\lambda}\Bigl[ [n]_{q,t} \Bigr] 
s_{\mu}\Bigl[ [n+1]_{q,t} \Bigr] s_{\lambda,\mu} \, .
\eeq
In particular, the Schur expansion coefficients belong to 
$\mathbb N[q^{\pm 1},t^{\pm 1}]$. 
\end{corollary}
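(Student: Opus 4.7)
The plan is to deduce the Schur expansion directly from Proposition~\ref{propaction}, which already gives the closed form
$$\nabla^B s_{\emptyset;(n)} = \frac{1}{(qt)^{\binom{n}{2}}}\, h_n\!\left[[n]_{q,t}X + [n+1]_{q,t}Y\right].$$
So the corollary is essentially a plethystic bookkeeping exercise: expand this $h_n$ in the basis $\{s_\lambda(x)s_\mu(y)\}$.

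First, I would apply the standard identity $h_n[U+V] = \sum_{k=0}^n h_{n-k}[U]\,h_k[V]$ with $U=[n]_{q,t}X$ and $V=[n+1]_{q,t}Y$, which splits the polynomial into a sum over $k$ of products of $h$'s in the separate alphabets. Next, I would invoke the Cauchy identity in its plethystic form
$$h_m[A Z] = \sum_{\lambda \vdash m} s_\lambda[A]\, s_\lambda[Z],$$
applied once with $A=[n]_{q,t}$, $Z=X$ (giving $h_{n-k}[[n]_{q,t} X] = \sum_{\lambda \vdash n-k} s_\lambda[[n]_{q,t}]\, s_\lambda(x)$), and once with $A=[n+1]_{q,t}$, $Z=Y$. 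Collecting the double sum and using $s_{\lambda,\mu}(x,y)=s_\lambda(x)s_\mu(y)$ yields precisely \eqref{hvss}.

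For the positivity claim, the observation is that $[n]_{q,t}=q^{n-1}+q^{n-2}t+\cdots+t^{n-1}$ and $[n+1]_{q,t}$ are sums of monomials in $q,t$ with non-negative integer coefficients, i.e.\ honest alphabets. Since a Schur function evaluated plethystically on an alphabet (a non-negative monomial combination) is itself a non-negative integer combination of monomials (by the tableau expansion $s_\lambda = \sum_T x^T$), both $s_\lambda[[n]_{q,t}]$ and $s_\mu[[n+1]_{q,t}]$ lie in $\mathbb N[q,t]$. Dividing by $(qt)^{\binom{n}{2}}$ places the coefficients in $\mathbb N[q^{\pm 1},t^{\pm 1}]$, as required.

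There is essentially no real obstacle here: the whole content is the closed form already proved in Proposition~\ref{propaction}, and the rest is a clean application of Cauchy together with the plethystic interpretation of $[n]_{q,t}$ as a manifestly positive alphabet.
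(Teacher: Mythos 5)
Your proposal is correct and follows essentially the same route as the paper: split $h_n$ over the two alphabets, apply the plethystic Cauchy identity $h_n[XY]=\sum_{\mu\vdash n}s_\mu[X]s_\mu[Y]$ to each factor, and deduce positivity from the fact that $[n]_{q,t}\in\mathbb N[q,t]$ together with the monomial positivity of Schur functions. No substantive differences.
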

\begin{proof} Using \eqref{simpleiden} and the Cauchy identity 
(see \cite{Mac} and  e.g. \cite[eq. (1.63)]{Hag})
$h_n\bigl[XY \bigr]=\sum_{\mu \vdash n}
s_{\mu}\bigl[ X \bigr] s_{\mu}\bigl[ Y \bigr]$, we find
\begin{equation}
\begin{split}
h_{n}\Bigl[ [n]_{q,t}X+ [n+1]_{q,t}Y \Bigr] & =
\sum_{\ell=0}^n h_{n-\ell}\Bigl[ [n]_{q,t}X \Bigr] 
h_{\ell}\Bigl[ [n+1]_{q,t}Y \Bigr]\\
 & =\sum_{\ell=0}^n
\sum_{\lambda \vdash n-\ell}
s_{\lambda}\Bigl[ [n]_{q,t} \Bigr] s_{\lambda}\bigl[ X \bigr] 
\sum_{\mu \vdash \ell}
s_{\mu}\Bigl[ [n+1]_{q,t} \Bigr] s_{\mu}\bigl[ Y \bigr]  \\
& = \sum_{\lambda,\mu}
s_{\lambda}\Bigl[ [n]_{q,t} \Bigr] 
s_{\mu}\Bigl[ [n+1]_{q,t} \Bigr] s_{\lambda,\mu} 
\end{split}
\end{equation}
The corollary then follows from Proposition~\ref{propaction}.  The positivity is immediate since $[n]_{q,t} \in \mathbb N[q,t]$ and the Schur functions are monomial positive.
\end{proof}
The $B_n$ analog of the Schur functions are orthonormal 
\beq\label{orthos}
\L s_{\la,\mu},s_{\omega, \nu} \R_B = \delta_{\la \omega} \delta_{\mu \nu} 
\eeq
with respect to the $B_n$ analog of the Hall scalar product defined as\footnote{The scalar product is given in \cite[Appendix B]{Mac}.  To be more specific, it is the specialization of ($5.3'$) therein to the case $B_n \cong C_2 \sim S_n$ (discussed in more details in the example on page 178).  The power of 2 
is due to the fact that the centralizer of each element in $C_2$ is of order 2.} 
\beq
\L p_{\lambda}[X+Y] p_{\mu}[X-Y], p_{\omega}[X+Y] p_{\nu}[X-Y] \R_B =  
\delta_{\lambda \omega} \delta_{\mu \nu} z_{\lambda} z_{\mu} \, 2^{\ell(\lambda)+\ell(\mu)}\, \, .
\eeq
We now extract from the previous Corollary a closed-form expression for the $B_n$ analog of the $q,t$-Catalan 
$\L \nabla^B e_n, e_n \R$ (compare for instance with \eqref{catalant}
in the case $q=1/t$).  As expected,
it reduces to $\binom{2n}{n}$ when
$q=t=1$.

\begin{corollary}
\beq \label{qtcata}
\L \nabla^B s_{\emptyset;(n)} , s_{\emptyset;(n)} \R_B =  
\frac{1}{(qt)^{\binom{n}2}}\left[ 
\begin{array}{c}
2n\\
n
\end{array}
\right]_{q,t}  
  \eeq 
where
\beq
\left[ 
\begin{array}{c}
2n\\
n
\end{array}
\right]_{q,t} = \frac{[2n]_{q,t}!}{[n]_{q,t}!  \,[n]_{q,t}!} \, \qquad\text{and}\qquad [n]_{q,t}!=[n]_{q,t}\cdots [2]_{q,t}\,[1]_{q,t}\,.
\eeq
In particular
\beq
\L \nabla^B_{q=1/t} \, s_{\emptyset;(n)} , s_{\emptyset;(n)} \R_B =  
\frac{1}{t^{n^2}}
\left[ 
\begin{array}{c}
2n\\
n
\end{array}
\right]_{t^2}  \, .
  \eeq 
\end{corollary}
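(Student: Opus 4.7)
The plan is to extract the diagonal Schur coefficient from the explicit expansion \eqref{hvss} provided by Corollary~\ref{frobnabla}, and then identify it as the $(q,t)$-Gaussian binomial.

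The first step uses the orthonormality \eqref{orthos} of the bisymmetric Schur basis to reduce $\L \nabla^B s_{\emptyset,(n)}, s_{\emptyset,(n)}\R_B$ to the coefficient of $s_{\emptyset,(n)}$ in \eqref{hvss}. The only term in the double sum that can contribute is $(\lambda,\mu)=(\emptyset,(n))$; since $s_\emptyset[\cdot]=1$ and $s_{(n)}=h_n$, this coefficient equals
\[
\frac{1}{(qt)^{\binom{n}{2}}}\,h_n\bigl[[n+1]_{q,t}\bigr].
\]

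The substantive step is then the plethystic evaluation of $h_n\bigl[[n+1]_{q,t}\bigr]$. I would factor
\[
[n+1]_{q,t} = q^n + q^{n-1}t + \cdots + t^n = q^n\bigl(1+u+u^2+\cdots+u^n\bigr),\qquad u:=t/q,
\]
and invoke the classical principal specialization $h_n(1,u,\ldots,u^n) = \left[\begin{array}{c}2n\\n\end{array}\right]_u$ together with the homogeneity $h_n[\alpha Z]=\alpha^n h_n[Z]$, yielding $h_n\bigl[[n+1]_{q,t}\bigr] = q^{n^2}\left[\begin{array}{c}2n\\n\end{array}\right]_{t/q}$. To match formula \eqref{qtcata}, a short rescaling check is required: the factorization $[k]_{q,t} = q^{k-1}[k]_{t/q}$ gives $[k]_{q,t}! = q^{\binom{k}{2}}[k]_{t/q}!$, and a direct computation then shows $\left[\begin{array}{c}2n\\n\end{array}\right]_{q,t} = q^{\binom{2n}{2}-2\binom{n}{2}}\left[\begin{array}{c}2n\\n\end{array}\right]_{t/q} = q^{n^2}\left[\begin{array}{c}2n\\n\end{array}\right]_{t/q}$, proving \eqref{qtcata}.

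For the $q=t^{-1}$ specialization, the prefactor $(qt)^{-\binom{n}{2}}$ trivializes. A parallel rescaling from $[k]_{t^{-1},t} = t^{1-k}[k]_{t^2}$ yields $[k]_{t^{-1},t}! = t^{-\binom{k}{2}}[k]_{t^2}!$, and hence $\left[\begin{array}{c}2n\\n\end{array}\right]_{t^{-1},t} = t^{-\binom{2n}{2}+2\binom{n}{2}}\left[\begin{array}{c}2n\\n\end{array}\right]_{t^2} = t^{-n^2}\left[\begin{array}{c}2n\\n\end{array}\right]_{t^2}$, as claimed. No real obstacle is anticipated here: once the substitution $u=t/q$ is made the entire argument collapses to the single-variable principal specialization of $h_n$ together with an elementary rescaling between the $(q,t)$- and $u$-versions of the $q$-integers; the slight bookkeeping of the factor $q^{n^2}$ is what makes the overall exponent of $q$ work out precisely, which is the only point where some care is needed.
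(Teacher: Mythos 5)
Your proposal is correct and follows essentially the same route as the paper: reduce to the coefficient of $s_{\emptyset,(n)}$ via orthonormality, then evaluate $h_n\bigl[[n+1]_{q,t}\bigr]$ by pulling out $q^n$ to reduce to the classical principal specialization $h_n\bigl[[n+1]_u\bigr]$ with $u=t/q$, and rescale the $(q,t)$-binomial accordingly. Your explicit verification of the rescaling identities and of the $q=t^{-1}$ specialization is just a more detailed write-up of exactly the steps the paper invokes.
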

\begin{proof}  Given the orthonormality  
\eqref{orthos},   to evaluate $\L \nabla^B s_{\emptyset;(n)} , s_{\emptyset;(n)} \R_B $ it suffices to compute the coefficient of $s_{\emptyset,(n)}$
in $\nabla^B s_{\emptyset,(n)}$.  This coefficient corresponds to specifying
$\lambda=\emptyset$ and $\mu=(n)$ in the rhs of \eqref{hvss}:
\beq\L \nabla^B s_{\emptyset;(n)} , s_{\emptyset;(n)} \R_B=
\frac{1}{(qt)^{\binom{n}{2}} }
\, h_{n}\bigl[ [n+1]_{q,t}\bigr] =\frac{1}{(qt)^{\binom{n}{2}} }
\left[ 
\begin{array}{c}
2n\\
n
\end{array}
\right]_{q,t} \,.
\eeq
The previous equality follows from
$[n+1]_{q,t}=q^{n}[n+1]_{t/q}$, the relation
\beq \left[ 
\begin{array}{c}
2n\\
n
\end{array}
\right]_{q,t}= q^{n^2}\left[ 
\begin{array}{c}
2n\\
n
\end{array}
\right]_{t/q}
\eeq
and the identity \cite{Mac}
\beq
h_{n}\bigl[ [n+1]_{t}\bigr] = 
\left[ 
\begin{array}{c}
2n\\
n
\end{array}
\right]_{t}
\,. \eeq

\end{proof}

\begin{corollary} \label{coro1}
We have
\beq
\L \nabla^B s_{\emptyset;(n)} , s_{(1^n);\emptyset} \R_B =  1
\eeq
\end{corollary}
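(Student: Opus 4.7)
The plan is to read off the required coefficient directly from the closed form \eqref{hvss} in Corollary~\ref{frobnabla}. Since the $B_n$-Schur functions $\{s_{\lambda,\mu}\}$ are orthonormal with respect to $\L\cdot,\cdot\R_B$ by \eqref{orthos}, the pairing $\L \nabla^B s_{\emptyset;(n)}, s_{(1^n);\emptyset}\R_B$ equals the coefficient of $s_{(1^n),\emptyset}$ in the expansion \eqref{hvss}. That is the $(\lambda,\mu)=((1^n),\emptyset)$ term, which gives
\begin{equation*}
\L \nabla^B s_{\emptyset;(n)},\, s_{(1^n);\emptyset}\R_B
=\frac{1}{(qt)^{\binom{n}{2}}}\, s_{(1^n)}\!\bigl[[n]_{q,t}\bigr]\, s_{\emptyset}\!\bigl[[n+1]_{q,t}\bigr].
\end{equation*}

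The second factor is trivially $1$. For the first, I use $s_{(1^n)}[Z]=e_n[Z]$ together with the explicit monomial expansion
\begin{equation*}
[n]_{q,t}=\frac{q^n-t^n}{q-t}=q^{n-1}+q^{n-2}t+\cdots+t^{n-1},
\end{equation*}
so that evaluating $e_n$ on these $n$ monomials (viewed as an alphabet of size $n$) simply produces their product:
\begin{equation*}
e_n\!\bigl[[n]_{q,t}\bigr]=\prod_{i=0}^{n-1} q^{n-1-i}t^{i}=(qt)^{\binom{n}{2}}.
\end{equation*}

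Substituting this back cancels the prefactor $(qt)^{-\binom{n}{2}}$ exactly, yielding the value $1$. The only subtle point to check is the interpretation of the plethysm $e_n[[n]_{q,t}]$: because $[n]_{q,t}$ is a sum of $n$ distinct monomials in $q,t$, applying $e_n$ produces $\binom{n}{n}=1$ term equal to the product of all of them, which is the computation above. No obstacles are expected; the statement is essentially the diagonal evaluation of Corollary~\ref{frobnabla} at the special pair $((1^n),\emptyset)$.
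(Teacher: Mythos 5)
Your proposal is correct and follows exactly the paper's own argument: read off the $((1^n),\emptyset)$ term of the expansion \eqref{hvss} via orthonormality, then evaluate $e_n\bigl[[n]_{q,t}\bigr]$ as the product of the $n$ monomials $q^{n-1},q^{n-2}t,\dots,t^{n-1}$, which equals $(qt)^{\binom{n}{2}}$ and cancels the prefactor. Nothing to add.
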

\begin{proof} As in the proof of the previous corollary, 
it suffices to compute the coefficient of $s_{(1^n),\emptyset}$
in $\nabla^B s_{\emptyset,(n)}$, which corresponds to specifying
$\lambda=(1^n)$ and $\mu=\emptyset$ in the rhs of \eqref{hvss}:
\beq\L \nabla^B s_{\emptyset;(n)} , s_{(1^n), \emptyset} \R_B=
\frac{1}{(qt)^{\binom{n}{2}} } s_{1^n}\Bigl[ [n]_{q,t} \Bigr] = 
\frac{1}{(qt)^{\binom{n}{2}} } (q^{n-1}) (q^{n-2}t) \cdots  (qt^{n-2})(t^{n-1})=1 \, .
\eeq
\end{proof}

The final result of this section is an explicit expression for 
the analog of $\L \nabla e_n, p_1^n \R$ given by 
$\L \nabla^B s_{\em,(n)} , p_{\em,(1^n)} \R_B$ (where we recall that 
$p_{\em,(1^n)}=p_{1^n}[X+Y]$).  Observe the similarity with 
\eqref{tfactor} when $q=1/t$.
Quite unexpectedly, the present expression, a priori more complicated, factorizes
for $q$ and $t$ generic (recall the discussion surrounding eq. \eqref{nopower} pertaining to the non-factorization in the usual case). 
\begin{proposition}\label{conqt}
We have
\beq
\L \nabla^B s_{\em,(n)} , p_{\em,(1^n)} \R_B= \left( \frac{ [n+1]_{q,t} + [n]_{q,t} }{(qt)^{(n-1)/2}}\right)^n.  
\eeq
In particular, there follows the two specializations:
\begin{equation}\label{tdimB}
\L \nabla^B_{q=1/t} \, s_{\em,(n)} , p_{\em,(1^n)} \R_B= \left(\frac{[2n+1]_t}{t^{n}}\right)^n
\end{equation}
and
\begin{equation}\label{dimB}
\L \nabla^B_{q=t=1} \, s_{\em,(n)} , p_{\em,(1^n)} \R_B=(2n+1)^n.
\end{equation}
\end{proposition}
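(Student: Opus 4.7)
The plan is to combine the closed-form expression for $\nabla^B s_{\emptyset,(n)}$ from Proposition~\ref{propaction} with a change of variables adapted to the $B_n$-scalar product. Setting $\alpha=[n]_{q,t}$, $\beta=[n+1]_{q,t}$, $a=(\alpha+\beta)/2$ and $b=(\alpha-\beta)/2$, one has $\alpha X+\beta Y=a(X+Y)+b(X-Y)$, and the plethystic action on power-sums gives
\[
p_r[\alpha X+\beta Y]=a^{r}\,p_r[X+Y]+b^{r}\,p_r[X-Y].
\]
This is the pivotal move: it re-expresses the argument of $h_n$ in the pair of alphabets $(X+Y,\,X-Y)$ in which $\L\cdot,\cdot\R_B$ is diagonal. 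The fact that the answer must be an $n$-th power is already visible here, since the $n$ ``$(1)$'' entries on the right-hand side of the pairing will force $n$ independent factors of $a$.

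Next, I expand $h_n=\sum_{\lambda\vdash n}z_\lambda^{-1}p_\lambda$ and substitute, obtaining
\[
h_n[\alpha X+\beta Y]=\sum_{\lambda\vdash n}\frac{1}{z_\lambda}\prod_{i=1}^{\ell(\lambda)}\bigl(a^{\lambda_i}p_{\lambda_i}[X+Y]+b^{\lambda_i}p_{\lambda_i}[X-Y]\bigr).
\]
Orthogonality of $\{p_\rho[X+Y]\,p_\sigma[X-Y]\}$ for $\L\cdot,\cdot\R_B$ means that only the coefficient of $p_{(1^n)}[X+Y]\cdot p_{\emptyset}[X-Y]$ contributes to the pairing with $p_{\emptyset,(1^n)}=p_{1^n}[X+Y]$; extracting it forces $\lambda=(1^n)$ and the selection of $a\,p_1[X+Y]$ from each of the $n$ factors, yielding the coefficient $a^n/n!$. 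Using $\L p_{(1^n)}[X+Y],p_{(1^n)}[X+Y]\R_B=z_{(1^n)}\cdot 2^{\ell((1^n))}=n!\,2^n$, I conclude
\[
\bigl\langle h_n[\alpha X+\beta Y],\,p_{1^n}[X+Y]\bigr\rangle_B=\frac{a^n}{n!}\cdot n!\,2^n=(2a)^n=(\alpha+\beta)^n,
\]
and multiplying by the prefactor $(qt)^{-\binom{n}{2}}$ from Proposition~\ref{propaction} gives $([n]_{q,t}+[n+1]_{q,t})^n/(qt)^{n(n-1)/2}$, which equals the claimed $n$-th power.

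The two specializations then drop out. At $q=t=1$ one has $[k]_{1,1}=k$ and $(qt)^{(n-1)/2}=1$, giving $(n+(n+1))^n=(2n+1)^n$, i.e. \eqref{dimB}. At $q=t^{-1}$ a short computation gives $[k]_{t^{-1},t}=[k]_{t^2}/t^{k-1}$, whence
\[
[n]_{t^{-1},t}+[n+1]_{t^{-1},t}=\frac{t\,[n]_{t^2}+[n+1]_{t^2}}{t^n}=\frac{1+t+t^2+\cdots+t^{2n}}{t^n}=\frac{[2n+1]_t}{t^n},
\]
and since $(qt)^{(n-1)/2}=1$, this yields \eqref{tdimB}. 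The only non-routine step in the entire argument is the change-of-basis trick opening the first paragraph; once it is in place, the scalar product collapses to picking out a single monomial from an explicit expansion, and no combinatorial identity beyond Proposition~\ref{propaction} is needed. I do not anticipate any substantive obstacle.
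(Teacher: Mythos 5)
Your proposal is correct and follows essentially the same route as the paper's own proof: both decompose $[n]_{q,t}X+[n+1]_{q,t}Y$ into the $(X+Y,X-Y)$ alphabets, expand $h_n$ via $h_n=\sum_\lambda p_\lambda/z_\lambda$, extract the coefficient of $p_{1^n}[X+Y]$ (which forces $\lambda=(1^n)$), and pair against the norm $\langle p_{1^n}[X+Y],p_{1^n}[X+Y]\rangle_B=2^n n!$. The specializations are handled the same way, so there is nothing to add.
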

\begin{proof}  We first need to obtain the coefficient
$p_{1^n}[X+Y]$ in the expansion of
$\nabla^B s_{\emptyset, (n)}$ in the 
$p_{\lambda}[X-Y] p_{\mu}[X+Y]$ basis.
We have 
\beq
[n]_{q,t}X+ [n+1]_{q,t}Y = \frac{1}{2}\bigl(  [n]_{q,t}+ [n+1]_{q,t} \bigr) (X+Y)
+ \frac{1}{2}\bigl(  [n]_{q,t}- [n+1]_{q,t} \bigr) (X-Y)
\eeq
and thus this amounts, from Proposition~\ref{propaction},  to computing
the coefficient of
$p_{1^n}[X+Y]$ in
\beq \begin{split}
& \frac{1}{(qt)^{\binom{n}{2}} }
h_{n}\left[  \frac{1}{2}\bigl(  [n]_{q,t}+ [n+1]_{q,t} \bigr) (X+Y)
+ \frac{1}{2}\bigl(  [n]_{q,t}- [n+1]_{q,t} \bigr) (X-Y) \right] \\
& \quad = \frac{1}{(qt)^{\binom{n}{2}} }  \sum_{\lambda, \mu} \frac{1}{z_\la z_\mu} p_\la \left[  \frac{1}{2}\bigl(  [n]_{q,t}+ [n+1]_{q,t} \bigr)\right]  p_\mu \left[\frac{1}{2}\bigl(  [n]_{q,t}- [n+1]_{q,t} \bigr)   \right]
 p_\la [X+Y] p_\mu[X-Y]
\end{split}
\eeq
where we used \eqref{simpleiden}, the expansion $h_n = \sum_{\la \vdash n}p_\la /z_\la$ (\cite[eq. I.2.14']{Mac}), and the basic property $p_\la[XY]=p_\la[X]p_\la[Y]$.  Letting $\la=(1^n)$ and $\mu=\emptyset$ the coefficient of 
$p_{1^n}[X+Y]$ is easily found to be
\beq
  \frac{1}{(qt)^{\binom{n}{2}}z_{1^n}} p_{1^n}
\left[  \frac{1}{2}\bigl(  [n]_{q,t}+ [n+1]_{q,t} \bigr) \right]
=  \frac{1}{(qt)^{\binom{n}{2}} n!} \left( p_1 \left[ \frac{1}{2}\bigl(  [n]_{q,t}+ [n+1]_{q,t} \bigr) \right] \right)^n
=  \frac{1}{2^n n!}
\left( \frac{[n]_{q,t}+ [n+1]_{q,t}}{(qt)^{(n-1)/2}} \right)^n  .
\eeq 
The proposition then follows from
\beq 
\langle  p_{1^n}[X+Y] , p_{1^n}[X+Y] \rangle_B = 2^n n! \, .
\eeq
\end{proof}

\begin{remark} \label{remfrob}
As mentioned in the introduction,  $\nabla_{q=t^{-1}}^B s_{\emptyset;(n)}$ seems to coincide with  ${\rm Frob}_{t^{-1},t}(R^{B_n})$, where $R^{B_n}$ is a certain doubly graded quotient  of the coinvariant ring $C^{B_n}$.  From Remark~\ref{rem1}, one would be tempted to believe that there
exists for every pair of partitions $(\lambda,\mu)$ a bigraded submodule 
of $R^{B_n}$ (isomorphic to the regular module of the  hyperoctahedral group $B_n$) whose
Frobenius series when $q=t^{-1}$ corresponds to the Schur expansion of
$\tilde H_{\lambda, \mu}(x,y;t^{-1},t)$.   This cannot be the case however since 
$\nabla^B s_{\emptyset;(n)} -\tilde H_{\lambda,\mu}(x,y;q,t)$ is not Schur-positive in general (even when $q=t^{-1}$).
\end{remark}

\begin{remark}  The Macdonald polynomials $\tilde H_{\lambda}^{(q,t)}[X]$
can be defined (up to normalization constants) as the unique basis such that
\begin{align}
 {\rm (i)} & \quad \tilde H_{\lambda}^{(q,t)}[X(1-t)] \in \mathbb Q(q,t) \{s_{\mu} : \mu \leq \lambda\} \\
 {\rm (ii)} & \quad  \tilde H_{\lambda}^{(q,t)}[X(1-q)] \in \mathbb Q(q,t) \{s_{\mu} : \mu \geq \lambda\} 
\end{align}

The double Macdonald polynomials can also be defined by two similar triangularities.  Let $\phi_t$ be the plethystic substitution  
\beq
X \mapsto X \qquad {\rm and}  \qquad (X+Y) \mapsto (X+Y)(1-t)
\eeq
and recall that $\omega_B$ is the involution such that $\omega_B
s_{\lambda,\mu}(x,y)=s_{\mu',\lambda'}(x,y)$.  The  double Macdonald polynomials $\tilde H_{\lambda, \mu}(x,y;q,t)$ can be characterized (up to normalization constants) as the unique basis of the space of bisymmetric functions such that
\begin{align}
 {\rm (i)} & \quad \phi_t \tilde H_{\lambda,\mu}(x,y;q,t) \in \mathbb Q(q,t) 
\{s_{\omega,\eta} : (\omega,\eta) \leq (\lambda,\mu)\} \\
 {\rm (ii)}  & \quad \omega_B  \circ \phi_q \circ \omega_B
 \tilde H_{\lambda,\mu}(x,y;q,t) \in \mathbb Q(q,t) 
\{s_{\omega,\eta} : (\omega,\eta) \geq (\lambda,\mu)\} 
\end{align}
The first triangularity is seen as follows: $\phi_t \tilde H_{\lambda, \mu}(x,y;q,t)$ is equal up to a constant to $P_{\lambda, \mu}(x,y;q,t^{-1})$, which is lower triangular in the $m_{\lambda,\mu}$ basis
and hence in the $s_{\lambda,\mu}$ basis
(the $s_{\lambda,\mu}$ basis is lower triangular in the  $m_{\lambda,\mu}$ basis
since it corresponds to the specialization $q=t=0$ of $P_{\lambda,\mu}(x,y;q,t)$).
Using the symmetry \eqref{omegaBrel}, we have immediately that $ \tilde H_{\lambda, \mu}(x,y;q,t)$ is upper triangular in the $s_{\lambda,\mu}$ basis.  
The two triangularities in the double case are not as symmetric as in the usual case due to the noncommutativity of $\omega_B$ 
and the plethystic substitution $\phi_q$ (whereas in the usual case 
the involution $\omega$ commutes with any
plethystic substitution). 

Haiman also introduced in \cite{Haiman} wreath Macdonald polynomials.  In the special
case of the complex reflection group $G(2,1,n)$, the wreath Macdonald 
polynomial $\mathcal H_{\mu}(q,t)$ depends on a choice of staircase partition $\delta^{m}$ (a 2-core) and is indexed by a partition $\mu$ of size
$m(m-1)/2+n$, where $n$ is a fixed integer.  They satisfy the triangularities
\beq \label{core1}
\mathcal H_{\mu}(q,t) \otimes \sum_i (-q)^i {\rm char}(\wedge^i \mathfrak h ) \in \mathbb Q(q,t)\bigl \{ \chi^{{\rm Quot}_2(\lambda)}: \lambda \geq \mu , 
{\rm Core}_2(\lambda)=\delta^{m} \bigr\}   
\eeq
and
\beq \label{core2}
\mathcal H_{\mu}(q,t) \otimes \sum_i (-t)^i {\rm char}(\wedge^i \mathfrak h ) \in \mathbb Q(q,t)\bigl \{ \chi^{{\rm Quot}_2(\lambda)}: \lambda \leq \mu , 
{\rm Core}_2(\lambda)=\delta^{m} \bigr\}   
\eeq
where $ \mathfrak h=\mathbb C^n$ is the defining representation of $G(2,1,n)$, 
${\rm Quot}_2(\lambda)$ is the 2-quotient of $\lambda$ (a pair of partitions of total degree $n$) and ${\rm Core}_2(\lambda)$ is the 2-core of $\lambda$.
This is somewhat reminiscent of our construction, 
even more so that if $m$ is large enough Haiman claims that 
the corresponding wreath Macdonald polynomials can be given in terms of 
usual Macdonald polynomials.   However, the objects cannot 
coincide given that the symmetry in the triangularities \eqref{core1} and \eqref{core2} is not present in the double case.  
\end{remark}

\section{Conclusion: the superspace bridge}
\label{Con}

The present construction relies on the formalism previously developed for Macdonald superpolynomials \cite{BDLM1,BDLM2}.   To these superpolynomials, we have associated bisymmetric polynomials by considering the coefficient of a given monomial in the $\ta_i$'s, say $\ta_1\cdots \ta_m$  for a superpolynomial of fermionic-degree $m$, and dividing by the Vandermonde determinant in the commuting variables  $x_1,\cdots, x_m$.   We have seen that for sufficient high fermionic degree, a stable  sector is reached.  In this stable sector, our key result is the product form \eqref{facto}. As a consequence of this remarkable factorization, we have been able to  prove rather directly a number of properties for these so-called double Macdonald polynomials. In particular, we have obtained:

\begin{itemize}
\item The expression for the norm.

\item The expression for the integral form.

\item The evaluation.

\item The positivity and integrability of the double Kostka coefficients.

\item The two Macdonald-type symmetry properties of the double 
Kostka coefficients.

\end{itemize}

In \cite{BDLM1,BDLM2}, we  have presented conjectures related to  the above 
five items but pertaining to generic Macdonald superpolynomials. The present results imply that we now have proofs of these results for 
all cases where $m\geq n$
(in some sense for roughly half the cases).
The precise connection between the  present results and our more general conjectures is worked out in Appendix C.

We had a number of mathematical and physical motivations for undertaking the study of the Macdonald superpolynomials.
One of which, of a combinatorial nature, was to see whether by adding more structure to the usual Macdonald polynomials, one could get unexpected new handles on open problems such as a  combinatorial description of the $q,t$-Kostka coefficients, generalizing the Lascoux-Sch\"utzenberger description of the Kostka-Foulkes coefficients. For the stable sector considered here, this particular hope was not fulfilled: our new double Kostka coefficients are roughly sums of products of the usual ones.

However, we have already presented a conjectural result -- pertaining to  the non-stable sector, which thereby necessarily relies on the superspace formalism -- that could shed some light on the combinatorics of the usual 
$q,t$-Kostka coefficients  \cite{BDLM2}.
It says that the simplest superpolynomials, namely those in the $m=1$ sector, do provide a refinement of the Kostka coefficients. Precisely, this conjectural result gives a relation between the generalized coefficients $K_{\Omega \Lambda}(q,t)$ of fermionic degree $m=1$ and total degree $n$ and the usual $q,t$-Kostka coefficients of degree $n+1$.
\begin{conjecture}\label{kos1}
Let $\Lambda$ be a superpartition of fermionic degree $m=1$, and let
$H_{\Lambda}$ be the modified
Macdonald superpolynomial (see Appendix~\ref{app_proof_conj} for more details).  Let also $\psi$
be the linear application that maps $s_{\Omega}$ to $s_{\Omega^{\circledast}}$.
Then
\begin{equation} 
\psi(H_{\Lambda})= H_{\Lambda^{\circledast}}.
\end{equation} 
\end{conjecture}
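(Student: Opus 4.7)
The plan is to characterise both $H_\Lambda$ and $H_{\Lambda^\circledast}$ by a pair of plethystic triangularity relations and then to show that the linear map $\psi$ intertwines them. First I would unpack the combinatorics of the map $\Omega \mapsto \Omega^\circledast$ in the fermionic degree $m=1$ case: writing $\Omega=(\Omega^a;\Omega^s)$ with $\Omega^a=(a)$ (one part, possibly zero), one has $\Omega^\circledast = (a+1)\cup \Omega^s$, so the assignment is a bijection between superpartitions of fermionic degree $1$ and total degree $n$ and ordinary partitions of $n+1$. The essential combinatorial input is that this bijection is order-preserving: since $\Omega^{*}$ and $\Omega^\circledast$ differ only by the decrement of the distinguished entry, the two conditions $\Omega^{*} \leq \Lambda^{*}$ and $\Omega^\circledast \leq \Lambda^\circledast$ appearing in the super-dominance \eqref{domidebut} should, after a short case analysis, collapse to a single dominance inequality on $\Omega^\circledast$ and $\Lambda^\circledast$, so that $\psi$ carries the order ideal of superpartitions dominated by $\Lambda$ to the order ideal of partitions dominated by $\Lambda^\circledast$.

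Next, I would formulate, by analogy with the characterisation of $\tilde H_{\lambda,\mu}$ presented in the Remark preceding the Conclusion, a pair of plethystic triangularities for the superspace modified Macdonald polynomial $H_\Lambda$ in the $m=1$ sector: conditions of the shape $\phi_t\, H_\Lambda \in \mathbb Q(q,t)\{s_\Omega : \Omega \leq \Lambda\}$ and a conjugate relation involving $\phi_q$. These would be derived from the orthogonality \eqref{newsp} for $P_\Lambda$ together with a superspace version of the integral-form construction, the main technical subtlety being the correct splitting of the power sums $p_\Lambda$ into their bosonic and fermionic components and the corresponding interpretation of the plethystic substitutions.

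The crux of the argument is then to show that $\psi$ intertwines $\phi_t$ and $\phi_q$ on the superspace side with their ordinary-symmetric-function counterparts acting in degree $n+1$. This is where I expect the main obstacle: $\psi$ is defined combinatorially on Schur indices, whereas plethysm acts multiplicatively on power sums, so a term-by-term comparison seems hopeless. I would try to realise $\psi$ as an explicit operator, for example as a Bernstein-type creation operator that inserts the part $a+1$ on top of $\Omega^s$, composed with a correcting symmetric-function multiplier, and then verify $\psi \circ \phi_t = \phi_t \circ \psi$ and $\psi \circ \phi_q = \phi_q \circ \psi$ by testing on power sums.

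Once the intertwining is in hand, the triangularities for $H_\Lambda$ transport directly to triangularities characterising $H_{\Lambda^\circledast}$, and a final normalisation check, for instance by matching the coefficient of $s_{(n+1)}$ in both expressions (which amounts to comparing the simplest Kostka entries on the two sides), will upgrade the inclusion of order ideals to the required equality $\psi(H_\Lambda)=H_{\Lambda^\circledast}$.
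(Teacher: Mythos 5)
This statement is labelled a \emph{conjecture} in the paper, and it remains one: the paper offers no proof, only a worked example ($H_{(2;1)}$ mapping to $H_{(3,1)}$) and a further generalization to arbitrary fermionic degree, itself also conjectural. Since $m=1$ lies in the non-stable sector for $n>1$, none of the stable-sector machinery (the factorization of Theorem~\ref{teofac}, the results of Appendix~\ref{app_proof_conj}) applies, so there is no proof in the paper against which to measure your attempt. Your proposal should therefore be judged on its own, and as it stands it is a strategy outline rather than a proof: the step you yourself identify as the crux --- realising $\psi$ as an explicit operator and verifying that it intertwines the plethystic substitutions $\phi_t$ and $\phi_q$ with their ordinary counterparts --- is never carried out, and it is far from clear that any such intertwining holds. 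The superspace triangularity characterisation of $H_\Lambda$ that you want as your starting point is also not established anywhere (the Remark before the Conclusion gives such a characterisation only for the stable double Macdonald polynomials, not for generic superpolynomials with $m=1$).

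There is also a concrete error in your first step. The assignment $\Omega\mapsto\Omega^\circledast$ is \emph{not} a bijection between superpartitions of fermionic degree $1$ and total degree $n$ and partitions of $n+1$: a fermionic-degree-one superpartition is equivalent to a partition of $n+1$ together with a choice of removable corner (the cell turned into a circle), so the map is many-to-one. For instance $(2;1)$ and $(0;3)$ both have $\circledast$-image $(3,1)$. Consequently $\psi$ is not injective on Schur functions --- it collapses distinct terms, as one sees explicitly in the paper's own example where the coefficients $q^2t$ and $1+qt$ of two different $s_\Omega$ merge into $1+qt+q^2t$, and as is encoded in the sum over $\Omega$ with $\Omega^\circledast=\mu$ in \eqref{relatekostka}. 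This non-injectivity does not by itself doom the ``transport triangularities, then normalise'' scheme (the two triangularities do characterise $\tilde H_{\Lambda^\circledast}$ up to a scalar), but it does mean your claim that $\psi$ carries the order ideal below $\Lambda$ bijectively onto the order ideal below $\Lambda^\circledast$ is wrong, and it makes the intertwining you need substantially more delicate than a ``Bernstein-type creation operator'' heuristic suggests. The statement should be left as the conjecture it is.
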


This conjecture implies that the usual $q,t$-Kostka coefficient 
$K_{\mu \la}(q,t)$ can be calculated 
from its lower-degree super-relatives as
\begin{equation} \label{relatekostka}
K_{\mu \la}(q,t) = \sum_{\Omega \, | \, \Om^\cd = \mu} K_{\Omega \Lambda }(q,t)
\end{equation} 
where $\Lambda$ is any superpartition that can be obtained from
$\lambda$ by replacing a square by a circle, and the 
sum is over all $\Omega$'s that can be obtained from
$\mu$  by replacing a square by a circle. Moreover, the expression for the sum on the right-hand side is independent of the choice of $\La$.
We thus relate a Kostka coefficient of a given degree to a sum of lower degree Kostka coefficients in the $m=1$ fermionic sector, a process that mimics a sort of transmutation of the fermionic variable into a bosonic one.

For example, consider $H_{(2;1)}(x,\theta;q,t)$.  Its Schur expansion reads (using the diagrammatic representation of superpartitions introduced in Appendix B)
\begin{equation}
H_{  \, _{\Tiny{  {\tableau[scY]{&& \bl \tcercle{}\\ \\ }}}}  } =  t \, s_{ \, _{\Tiny{  {\tableau[scY]{&&& \bl \tcercle{} \\ }}}}  } + q^2 t \, s_{ \, _{\Tiny{  {\tableau[scY]{&& \\ \bl \tcercle{} \\ }}}}  } 
+ (1+qt) \, s_{ \, _{\Tiny{  {\tableau[scY]{&& \bl \tcercle{} \\  \\ }}}}  } 
+ (q+q^2 t) \, s_{ \, _{\Tiny{  {\tableau[scY]{& \\ & \bl \tcercle{}  \\ }}}}  } 
+ (q^2+q^3t) \, s_{ \, _{\Tiny{  {\tableau[scY]{& \\  \\ \bl \tcercle{} }}}}  } 
+ q \, s_{ \, _{\Tiny{  {\tableau[scY]{& \bl \tcercle{} \\  \\ \\ }}}}  } 
+ q^3 \, s_{ \, _{\Tiny{  {\tableau[scY]{ \\  \\ \\   \bl \tcercle{} }}}}  } 
\end{equation}
Now apply $\psi$ :
\begin{align}
\psi \left( H_{  \, _{\Tiny{  {\tableau[scY]{&& \bl \tcercle{}\\ \\ }}}}  } \right) & = H_{  \, _{\Tiny{  {\tableau[scY]{&& \\ \\ }}}}  } \nonumber \\
& =  t \, s_{ \, _{\Tiny{  {\tableau[scY]{&&& \\ }}}}  } + q^2 t \, s_{ \, _{\Tiny{  {\tableau[scY]{&& \\  \\ }}}}  } 
+ (1+qt) \, s_{ \, _{\Tiny{  {\tableau[scY]{&&  \\  \\ }}}}  } 
+ (q+q^2 t) \, s_{ \, _{\Tiny{  {\tableau[scY]{& \\ &   \\ }}}}  } 
+ (q^2+q^3t) \, s_{ \, _{\Tiny{  {\tableau[scY]{& \\  \\  \\ }}}}  } 
+ q \, s_{ \, _{\Tiny{  {\tableau[scY]{&  \\  \\ \\ }}}}  } 
+ q^3 \, s_{ \, _{\Tiny{  {\tableau[scY]{ \\  \\ \\  \\  }}}}  }  \nonumber \\
& =  t \, s_{ \, _{\Tiny{  {\tableau[scY]{&&& \\ }}}}  }
+  (1+qt+q^2 t) \, s_{ \, _{\Tiny{  {\tableau[scY]{&& \\  \\ }}}}  } 
+ (q+q^2 t) \, s_{ \, _{\Tiny{  {\tableau[scY]{& \\ &   \\ }}}}  } 
+ (q + q^2 + q^3t)   \, s_{ \, _{\Tiny{  {\tableau[scY]{& \\  \\  \\ }}}}  } 
+ q^3 \, s_{ \, _{\Tiny{  {\tableau[scY]{ \\  \\ \\  \\  }}}}  } 
\end{align}
which corresponds to the usual Schur expansion of the modified Macdonald polynomials $H_{(3,1)}(x;q,t)$.

  This conjecture can be generalized as follows.  Let $\La$ and $\La^\boxcircle$ be two superpartitions 
  such that $\La^\boxcircle$ is obtained from $\La$ by replacing a circle by a box.   Note that  $\La^\boxcircle$ is not unique for  $m>1$.   
Define $\#(\La,\La^\boxcircle)$ to be the number of circles that lie above the row of $\Lambda$ where the circle has been replaced by a box.
\begin{conjecture}
Let $\widehat H_\La$ be the normalized modified Macdonald superpolynomial defined as
\begin{equation}
\widehat H_\La = v_\La H_\La, \qquad \text{with} \quad v_\La = \prod_{s\in \La/\mathcal B \La} (1-q^{a_{\La^\circledast}(s)} t^{  l_{\La^*}(s)+1 } )
\end{equation}
and let $\psi$ be the linear application that maps:
\begin{equation}
\psi \; : \; s_\La \; \mapsto \; \psi(s_\La) = \sum_{\Omega \,  |\,  \Omega = \La^\boxcircle} (-1)^{\#(\La,\Omega)}s_\Omega.
\end{equation}
Then, we have:
\begin{equation}
\psi(\widehat H_\La) = \sum_{\Omega \,  |\,  \Omega = \La^\boxcircle} (-1)^{\#(\La,\Omega)}\widehat H_\Omega.
\end{equation}
\end{conjecture}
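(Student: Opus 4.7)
The plan is to exploit the factorization of Theorem~\ref{teofac} to first establish the conjecture in the stable sector $m \geq n$, and then to attempt to lift the result to arbitrary fermionic degree by induction. In the stable regime, the normalized modified Macdonald superpolynomial $\widehat H_\Lambda$, read through the bisymmetric correspondence, becomes
$$\widehat H_{\lambda,\mu}(x,y;q,t) \;=\; \widehat v_\Lambda \, H_\lambda^{(q,qt)}[X+qY]\, H_\mu^{(qt,t)}[tX+Y],$$
where $\widehat v_\Lambda$ is the explicit rational prefactor translating the normalization $v_\Lambda$ into the $(\lambda,\mu)$ language via the arm/leg statistics on $\Lambda/\mathcal{B}\Lambda$.

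The next step is to identify, at the level of the pair $(\lambda,\mu)$, the effect of replacing the $i$-th circle by a box. Since the $i$-th part of $\Lambda^a$ equals $\lambda_i + m - i$, the superpartition $\Lambda^{\boxcircle_i}$ has fermionic degree $m-1$ and corresponds to a pair $(\lambda^{(i)},\mu^{(i)})$ in which $\mu^{(i)}$ is obtained by inserting the liberated part into $\mu$, while $\lambda^{(i)}$ is read off from the remaining $m-1$ parts of $\Lambda^a$ after subtracting $\delta^{m-1}$. The sign $(-1)^{\#(\Lambda,\Omega)}=(-1)^{i-1}$ is exactly the antisymmetrization sign that arises when one extracts a single row from the wedge-like antisymmetric staircase; this is also the natural sign produced when one rewrites $s_\lambda(x)$ as a Jacobi--Trudi-type determinant and expands along a row.

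With this combinatorial dictionary in place, I would apply the definition of $\psi$ to the Schur expansion of $\widehat H_\Lambda$ and re-collect terms on the target side, using Pieri-type rules for $H_\mu^{(qt,t)}$ and the stability of the factors $H_\lambda^{(q,qt)}[X+qY]$ under removal of a single part. The base case $m=1$ is Conjecture~\ref{kos1} and corresponds precisely to stripping off the single Macdonald factor carrying the $\lambda$-data. The inductive step should then reduce to a compatibility between $\psi$ and the ``add a row to $\lambda$'' operation, which by the factorization decouples cleanly from the $\mu$-factor.

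The hard part will be the non-stable regime $m<n$, where Theorem~\ref{teofac} does not apply and the super-dominance ordering \eqref{domidebut} is strictly finer than \eqref{domibi}. Here no product formula is available, and a direct argument is required. A plausible route is to work with the superspace Macdonald operators of \cite{BDLM1,BDLM2}: if one can show that $\psi$, viewed as a linear map between superpolynomial spaces of consecutive fermionic degrees, intertwines these operators with a controlled shift of their eigenvalues from degree $m$ to degree $m-1$, then the uniqueness of $\widehat H_\Lambda$ as a joint eigenbasis forces the claimed identity. Proving this intertwining, and in particular pinning down the precise prefactor $v_\Lambda$ that makes the signs on both sides agree, is where I expect the main technical difficulty to lie.
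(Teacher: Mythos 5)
First, a point of calibration: the paper does not prove this statement. It is stated (and remains) a conjecture, supported only by worked examples and by its consequence \eqref{Kdiffm} for the Kostka coefficients; there is no proof in the paper to compare your argument against. Your proposal, as written, is a plan rather than a proof, and it contains a gap that is fatal to its central idea.

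The gap is this: your strategy is to ``first establish the conjecture in the stable sector $m\geq n$'' using the factorization of Theorem~\ref{teofac}, but the right-hand side of the conjectured identity can never be reached by that theorem. If $\La$ has fermionic degree $m$ and corresponds to a pair $(\la,\mu)$ of total degree $n$, then each $\Omega=\La^{\boxcircle}$ has fermionic degree $m-1$ while its associated pair of partitions has total degree $|\Omega^*|-\binom{m-1}{2}=n+\binom{m}{2}+1-\binom{m-1}{2}=n+m$. Since $m-1<n+m$ always, every $\widehat H_\Omega$ on the right-hand side lies strictly in the non-stable sector, where neither Theorem~\ref{teofac} nor the dominance order \eqref{domibi} applies. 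So the factorization controls at most the left-hand side $\widehat H_\La$ (and only when $m\geq n$), and gives you nothing about the objects you need to match it to; the ``Pieri-type re-collection on the target side'' cannot be carried out in the factorized picture. In addition, your proposed base case $m=1$ is Conjecture~\ref{kos1}, which is itself unproven in the paper, so the induction has no established foundation; and the intertwining property of $\psi$ with the superspace Macdonald operators, which you correctly identify as the crux in the non-stable regime, is precisely the missing content of the conjecture rather than a step you have supplied. What your write-up does get right is the combinatorial dictionary (the circle-to-box move liberates a part of $\La^a$ into $\La^s$, lowers $m$ by one, and the sign $(-1)^{\#(\La,\Omega)}$ is the natural antisymmetrization sign), and the observation that $\psi$ is essentially $\sum_i x_i\partial_{\theta_i}$, which the paper also records; but these do not add up to a proof, and the statement should still be regarded as open.
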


Note the if $\La$ has fermionic degree $m=1$, this reduces to the previous conjecture since then all the factors $v_\bullet$ reduce to 1.
We stress that $ \La/\mathcal B \La$ refers to the set of boxes in $\La$ that are not in $\B\La$, a set denoted by $\F\La$ in Appendix C (see below eq. \eqref{defui}). This set is empty when $m=1$.

\begin{remark} It is straightforward to 
check that the application $\psi$ is such that 
$\psi \circ \psi=0$.  In the language of \cite{DLM0}, $\psi$ corresponds to
$\sum_i{x_i} \partial_{\theta_i}$, the
adjoint of the exterior derivative $\sum_i{\theta_i} \partial_{x_i}$.
\end{remark}

Let us make explict the implication of this result at the level of the Kostkas. We have 
\begin{equation}
\psi(\widehat H_\La) =v_\La \sum_\Omega K_{\Omega \La}(q,t) \psi(s_\Omega)=v_\La \sum_\Omega K_{\Omega \La}(q,t) \sum_{\Delta \,  |\,  \Delta = \Om^\boxcircle} (-1)^{\#(\Om,\Delta)}s_\Delta
\end{equation}
and
\begin{equation}
\psi(\widehat H_\La) =\sum_{\Gamma \,  |\,  \Gamma = \La^\boxcircle} (-1)^{\#(\La,\Gamma)}\widehat H_\Gamma
=\sum_{\Gamma \,  |\,  \Gamma = \La^\boxcircle} (-1)^{\#(\La,\Gamma)}v_\Gamma\sum_\Delta K_{\Delta, \Gamma}\, s_\Delta,
\end{equation}
By comparing the coefficients of $s_\Delta$, we get
\beq \label{Kdiffm}
v_\La \sum_{\Omega \, |\,\Om^\boxcircle= \Delta } (-1)^{\#(\Om,\Delta)} K_{\Omega \La}(q,t) =\sum_{\Gamma \,  |\,  \Gamma = \La^\boxcircle}  (-1)^{\#(\La,\Gamma)}  v_\Gamma K_{\Delta, \Gamma}\,
\eeq
This is thus a relation between a linear combination of Kostkas for fermionic degrees differing by 1. The case $m=1$ is of course special in that the sum on the r.h.s. reduces to a single term.
As already said, when $m>1$, $\La^\boxcircle$ can take $m$ different values, and therefore the r.h.s. sum contains $m$ terms. 
Let us then reverse the point of view and see whether there are situations for which the Kostkas at fermioinc degree $m$ could be computed from those at degree $m-1$. This would be the case if the sum on the l.h.s. could be reduced to a single term. This is actually the case when the diagram of  $\Delta$ has a single removable box. This removable box is thus necessarily the one created  when a circle is changed into a box.

Consider an example. Take $\Delta=(3;2)$, which has one removable box (indicated by $\times$) so that $\Om=(3,1;)$:
\beq \Delta:\quad{\tableau[scY]{&&&\bl\tcercle{}\\&\times\\}}\qquad\qquad \Om:\quad{\tableau[scY]{&&&\bl\tcercle{}\\&\bl\tcercle{}\\}} 
\eeq
The l.h.s. of \eqref{Kdiffm} becomes $v_\La(-1) K_{(3,1;),\La}(q,t)$. Let us set $\La=(2,0;2)$: $\Gamma$ in the sum of the r.h.s. of \eqref{Kdiffm} can take the two values $(1;3,2)$ and $(2;2,1)$:
\beq \La:\quad{\tableau[scY]{&&\bl\tcercle{}\\&\\ \bl\tcercle{}}}\qquad\xrightarrow{\psi}\qquad\La^\boxcircle:\quad
{\tableau[scY]{&&\\&\\ \bl\tcercle{}}}\qquad\text{and}\qquad {\tableau[scY]{&&\bl\tcercle{}\\&\\ \\}}
\eeq 
Hence, the r.h.s. of \eqref{Kdiffm} reduces to $ K_{(3;2)(0;3,2)}(q,t)-
K_{(3;2) (2;2,1)}(q,t)$. We thus end up with the relation:
\beq
-(1-q^2t^2) K_{(3,1;),(2,0;2)}(q,t)= K_{(3;2)(0;3,2)}(q,t)-K_{(3;2) (2;2,1)}(q,t).\eeq
With
\beq K_{(3;2)(0;3,2)}(q,t)=  t +t^2+qt^2+qt^3 +q^2 t^4
 \quad\text{and}\quad K_{(3;2) (2;2,1)}(q,t) =t^2 (1+qt + q^2t +q^2t^2+q^3t^2),
\eeq
we get
\beq -(1-q^2t^2) K_{(3,1;),(2,0;2)}(q,t)=-(1-q^2t^2)(t+qt^2)\qquad \Rw\qquad  K_{(3,1;),(2,0;2)}(q,t)=t+qt^2.\eeq

Another instance where this relation appears to be useful is when $\Delta$ has no removable box. In this case, there are no $\Om$ such that $\Om^\boxcircle=\Delta$, so that  the sum on the l.h.s. of \eqref{Kdiffm} vanishes. The r.h.s. becomes an identity on alternating sums of Kostkas weighted by the factor $v_\Gamma$. Here is an example:
take $\Delta=(3,2;)$, which clearly has no removable box, and $\La=(3,1,0;)$
\beq \La:\quad{\tableau[scY]{&&&\bl\tcercle{}\\&\bl\tcercle{}\\ \bl\tcercle{}}}\qquad\xrightarrow{\psi}\qquad v_\Gamma\;\Gamma:\quad (1-qt)\quad
{\tableau[scY]{&&&\\&\bl\tcercle{}\\ \bl\tcercle{}}},\qquad(1-q^3t^2)\quad {\tableau[scY]{&&&\bl\tcercle{}\\&\\\bl\tcercle{} \\}},
\qquad (1-q^2t)\quad  {\tableau[scY]{&&&\bl\tcercle{}\\&\bl\tcercle{}\\ \\}}.
\eeq
This leads to the relation
\beq
(1-qt)K_{(3,2;),(1,0;4)}(q,t)-(1-q^3t^2)K_{(3,2;),(3,0;2)}(q,t)+
(1-q^2t)K_{(3,2;),(3,1;1)}(q,t)=0.\eeq
With
\beq K_{(3,2;),(1,0;4)}(q,t)=q^2t^2(1+q^2t) \qquad {\rm and}\qquad K_{(3,2;),(3,0;2)}(q,t)=K_{(3,2;),(3,1;1)}(q,t)=t(1+q^2t),
\eeq
this is easily checked to be satisfied.

\begin{appendix}
\section{Monomials and power-sums in superspace in bisymmetric form}
The following proposition corresponds to formulas (128) and (176) of
\cite{IMRN}.  Since it concerns basic results underlying our construction, we nevertheless include its proof for completeness.

\label{mpbi}
\begin{proposition}\label{popo}
The bisymmetric monomial 
deduced from the supermonomial of fermionic degree $m$ is given by 
\begin{equation}\label{mode}
m_{\la,\mu}(x_1, \ldots,x_N)=s_\la(x_1, \ldots, x_m) \,m_\mu(x_{m+1}, \ldots x_N),
\end{equation}
where $s_\la$ and $m_\la$ are respectively the Schur polynomials and the usual monomial functions. 
Similarly, the bisymmetric power-sum reads:
\begin{equation}\label{psde}
p_{\la,\mu}(x_1, \ldots, x_N) = s_\la(x_1, \ldots, x_m) \,  p_\mu (x_1, \ldots, x_N).
\end{equation}
\end{proposition}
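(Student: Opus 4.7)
The plan is to directly expand the supermonomial $m_\Lambda$ and the superpower-sum $p_\Lambda$, extract the coefficient of $\theta_1 \cdots \theta_m$, and divide by the Vandermonde $\Delta(x_1, \ldots, x_m)$. The key observation is that antisymmetrization in the first $m$ variables appears naturally and combines with the Vandermonde to give a Schur function via the bialternant formula $s_\lambda(x_1, \ldots, x_m) = a_{\lambda+\delta^m}(x_1, \ldots, x_m)/\Delta(x_1, \ldots, x_m)$.

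For the monomial \eqref{mode}, I would first reorganize the primed sum defining $m_\Lambda$ by splitting each $\sigma \in S_N$ into its action on $\{1,\ldots,m\}$ and its action on $\{m+1,\ldots,N\}$. Grouping terms by the subset $S = \{\sigma(1),\ldots,\sigma(m)\} \subset \{1,\ldots,N\}$, one obtains
\begin{equation*}
m_\Lambda = \sum_{|S|=m} \theta_S \, a_{\Lambda^a}(x_S) \, m_{\Lambda^s}(x_{S^c}),
\end{equation*}
where $\theta_S = \theta_{i_1}\cdots\theta_{i_m}$ for $S = \{i_1<\cdots<i_m\}$, and the antisymmetric determinant $a_{\Lambda^a}(x_S) = \det(x_{i_j}^{\Lambda_k})_{j,k=1}^m$ arises because $\Lambda^a$ has distinct parts (so permuting the first $m$ positions of $\sigma$ produces genuinely distinct terms carrying the sign $\mathrm{sgn}(\pi)$ from the $\theta$'s). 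The coefficient of $\theta_1 \cdots \theta_m$ forces $S = \{1,\ldots,m\}$, yielding $a_{\Lambda^a}(x_1,\ldots,x_m) \, m_{\Lambda^s}(x_{m+1},\ldots,x_N)$. Dividing by $\Delta(x_1,\ldots,x_m) = a_{\delta^m}(x_1,\ldots,x_m)$ and using $\Lambda^a - \delta^m = \lambda$ and $\Lambda^s = \mu$ gives exactly $s_\lambda(x_1,\ldots,x_m) \, m_\mu(x_{m+1},\ldots,x_N)$.

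For the power-sum \eqref{psde}, the strategy is analogous but cleaner. I would expand
\begin{equation*}
\tilde p_{\Lambda_1} \cdots \tilde p_{\Lambda_m} = \sum_{i_1,\ldots,i_m=1}^N \theta_{i_1} \cdots \theta_{i_m} \, x_{i_1}^{\Lambda_1} \cdots x_{i_m}^{\Lambda_m},
\end{equation*}
observing that the anticommutation $\theta_i^2 = 0$ restricts the sum to tuples with distinct indices, and that the coefficient of $\theta_1 \cdots \theta_m$ receives contributions only from $(i_1,\ldots,i_m)$ ranging over permutations $\pi$ of $\{1,\ldots,m\}$, each weighted by $\mathrm{sgn}(\pi)$. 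This produces $\det(x_i^{\Lambda_k})_{i,k=1}^m = a_{\Lambda^a}(x_1,\ldots,x_m)$. Since the bosonic factor $p_{\Lambda_{m+1}} \cdots p_{\Lambda_N} = p_{\Lambda^s}(x_1,\ldots,x_N)$ does not contain any $\theta$'s, it passes through the extraction unchanged. Dividing by $\Delta(x_1,\ldots,x_m)$ yields $s_\lambda(x_1,\ldots,x_m) \, p_\mu(x_1,\ldots,x_N)$, noting that $p_\mu$ depends on the \emph{union} of the alphabets, in agreement with \eqref{psdea}.

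There is no real obstacle here beyond bookkeeping: the only subtle point is the careful treatment of signs and the meaning of the primed sum in $m_\Lambda$, i.e., arguing that in the fermionic sector the antisymmetrization over the first $m$ positions is automatic (because $\Lambda^a$ has distinct parts, so no monomials are overcounted), while the bosonic part yields the monomial symmetric function (because $\Lambda^s$ is a partition, so the primed sum is essential there). Once these two observations are made explicit, both identities follow by inspection.
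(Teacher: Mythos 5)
Your proposal is correct and follows essentially the same route as the paper's own proof: extract the coefficient of $\theta_1\cdots\theta_m$, recognize that the fermionic part produces the alternant $a_{\Lambda^a}=a_{\lambda+\delta^m}$ in $x_1,\dots,x_m$ (automatic antisymmetrization because $\Lambda^a$ has distinct parts) while the bosonic part yields $m_{\Lambda^s}$ (resp. passes through as $p_{\Lambda^s}$), and divide by the Vandermonde to land on the bialternant form of $s_\lambda$. The only cosmetic difference is that you organize the supermonomial as a sum over $m$-subsets $S$ before specializing to $S=\{1,\dots,m\}$, whereas the paper works directly with the (anti)symmetrizers $\mathcal A_{1,\dots,m}$ and $\mathcal S_{m+1,\dots,N}$ and a normalization factor $f_{\Lambda^s}(1)$; the content is the same.
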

%
\n Note that in the first case the two functions on the rhs depend upon distinct set of variables, which is not the case in the expression of the power-sums. 

\begin{proof} 
Let us first establish \eqref{mode}, starting with the  expression for the super-monomial:
\begin{equation}
m_\La = \frac{1}{f_{\La^s}(1)} {\sum_{\sigma \in S_N}}  \mathcal{K}_\sigma \ta_1 \cdots \ta_m x^\La , \qquad \text{where}\qquad f_{\La^s}(1)= \prod_{i\geq 0} n_{\La^s}(i) !
, \end{equation}
with $n_{\La^s}(i)$ being the number of occurrences of $i$ in $\La^s$, and $m$ is the number of fermions.
The permutation $\mathcal{K}_{ij}$  interchanges the pairs $(x_i,\ta_i)$ and $(x_j,\ta_j)$.
Focusing on the term of $m_\La $ that contains $\ta_1 \cdots \ta_m$, we have
\begin{equation}
[\ta_1 \cdots \ta_m ] \; m_\La = \frac{1}{f_{\La^s}(1)} \mathcal A_{1,\ldots, m} \mathcal S_{m+1, \ldots, N} x^\La
\end{equation}
where $\mathcal A$ is the anti-symmetrizer operator (which, here, acts on the variables $x_1 \cdots x_m$) and $\mathcal S$ is the symmetrizer operator (acting on the variables $x_{m+1} \cdots, x_N$).  The coefficient $f_{\La^s}(1)$ ensures that repeated terms count for $1$ in the expression.
Since $\mathcal A$ and $\mathcal S$ are independent operators (acting on different sets of variables), we can write:
\begin{equation}
[\ta_1 \cdots \ta_m ] \; m_\La = \frac{1}{f_{\La^s}(1)} \left(\sum_{w \in S_m} \varepsilon(w) K_w ( x_1^{\La^a_1} \cdots x_m^{\La^a_m} ) \right) \left(  \sum_{\sigma \in S_{N-m}} K_{\sigma}(x_{m+1}^{\La^s_1} \cdots x_{N}^{\La^s_{N-m}} )  \right)  
\end{equation}
where $\varepsilon(w)$ is  the sign of  permutation $w$ and $K_{ij}$ interchanges $x_i$ and $x_j$.  By dividing this expression by the Vandermonde determinant in the variables $x_1 \cdots x_m$, denoted ${\Delta_m}$,  and using the  decomposition 
 $\La=(\La^a ; \La^s) = (\la + \delta^m ; \mu)$, we obtain:
\begin{equation}
[\ta_1 \cdots \ta_m ] \; \frac{1}{\Delta_m} m_\La = \left( \frac{ \sum_{w \in S_m} \varepsilon(w) K_w ( \bar{x}^{\delta^m + \la} )     }{\prod_{1 \leq i<j\leq m}  (x_i-x_j) } \right) \;  \left(   \frac{1}{f_{\mu}(1)}     \sum_{\sigma \in S_{N-m}} K_{\sigma}(x_{m+1}^{\mu_1} \cdots x_{N}^{\mu_{N-m}} )   \right) =: m_{\la,\mu}
\end{equation}
where $\bar x$ denote the variables $x_1, \cdots , x_m$.  
The term in the first parenthesis is nothing but the definition of a Schur function in the variable $x_1 \cdots x_m$, that is $s_\la(x_1, \ldots, x_m)$.  The second term is simply the monomial function in the other variables over the partition $\mu$, that is, $m_\mu(x_{m+1},\ldots,x_N)$.  We have thus recovered \eqref{mode}. 

For the derivation of \eqref{psde}, we proceed in a similar way.
     The complete power-sum superfunction labelled by $\La$ reads
\begin{equation}
p_\La= \tilde{p}_{\La^a_1} \cdots \tilde{p}_{\La^a_m} p_{\La^s_1} \cdots p_{\La^s_{N-m}}
\end{equation}
where $\tilde p_k = \sum_{i=1}^N\ta_i x_i^k$ and $p_k = \sum_{i=1}^Nx_i^k$.  Taking the coefficient in $\ta_1 \cdots \ta_m$ we have :
\begin{equation}
[\ta_1 \cdots \ta_m] \; p_\La = \left( \mathcal A_{1,\ldots,m} \bar x^{\La^a} \right) \, p_{\La^s}(x_1, \ldots, x_N).
\end{equation}
Divide this by the Vandermonde in the $m$ first variables and we finally obtain
\begin{equation}
[\ta_1 \cdots \ta_m] \; \frac{1}{\Delta_m} p_\La = \left( \frac{ \sum_{w \in S_m} \varepsilon(w) K_w ( \bar{x}^{\delta^m + \la} )     }{\prod_{1 \leq i<j\leq m}  (x_i-x_j) } \right) p_\mu(x_1, \ldots, x_N) =: p_{\la, \mu}
\end{equation}
where the term in parenthesis is the Schur functions $s_\la(x_1, \ldots, x_m)$.  This gives \eqref{psde}.
\end{proof}

\section{Induced properties on pairs of partitions: conjugation and dominance order}
\label{proof_order}
\sas

As mentioned in the introduction, a superpartition is a pair of partitions of the form $\La=(\La^a;\La^s)$ where the partition $\Lambda^a$ has $m$ distinct
parts (the $m$-th one can be equal to 0).   We define
\beq
 \Lambda^*=\Lambda^a \cup \Lambda^s \quad  {\rm and} \quad
\Lambda^\circledast=
 (\Lambda^a+1^{m}) \cup \Lambda^s \, .
\eeq
It is manifest that the pair of partitions $(\La^\cd,\La^*)$ fixes  the superpartition $\La$. A diagrammatic representation of $\La$ is obtained from
the Ferrers diagram of $\Lambda^\circledast$ by changing into circles the 
cells of $\Lambda^\circledast/\Lambda^*$.   For instance, if $\La
=(3,1,0;2,1)$, we have
\begin{equation} \label{diag}
     \La^\cd:\quad{\tableau[scY]{&&&\\&\\&\\\\ \\ }} \qquad
    \qquad     \La^*:\quad{\tableau[scY]{&&\\&\\ \\ \\ }} \;, \eeq
which gives
\beq \Lambda: \quad{\tableau[scY]{&&&\bl\tcercle{}\\&\\&\bl\tcercle{}\\ \\
    \bl\tcercle{}}}.\eeq

Recall that we can associate to a superpartition $\Lambda$ of fermionic degree $m$ a pair of partitions
$\lambda$ and $\mu$ in the following way:
 \begin{equation} \label{bijec2}
\La =(\La^a ; \La^s) \lrw (\la, \mu )= (\La^a - \delta^m, \La^s) .
\end{equation}
When $m \geq n$, this establishes an obvious bijection between superpartitions
$(\La^a ; \La^s)$ of fermionic degree $m$ 
such that $|\La^a|+|\La^s|=n+m(m-1)/2$ and pairs of partitions 
$(\la, \mu)$ such that $|\lambda|+|\mu|=n$.
Before describing how the conjugation and the dominance-ordering properties are
induced from superpartitions to pairs of partitions,
we  establish some elementary results that will be used for this analysis. 
In that regard, it is convenient to  first introduce a convention concerning the positions of the boxes corresponding to $\lambda$
within the diagrammatic representation
of the superpartition $\La$.
We choose to place the boxes of $\la$ (the boxes marked with a $\circ$ in the example below) in {\it columns} that are not fermionic (that is, that do not end with a circle).
For instance, consider $\la=(2,1)$, $\mu=(3,1)$ and the corresponding $\Lambda$ for
$m=3,4,5$ (all in the non-stable sector, illustrating the fact that the two partitions get disentangled before $m\geq n=7$):
\beq
\label{ex1} {\tableau[scY]{\circ& \circ\\ \circ}} \quad
 {\tableau[scY]{\bullet&\bullet& \bullet\\ \bullet\\ }} \quad
\stackrel{3}
{\longleftrightarrow}\quad
  {\tableau[scY]{&\circ&&\circ&\bl\tcercle{}\\\bullet&\bullet& \bullet \\&\circ&\bl\tcercle{} \\ \bullet\\ \bl\tcercle{} \\ }}
  \quad\text{or}\quad\stackrel{4}{\longleftrightarrow}\quad
  {\tableau[scY]{&&\circ&&\circ&\bl\tcercle{}\\&&\circ&\bl\tcercle{}\\\bullet&\bullet& \bullet \\ &\bl\tcercle{} \\ \bullet\\ \bl\tcercle{} \\ }}
  \quad\text{or}\quad \stackrel{5}{\longleftrightarrow} \quad
  {\tableau[scY]{&&&\circ&&\circ&\bl\tcercle{}\\&&&\circ&\bl\tcercle{}\\\bullet&\bullet& \bullet \\ &&\bl\tcercle{}\\&\bl\tcercle{} \\ \bullet\\ \bl\tcercle{} \\ }}.
\eeq

\begin{lemma}\label{lem2}
Let  $\la,\mu,\La$ and $m$ be defined as in \eqref{bijec2} and suppose that
$m\geq n=|\la|+|\mu|$.  If $\ell(\la)=\ell$ then
the $\ell$-th entry of $\La^a$ is strictly larger than the first entry of 
$\La^s$, that is,
\beq\La^a_{\ell}>\La^s_1=\mu_1 \, .
\eeq
In particular, in the diagram of the superpartition $\La$, the cells marked
with a $\circ$ appear
strictly above those marked with a $\bullet$.
\end{lemma}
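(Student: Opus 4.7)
The plan is to translate the conditions into a direct numerical inequality using the definition $\Lambda^a = \lambda + \delta^m$, and then use the hypothesis $m \geq n$ to conclude. Concretely, from \eqref{bijec2} we have $\Lambda^a_i = \lambda_i + (m-i)$ for every $i \in \{1,\ldots,m\}$. In particular, $\Lambda^a_\ell = \lambda_\ell + (m-\ell)$, and since $\ell(\lambda) = \ell$ we know that $\lambda_\ell \geq 1$, giving the lower bound $\Lambda^a_\ell \geq m - \ell + 1$.

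Next, I would estimate $\ell$ and $\mu_1$. Because $\lambda$ has $\ell$ nonzero parts each of size $\geq 1$, we get $\ell \leq |\lambda|$; and trivially $\mu_1 \leq |\mu|$. Adding these two inequalities yields $\ell + \mu_1 \leq |\lambda| + |\mu| = n$. The stability hypothesis $m \geq n$ then gives $m - \ell \geq \mu_1$, and combining with the bound from the previous paragraph produces
\begin{equation*}
\Lambda^a_\ell \;\geq\; m - \ell + 1 \;\geq\; \mu_1 + 1 \;>\; \mu_1 \;=\; \Lambda^s_1,
\end{equation*}
which is the numerical statement of the lemma.

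For the diagrammatic conclusion, I would argue as follows. In the convention fixed just before the lemma, the cells marked $\circ$ sit only in the fermionic rows $1,\ldots,\ell$ (the rows whose $\Lambda^*$-lengths are $\Lambda^a_1,\ldots,\Lambda^a_\ell$), since $\lambda$ has length exactly $\ell$. The cells marked $\bullet$ sit only in the non-fermionic rows, whose $\Lambda^*$-lengths are the parts of $\mu = \Lambda^s$, each bounded above by $\mu_1$. The inequality $\Lambda^a_\ell > \mu_1$ just proved, together with $\Lambda^a_1 > \cdots > \Lambda^a_\ell$, shows that each of the first $\ell$ fermionic rows is strictly longer than every non-fermionic row. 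Since the rows in the Ferrers diagram of $\Lambda^*$ (and therefore of $\Lambda$) are ordered by weakly decreasing length, the first $\ell$ fermionic rows must lie strictly above every non-fermionic row, and hence the $\circ$-cells lie strictly above the $\bullet$-cells.

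There is no real obstacle here: the bijection between $\Lambda$ and $(\lambda,\mu)$ is purely additive, so the content of the lemma reduces to the crude estimate $\ell + \mu_1 \leq n \leq m$. The only step requiring care is the diagrammatic reformulation, where one must check that the chosen placement of the $\circ$- and $\bullet$-boxes is compatible with the row-length ordering of $\Lambda^*$; this is exactly what the strict inequality $\Lambda^a_\ell > \mu_1$ guarantees.
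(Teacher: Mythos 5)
Your proof is correct and follows essentially the same route as the paper's: both reduce the claim to the estimate $m \geq n \geq |\la| + \mu_1 \geq \ell + \mu_1$ together with $\la_\ell \geq 1$, yielding $\La^a_\ell = \la_\ell + m - \ell > \mu_1$. Your added discussion of the diagrammatic consequence is a fine elaboration of what the paper leaves implicit.
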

\begin{proof} The condition $m\geq n$ implies 
\beq\label{boundm}
 m 
\geq \la_1+\cdots +\la_\ell+\mu_1\geq \ell+\mu_1>\ell-\la_\ell+\mu_1 \, .
\eeq
Hence
\beq \La^a_{\ell}=\la_\ell +m-\ell> \mu_1 = \Lambda_1^s.
\eeq
\end{proof}
The bound $m\geq n$ is not the optimal one ensuring the separation of $\lambda$
and $\mu$ in $\Lambda$ but it is sufficient for  our purpose. 

\begin{lemma}\label{sepap}
In the diagram of the superpartition $\La$, when $m\geq n$, 
the cells marked
with a $\circ$ appear strictly to the right of
those marked with a $\bullet$.
\end{lemma}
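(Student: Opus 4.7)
The plan is to compare the columns occupied by the $\circ$ cells with those occupied by the $\bullet$ cells. Since the $\bullet$ cells come from the rows of $\La^s=\mu$, they are confined to the columns $1,2,\dots,\mu_1$, so it will suffice to show that every $\circ$ cell lies in a column strictly greater than $\mu_1$. By the convention that places the boxes of $\la$ in non-fermionic columns, this further reduces to proving that each of the columns $1,2,\dots,\mu_1$ is fermionic, i.e.\ ends with a circle at its bottom cell.

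The key intermediate step will be to identify the fermionic columns explicitly. I will show that for each $j=1,\dots,m$ the column of index $\La^a_j+1$ is fermionic: in the sorted diagram of $\La^\cd$, the row associated to $\La^a_j$ is the last row of length at least $\La^a_j+1$ (the strict decrease of $\La^a_1+1 > \dots > \La^a_m+1$ guarantees that no subsequent $\La^a$-row reaches this column, and by Lemma~\ref{lem2} the rows coming from $\mu$ have length $\leq \mu_1 < \La^a_{\ell(\la)}$, so for $j\leq\ell(\la)$ they also do not reach it; for $j>\ell(\la)$ the argument is automatic since $\La^a_j+1\leq m-\ell(\la)\leq$ the next $\La^a$-value). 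Hence the circle placed at the end of that row is the bottom cell of its column. Restricting to $j\in\{\ell(\la)+1,\dots,m\}$, for which $\la_j=0$ and hence $\La^a_j+1=m-j+1$, these values sweep precisely the consecutive integers $1,2,\dots,m-\ell(\la)$, so the first $m-\ell(\la)$ columns are all fermionic.

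The final step will invoke the hypothesis $m\geq n$: since $n=|\la|+|\mu|\geq \ell(\la)+\mu_1$ (using $|\la|\geq\ell(\la)$ and $|\mu|\geq\mu_1$), one obtains $m-\ell(\la)\geq\mu_1$, so every column in $\{1,2,\dots,\mu_1\}$ is fermionic and the lemma follows. The main obstacle I anticipate is this identification step: when rows of $\La^a+1^m$ and $\La^s$ have coinciding length, the sorted diagram is only defined up to a choice of ordering among the ties, but any such ambiguity affects only the row in which a given circle appears, not the column in which it sits, so the set of fermionic columns is unambiguously $\{\La^a_j+1: 1\leq j\leq m\}$ and the argument goes through.
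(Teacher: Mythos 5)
Your argument is correct and follows essentially the same route as the paper's: the paper's proof simply asserts ``by inspection'' that the leftmost column of $\lambda$ is inserted in column $m-\ell(\lambda)+1$ of $\Lambda$ (equivalently, that the first $m-\ell(\lambda)$ columns are all fermionic, which is exactly what you verify in detail via the identification of the fermionic columns as $\{\Lambda^a_j+1\}$), and then concludes from $m\geq n\geq \ell(\lambda)+\mu_1$ precisely as you do. Your write-up fills in the inspection step that the paper leaves implicit, but the substance and the final inequality are identical.
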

 \begin{proof}
By inspection, we see that the column in $\La$ where the $\circ$ corresponding to the first column of $\la$ are inserted is the $m-\la_1'+1$ one. The lemma 
will then hold if
$m-\la_1'+1>\mu_1$.  Given $\la_1'=\ell(\lambda)=\ell$, 
this corresponds to $m\geq \ell+\mu_1$, which is a consequence of $m\geq n$ as seen in \eqref{boundm}.
\end{proof}
 We now come to the conjugation property. 
\begin{lemma} \label{conju}
Suppose that as in \eqref{bijec2}, we have
$\Lambda \leftrightarrow (\lambda,\mu)$ with $m\geq n$. Then
\begin{equation}
\Lambda' \leftrightarrow  (\mu', \la') \, .
\end{equation}
\end{lemma}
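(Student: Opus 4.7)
The plan is to reduce the statement to a partition identity and then conjugate. Our target is to show that the conjugates $(\Lambda^*)'$ and $(\Lambda^\circledast)'$---which coincide with $(\Lambda')^*$ and $(\Lambda')^\circledast$ by the diagrammatic definition of conjugation---equal the $*$ and $\circledast$ partitions associated to the pair $(\mu',\lambda')$, namely $(\mu' + \delta^m) \cup \lambda'$ and $(\mu' + \delta^{m+1}) \cup \lambda'$. Since the skew shape $(\Lambda')^\circledast / (\Lambda')^*$ then consists of exactly $m$ cells placed at the ends of the $m$ distinct rows of $\mu' + \delta^m$, the identification $(\Lambda')^a = \mu' + \delta^m$ and $(\Lambda')^s = \lambda'$ is forced, giving $\Lambda' \leftrightarrow (\mu',\lambda')$.

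The key combinatorial step is the partition identity
\begin{equation*}
(\lambda + \delta^N) \cup \mu \; = \; (\mu \cup \delta^N) + \lambda, \qquad \text{valid whenever } N \geq \ell(\lambda) + \mu_1,
\end{equation*}
where $\lambda$ on the right-hand side is padded with trailing zeros. Since $m \geq n$ gives $m \geq \ell(\lambda) + \mu_1$ (as shown in the proof of Lemma~\ref{lem2}), this identity applies for both $N = m$ and $N = m+1$. One verifies it by a direct row-by-row comparison: under the hypothesis, all parts of $\mu$ are $\leq \mu_1 \leq N - \ell(\lambda)$, so the top $\ell(\lambda)$ parts of $\mu \cup \delta^N$ are exactly $N-1, N-2, \ldots, N-\ell(\lambda)$, coming entirely from $\delta^N$; adding $\lambda$ componentwise promotes these rows to $\lambda_i + N - i$ and leaves the lower rows untouched (since $\lambda_i = 0$ for $i > \ell(\lambda)$). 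On the left-hand side, Lemma~\ref{lem2} guarantees that the top $\ell(\lambda)$ rows of $(\lambda + \delta^N) \cup \mu$ come from $\lambda + \delta^N$ with the same values $\lambda_i + N - i$ (all strictly exceeding $\mu_1$), and the remaining rows form the same merge of the staircase tail $\{N-\ell(\lambda)-1, \ldots, 1, 0\}$ with $\mu$ on both sides.

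Applying the classical conjugation rules $(\alpha \cup \beta)' = \alpha' + \beta'$, $(\alpha + \beta)' = \alpha' \cup \beta'$, together with $(\delta^N)' = \delta^N$, to the right-hand side of the key identity yields
\begin{equation*}
(\Lambda^*)' \; = \; \bigl((\mu \cup \delta^m) + \lambda\bigr)' \; = \; (\mu' + \delta^m) \cup \lambda',
\end{equation*}
and analogously $(\Lambda^\circledast)' = (\mu' + \delta^{m+1}) \cup \lambda'$. Since diagrammatic conjugation preserves both the underlying shape and the circle positions (each reflected across the main diagonal), we have $(\Lambda')^* = (\Lambda^*)'$ and $(\Lambda')^\circledast = (\Lambda^\circledast)'$, from which the desired identification $\Lambda' \leftrightarrow (\mu',\lambda')$ follows. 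The principal obstacle is the verification of the key partition identity above; once it is in hand, everything else is routine conjugation algebra.
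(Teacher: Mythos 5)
Your proof is correct and rests on the same underlying fact as the paper's: when $m\geq n$ the cells of $\lambda$ and $\mu$ occupy separated regions of the diagram of $\Lambda$ (Lemmas~\ref{lem2} and~\ref{sepap}), so reflecting in the diagonal simply swaps and conjugates the two partitions. The paper asserts this directly from the picture, while you make it rigorous through the identity $(\lambda+\delta^N)\cup\mu=(\mu\cup\delta^N)+\lambda$ together with the rules $(\alpha\cup\beta)'=\alpha'+\beta'$ and $(\alpha+\beta)'=\alpha'\cup\beta'$; this is a sound elaboration (and your identity is precisely the description of $\Lambda^*$ as $(\delta^m\cup\mu)+\lambda$ that reappears in the proof of Proposition~\ref{lemdo}).
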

\begin{proof}
This follows directly from the  definition of the  conjugation for superpartitions, obtained by the interchange of rows and columns.
\end{proof}
An example will make this completely obvious: consider the pair $(2),\,(3,1)$, so that, with $m=6$:
\beq\label{ex2}
 {\tableau[scY]{\circ& \circ}} \quad
 {\tableau[scY]{\bullet&\bullet& \bullet\\ \bullet\\ }} \quad
\longleftrightarrow\quad
  {\tableau[scY]{&&&&&\circ&\circ&\bl\tcercle{}\\&&&&\bl\tcercle{}\\&&&\bl\tcercle{}\\\bullet&\bullet& \bullet \\&&\bl\tcercle{}\\&\bl\tcercle{} \\ \bullet\\ \bl\tcercle{} \\ }}\xrightarrow{\La\rw\La'}\quad
    {\tableau[scY]{&&&\bullet&&&\bullet&\bl\tcercle{}\\&&&\bullet&&\bl\tcercle{}\\&&&\bullet&\bl\tcercle{}
    \\&&\bl\tcercle{}\\&\bl\tcercle{} \\ \circ\\ \circ\\ \bl\tcercle{} \\ }} 
\longleftrightarrow \quad
 {\tableau[scY]{\bullet&\bullet\\ \bullet\\ \bullet\\ }} \quad    {\tableau[scY]{\circ\\ \circ\\}}
\eeq
We observe that 
our convention 
of placing the boxes of the first partition into bosonic columns of $\La$
is preserved by the conjugation operation.

Recall that  the dominance ordering on bi-partitions
was defined in \eqref{domibi}.  In the following lemma we give three equivalent 
form of the second condition in the dominance ordering on bi-partitions. 
For this purpose, it will prove convenient to
relax the condition $|\lambda|=|\mu|$ in the dominance ordering
and say that
\beq \la\geq \mu \quad\text{if}\quad \la_1 + \cdots + \la_j \geq \mu_1 + \cdots + \mu_j \quad \forall j, \eeq
even in cases where $|\la|\ne |\mu|$.  Observe however that the equivalence
\beq
 \la \geq \mu \iff \la' \leq \mu' 
\eeq
only holds if $|\lambda|=|\mu|$.

\begin{lemma} \label{lemmaorder}
Suppose that $|\lambda| \geq |\omega|$.
Then the following three statements are equivalent:
\begin{enumerate}
\item $|\la|+ \mu \geq  |\omega |+ \eta $\\ 
\item  $\mu' \cup (1^{|\lambda|-|\omega|})  \leq \eta'$  \\
\item $\mu'\leq \eta' $
\end{enumerate}
\end{lemma}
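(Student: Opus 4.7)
The plan is to reduce $(1)$ to a classical dominance relation between two partitions of the same total degree, then exploit the conjugation symmetry of dominance (valid for partitions of equal weight), and finally bridge to $(3)$ via a careful partial-sum comparison. Throughout I will work in the intended setting of the paper, in which $|\lambda|+|\mu|=|\omega|+|\eta|$ (bi-partitions of total degree $n$), and set $k := |\lambda|-|\omega| = |\eta|-|\mu| \geq 0$.

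To prove $(1)\iff(2)$, I would introduce the auxiliary partition
\[
\tilde\mu := (\mu_1+k,\mu_2,\mu_3,\ldots),
\]
which is a partition because $k\geq 0$ and satisfies $|\tilde\mu|=|\mu|+k=|\eta|$. Rewriting the inequalities of $(1)$ as $\tilde\mu_1+\cdots+\tilde\mu_j \geq \eta_1+\cdots+\eta_j$ shows that $(1)$ is nothing but the classical dominance $\tilde\mu \geq \eta$. Since $|\tilde\mu|=|\eta|$, this is equivalent to the conjugate inequality $\tilde\mu' \leq \eta'$. A direct inspection of Ferrers diagrams shows $\tilde\mu' = \mu' \cup (1^k)$: adding $k$ cells to the first row of $\mu$ creates $k$ new columns of height one in the conjugate, so $\tilde\mu'_j = \mu'_j$ for $j\leq\mu_1$, $\tilde\mu'_j=1$ for $\mu_1<j\leq\mu_1+k$, and $\tilde\mu'_j=0$ thereafter; since $\mu'_{\mu_1}\geq 1$, this string is already weakly decreasing and thus coincides with the reordered union $\mu'\cup(1^k)$. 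This gives $(1)\iff(2)$.

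For $(2)\iff(3)$, the direction $(2)\Rightarrow(3)$ is immediate because the partial sums of $\mu'\cup(1^k)$ are term-wise at least those of $\mu'$. For the converse $(3)\Rightarrow(2)$ I proceed by a case analysis on $j$. If $j\leq\mu_1$, the first $j$ parts of $\mu'\cup(1^k)$ coincide with those of $\mu'$, so $(3)$ gives $(2)$ directly. If $\mu_1<j\leq\mu_1+k$, the required inequality is $|\mu|+(j-\mu_1)\leq\sum_{i=1}^j\eta'_i$; it is obtained by adding the $j=\mu_1$ instance of $(3)$, namely $|\mu|\leq\sum_{i=1}^{\mu_1}\eta'_i$, to the elementary bound $\sum_{i=\mu_1+1}^{j}\eta'_i\geq \min(j,\eta_1)-\mu_1$, which holds since each $\eta'_i$ is $\geq 1$ precisely while $i\leq\eta_1$; the case where $j$ exceeds $\eta_1$ is closed via the total-size equality $|\eta|=|\mu|+k$. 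Finally, for $j>\mu_1+k$ both sides of $(2)$ stabilize at $|\eta|=|\mu|+k$ and the inequality becomes equality.

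The main obstacle is the middle case $\mu_1<j\leq\mu_1+k$ in $(3)\Rightarrow(2)$: one must combine the partial-sum bound from $(3)$ at $j=\mu_1$, the column-count observation $\eta'_i\geq 1$ for $i\leq\eta_1$, and the total-degree equality $|\eta|=|\mu|+k$ to bridge the gap. Apart from this bookkeeping, the proof is a mechanical transcription of the conjugation identity and the definitions.
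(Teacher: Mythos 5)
Your proof is correct and follows essentially the same route as the paper: $(1)\iff(2)$ via conjugating the equal-weight dominance $\mu+(|\lambda|-|\omega|)\geq\eta$, $(2)\Rightarrow(3)$ trivially, and $(3)\Rightarrow(2)$ by partial-sum bookkeeping (your case analysis just spells out what the paper compresses into the remark that $(1^{|\lambda|-|\omega|})$ is dominated by the leftover columns of $\eta'$). The only point worth one extra line is the final case $j>\mu_1+k$: that the right-hand side has already stabilized at $|\eta|$ there follows from the $j=\mu_1+k$ instance forcing $\ell(\eta')\leq\mu_1+k$, which you use implicitly.
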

\begin{proof}
We have that (1) and (2) are equivalent since
\beq 
 |\la|+ \mu_1 + \cdots + \mu_j \geq    |\omega |+ \eta_1 + \cdots + \eta_j 
\quad \forall j
\iff \mu + (|\lambda|-|\omega|) \geq \eta \iff 
\mu' \cup (1^{|\lambda|-|\omega|})  \leq
\eta' \, ,
\eeq
where we stress that $ \mu + (|\lambda|-|\omega|)$ stands for the partition
$(\mu_1+|\lambda|-|\omega|,\mu_2,\mu_3,\dots)$.   It is immediate that
(2) implies (3).  Finally, (2)
follows from (3) since $(1^{|\lambda|-|\omega|})$ is dominated by any partition (and in particular by the partition
$(\eta_1'+\dots+\eta'_{\ell+1}-|\mu|,\eta'_{\ell+2},\eta'_{\ell+3},\dots)$, where $\ell=\ell(\mu')$).

\end{proof}

The following proposition was essential to deduce Theorem~\ref{Pdef1}
from Theorem~\ref{theo1}, and thus to connect the Macdonald polynomials in superspace 
to the  double Macdonald polynomials. 
\begin{proposition} \label{lemdo} 
Suppose that
$\Lambda \leftrightarrow (\lambda,\mu)$ and $\Omega  \leftrightarrow (\omega,\eta)$, with $m\geq n=|\lambda|+|\mu|=|\omega|+|\eta|$.  Then
\beq
(\la,\mu) \geq (\omega,\eta) \iff \Lambda \geq \Omega
\eeq
\end{proposition}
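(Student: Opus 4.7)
The plan is to translate the pair of partial-sum inequalities $\Lambda^*\geq\Omega^*$ and $\Lambda^\circledast\geq\Omega^\circledast$ defining $\Lambda\geq\Omega$ into the two families of inequalities of \eqref{domibi}, exploiting the separation provided in the stable range $m\geq n$ by Lemmas~\ref{lem2} and~\ref{sepap}. In that range the top $\ell(\lambda)$ rows of $\Lambda^*$ are exactly the strict parts $\lambda_i+m-i$ of $\Lambda^a=\lambda+\delta^m$, sitting strictly above every part of $\mu$ and every entry of the staircase tail $m-\ell(\lambda)-1,\dots,1,0$; the analogous decomposition with $m+1$ holds for $\Lambda^\circledast$, and similarly for $\Omega^*$ and $\Omega^\circledast$. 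The partial sums of $\Lambda^*$ thus split into a ``$\lambda$-only'' top block and a mixed ``$\mu$-with-tail'' bottom block.

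For the forward direction, I would treat indices $k\leq\min(\ell(\lambda),\ell(\omega))$ directly: here
\[
(\Lambda^*)_1+\cdots+(\Lambda^*)_k \;=\; \lambda_1+\cdots+\lambda_k + km - \binom{k+1}{2},
\]
so the $\delta^m$-contributions cancel against those of $\Omega^*$ and the inequality $\Lambda^*\geq\Omega^*$ at index $k$ collapses precisely to the first condition of \eqref{domibi}; the same calculation with $m+1$ in place of $m$ handles $\Lambda^\circledast$. For the remaining indices I would pass to the conjugates via the identity $(\Lambda^*)'_j=\mu'_j+|\{i\leq m:\lambda_i+m-i\geq j\}|$, and check that the second term equals $m-j$ for every $j\leq m-\ell(\lambda)$, so that the $\lambda$-dependence drops out on the overlapping range. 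The partial-sum dominance of $(\Lambda^*)'$ against $(\Omega^*)'$ then reduces, on that range, to partial-sum dominance of $\mu'$ against $\eta'$, and Lemma~\ref{lemmaorder}, in the equivalence (1)$\Leftrightarrow$(3), converts this into the second condition of \eqref{domibi}. The converse is obtained by the same analysis read contrapositively, with the condition on $\Lambda^\circledast$ serving to rule out edge-case ties between the smallest strict part of $\Lambda^a$ and the first part of $\mu$ that could otherwise let $\Lambda^*\geq\Omega^*$ hold spuriously.

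The main obstacle will be the combinatorial bookkeeping in the bottom block when $\ell(\lambda)\neq\ell(\omega)$: the staircase tail interleaves differently with $\mu$ on the $\Lambda$ side than with $\eta$ on the $\Omega$ side, and indices beyond $\min(m-\ell(\lambda),m-\ell(\omega))$ must be treated carefully because the clean identity above no longer applies. One must verify that the staircase contributions to both partial sums nevertheless combine additively, so that the only residual comparison reduces to the $\mu$-versus-$\eta$ inequality governed by Lemma~\ref{lemmaorder}, with no new constraint arising from the interface indices. Once this bookkeeping is in place, the pair of conditions on $\Lambda^*$ and $\Lambda^\circledast$ simultaneously forces and is forced by the pair of conditions of \eqref{domibi}, and the equivalence follows.
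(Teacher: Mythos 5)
Your setup is the same as the paper's (separation of $\lambda$ and $\mu$ inside $\Lambda^*$ in the stable range, reduction of the second condition of \eqref{domibi} via Lemma~\ref{lemmaorder}, conjugation symmetry), and your two index-range computations are correct as far as they go: the row partial sums for $k\leq\min(\ell(\lambda),\ell(\omega))$ do reduce to the first condition of \eqref{domibi}, and the identity $(\Lambda^*)'_j=\mu'_j+(m-j)$ for $j\leq m-\ell(\lambda)$ is right. But what you flag as ``the main obstacle'' is not bookkeeping that can be deferred --- it is the entire content of the proposition. Dominance $\Lambda^*\geq\Omega^*$ requires the partial-sum inequalities at \emph{every} index, and the two ranges you control (rows up to $\min(\ell(\lambda),\ell(\omega))$, columns up to $m-\max(\ell(\lambda),\ell(\omega))$) leave uncovered exactly the indices where the staircase tail, the parts of $\mu$ (resp.\ $\eta$), and the tails of $\lambda$ (resp.\ $\omega$) interleave differently on the two sides; nothing in your sketch shows that the inequalities there are implied by the ones you have checked, and no general reduction to ``corner'' indices aligns with your ranges. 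The paper avoids this head-on comparison altogether: it argues by contradiction, taking the highest row $i$ where $\Lambda^*\geq\Omega^*$ fails and counting cells below row $i$ (using OBS~1 to control where the $\circ$'s can sit), which is a genuinely different mechanism from the direct index-matching you propose.

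A second concrete gap is in the reverse direction. Your appeal to the equivalence $(1)\Leftrightarrow(3)$ of Lemma~\ref{lemmaorder} requires the hypothesis $|\lambda|\geq|\omega|$, and deriving the first condition of \eqref{domibi} for $i>\min(\ell(\lambda),\ell(\omega))$ likewise requires $|\lambda|\geq|\omega|$; neither is addressed. The paper needs a separate argument here: assuming $|\omega|>|\lambda|$, it evaluates the partial sums of $\Lambda^\circledast$ and $\Omega^\circledast$ at row $k=|\lambda|+1$ and uses OBS~2 (the highest $\bullet$ of $\Lambda^\circledast$ lies below row $k$) to contradict $\Lambda^\circledast\geq\Omega^\circledast$. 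This is also the actual role of the $\circledast$-condition --- it is $\Lambda^*$ with $m$ replaced by $m+1$, giving the extra separation needed for this step --- rather than the ``ruling out spurious ties'' role you assign to it. Until the interface indices and the inequality $|\lambda|\geq|\omega|$ are handled, the proposal is a plausible plan but not a proof.
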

\begin{proof} 

For the purpose of this proof, we  first modify our convention for the insertion of the boxes of  $\la$ and $\mu$ within $\La$ into a prescription that describes $\La^*$ built 
from the core $\delta^m$.  Reconsider example \eqref{ex1} but now  with $m=7$ and identify the $\circ$ and $\bullet$ as the upper  and lower boxes, respectively, that  lie outside the sub-diagram $\delta^m$. This yields:
\beq\label{ex2bis}
 {\tableau[scY]{\circ& \circ \\ \circ}} \quad
 {\tableau[scY]{\bullet&\bullet& \bullet\\ \bullet\\ }} \quad
\stackrel{7}{\longleftrightarrow} \quad
  {\tableau[scY]{&&&&&\circ&&\circ&\bl\tcercle{}\\&&&&&\circ&\bl\tcercle{}
  \\&&&&\bl\tcercle{} \\&&&\bl\tcercle{} \\\bullet&\bullet& \bullet \\ &&\bl\tcercle{}\\&\bl\tcercle{} \\ \bullet\\ \bl\tcercle{} \\ }
}
\xrightarrow{\Lambda^*}\quad
  {\tableau[scY]{&&&&&&  \circ& \circ&\bl\\&&&&& \circ\\&&&\\&&  \\&& \bullet\\&\bullet \\  \bullet\\  \bullet }}
\eeq
where the unmarked boxes in $\La^*$ are those of  $\delta^7$. More generally,
$\Lambda^*$ can  be obtained by adding $\lambda$ (resp. $\mu'$)
to the top rows (resp. leftmost columns)  of $\delta^m$. In other words, 
\beq \La^*=  \left(\delta^m+\mu\right)'+ \lambda. \eeq 
It is clear from the above diagrams that the row of $\La$ that corresponds to the first row of $\mu$ (namely, the $(m+1-\mu_1)$-th row) is also the highest row in $\La^*$ containing a $\bullet$. Therefore
Lemmas \ref{lem2} and \ref{sepap} imply  that if $m \geq n$ and
$\Lambda \leftrightarrow (\lambda,\mu)$, then $\La^*$ can always be described as above, in particular, with the $\circ$ and $\bullet$ separated from each others.
In the context of this proof, 
$\Lambda^*$ will always stand for
 the diagrammatic representation \eqref{ex2bis}.
The following elementary observation will be fundamental. 
\begin{itemize}
\item[OBS 1:] Suppose that
the highest $\bullet$ in $\Lambda^*$ lies in row $r$ and column 
$m+1-r$ (the highest $\bullet$ is always alone in its row by construction).  
If $i<r$, then
the number of $\circ$'s strictly below row $i$ is not larger than the number
of rows ($r-i-1$) between rows $i$ and $r$.  Similarly, if $i<r$, then
the number of $\circ$'s strictly to the left of column $m+1-i$ is not larger than 
the number of columns ($r-i-1$) between columns $m+1-i$ and $m+1-r$.
\end{itemize}
This is seen as follows. The number of $\circ$ below the $i$-th row of $\La^*$ is $\la_{i+1}+\cdots+\la_\ell$, where $\ell=\ell(\la)$. The statement is that
\beq \la_{i+1}+\cdots+\la_\ell\leq r-i-1=m+1-\mu_1-i-1\eeq
or equivalently, that
\beq \mu_1+i+\la_{i+1}+\cdots+\la_\ell\leq m.\eeq
But since 
\beq \mu_1+i+\la_{i+1}+\cdots+\la_\ell\leq |\mu|+|\la| = n\leq m,\eeq
the result follows. The column case is treated in a similar way, the number of $\circ$ at the left of the column $m+1-i$ being $\leq 1+\la_{i+2}+\cdots+ \la_\ell$.

First, we show that given our hypotheses, the following implication holds:
\beq \label{implicatinunedirection}
(\la,\mu) \geq (\omega,\eta) \implies \Lambda \geq \Omega \, .
\eeq
The main step towards that goal is to establish that
\beq \label{implicationordre}
(\la,\mu) \geq (\omega,\eta) \implies \Lambda^* \geq \Omega^* \, .
\eeq
We proceed by contradiction.  Suppose that $(\lambda,\mu) \geq (\omega,\eta)$ and
$\Lambda^* \not \geq \Omega^*$, and let $i$ be the {\it highest} row such that
\beq \label{ineq1}
\Lambda_1^*+\cdots \Lambda_i^* < \Omega_1^*+\cdots+\Omega_i^* \, .
\eeq 
Note that given these conditions, we have necessarily $\Omega_i^*>\Lambda_i^*$.

{From} the first condition in \eqref{domibi} and the construction 
\eqref{ex2bis} of $\Lambda^*$, we can conclude that $i> \ell(\omega)$.  
Suppose that $i$ lies
above the highest $\bullet$ in $\Lambda^*$ (which we assume lies in a certain row $r$).
Let $d$ 
be the number of $\circ$ in $\Lambda^*$ 
strictly to the left of column $m+1-i$, so that $d\leqÊ\la_{i+1}+\cdots+\la_\ell\leq |\la|-|\om|$.
Since
$d \leq  |\lambda|-|\omega|$ we can use Lemma~\ref{lemmaorder}
to deduce 
\beq \label{contra1}
\mu_1'+\dots+\mu_{\mu_1}'+1+\cdots+1=|\mu'|+d \leq \eta_1'+\cdots+ \eta'_{\mu_1+d}\, . 
\eeq 
On the other hand,
we have $d \leq r-i-1=m-\mu_1-i$ by OBS 1, which implies that the
length ($\mu_1+d$) of the 
partition $\mu' \cup 1^d$ is smaller than $m+1-i$ (and thus the $\bullet$'s 
corresponding to $\eta_1',\dots,\eta_{\mu_1+d}'$ are all below row $i$).
{From} \eqref{ineq1} and $|\Lambda^*|=|\Omega^*|$, we also
get that 
below row $i$ there are more cells of 
$\Lambda^*$ than cells of $\Omega^*$.  Therefore 
\beq \label{contra2}
|\mu'|+d > \text{number of cells of $\eta$ below row $i$} \geq 
\eta_1'+\cdots+ \eta'_{\mu_1+d} \, ,
\eeq
which leads to a contradiction (compare \eqref{contra1} and \eqref{contra2}).

Now suppose that $i$ does not lie above the highest $\bullet$ in $\Lambda^*$,
and let $\ell=\Lambda^*_{i+1}$.  In this case, due to the hypothesis $(\la,\mu)\geq (\om,\eta)$ (which implies in particular the third expression in Lemma \ref{lemmaorder}), we get
\beq \label{contra3}
\mu_1'+\dots+\mu_{\ell}' \leq \eta_1'+\cdots+ \eta'_{\ell}. 
\eeq
According to the working hypothesis we want to contradict, namely $\Om_i^*>\La_i^*$, there must again be below row $i$ more cells of   
$\Lambda^*$ than of $\Omega^*$.  
Hence (letting $\ell-(i+1)=s$)
\beq \label{contra4}
\mu_1'+\cdots+\mu'_\ell- s(s-1)/2> \text{number of cells of $\eta$ below row $i$} \geq 
\eta_1'+\cdots+ \eta'_{\ell}-s(s-1)/2 \, ,
\eeq
where $s(s-1)/2$ is the number of cells of $\eta'$ and $\mu'$ above row $i+1$ up to column $\ell$ (note that the condition $\Omega_i^* > \Lambda_i^*$ ensures that this number is the same for $\mu'$ and $\eta'$).
Comparing \eqref{contra3} and \eqref{contra4} leads again to a contradiction,
and we can conclude that \eqref{implicationordre} holds.

Finally, observe that
\beq \label{observa}
 \Lambda^*=(\lambda+\delta^m) \cup \mu \quad  {\rm and} \quad
\Lambda^\circledast=(\lambda+\delta^m +  1^m) \cup \mu =
 (\lambda+\delta^{m+1}) \cup \mu \, ,
\eeq
that is, $\Lambda^{\circledast}$ corresponds to $\Lambda^*$ with $m$ replaced by 
$m+1$.  But \eqref{implicationordre} also holds for $m+1$ (it holds for all $m \geq n$), which means that
\beq 
(\la,\mu) \geq (\omega,\eta) \implies \Lambda^* \geq \Omega^* \quad
{\rm and} \quad  \Lambda^\circledast \geq \Omega^\circledast
\eeq
and  \eqref{implicatinunedirection} follows.

We now need to show the reverse implication:
\beq \label{implicationinverse}
(\la,\mu) \geq (\omega,\eta) \Longleftarrow \Lambda \geq \Omega \, .
\eeq
It is sufficient to show that given our hypotheses 
\beq \label{implicationinverseast}
(\la,\mu) \geq (\omega,\eta) \Longleftarrow \Lambda^\circledast \geq \Omega^\circledast \, .
\eeq
First we show that $\Lambda^\circledast \geq \Omega^\circledast$ implies
$\lambda \geq \omega$.
{From} \eqref{observa}, we get 
that the diagram of
$\Lambda^\circledast$ is that of $\Lambda^*$ with $m$ replaced by $m+1$. 
{From} this construction, we have immediately (using $\ell=\ell(\lambda)$)
\beq
\Lambda^\circledast \geq \Omega^\circledast \implies
\lambda_1+ \dots+\lambda_\ell \geq \omega_1+ \dots+ \omega_\ell
\eeq
Hence $\lambda_1+\cdots+\lambda_i \geq \omega_1+\cdots+ \omega_i$
will hold for all $i$ if $|\lambda| \geq |\omega|$.
Suppose that $|\omega|>|\lambda|$ and let $k=|\lambda|+1$.  An
observation similar to OBS 1 will prove useful.
\begin{itemize}
\item[OBS 2:] Suppose that
the highest $\bullet$ in $\Lambda^\circledast$ lies in row $r$.  Then
the number of $\circ$'s in $\Lambda^\circledast$ is not larger than $r-2$.
\end{itemize}
Indeed, the number of $\circ$  is equal to $|\la|$, and in this case
$r=m+2-\mu_1$.  Thus $|\la|\leq r-2=m-\mu_1$ since $|\la|+\mu_1\leq |\la|+|\mu|\leq m$.
It is then immediate that the highest $\bullet$ in $\Lambda^{\circledast}$
lies below row $k$ since OBS 2 gives
$|\lambda| \leq r-2$, or equivalently, $r\geq |\lambda|+2>k$ (that is, the cells of $\mu$ do not contribute to $\Lambda^\circledast$ up to row $k$).
Using   
\beq 
\omega_1 + \dots + \omega_{k} \geq k = |\lambda|+1 \, ,
\eeq
which follows from the hypothesis $|\omega|>|\lambda|$, we then get
\beq
\Omega^\circledast_1+\cdots+\Omega^\circledast_k -k(m+1)+k(k+1)/2 = \omega_1 + \dots + \omega_{k} \geq |\lambda|+1 > \Lambda^\circledast_1+\cdots+\Lambda^\circledast_k 
-k(m+1)+k(k+1)/2
\eeq
which is a contradiction to $\Lambda^\circledast \geq \Omega^\circledast$.
We thus have $\lambda  \geq \omega$.  By conjugation, we also have
$\eta \geq \mu$, 
and thus \eqref{implicationinverseast} holds by
 Lemma~\ref{lemmaorder}.
\end{proof}

\section{Proofs of the conjectures in the stable sector}\label{app_proof_conj}

In this section we recall some of our previous conjectures concerning the Macdonald superpolynomials and show that in the stable sector
they match statements that were demonstrated in this article. 
These conjectures are thus partly validated but we stress that they preserve their conjectural status in the non-stable sector.

Let us first recall that the Macdonald superpolynomials $P_\La(x,\theta)=P_\La(x,\theta;q,t)$ are defined as in Theorem \ref{theo1} by the two conditions
\begin{equation*}
P_\La(x,\theta;q,t) = m_\La(x,\theta) + \text{lower terms}, \qquad \text{and} \qquad  \LL P_\La(x,\ta), P_\Omega (x,\ta)\RR_{q,t} \propto \delta_{\La \Omega}
\end{equation*}
but with the scalar product for the bisymmetric power-sums replaced by its
superspace form:
\begin{equation}
 (-q)^{\binom{m}{2}} \LL p_{\la,\mu}, p_{\om,\eta} \RR_{q,t}  =\LL p_{\La}(x,\ta), p_{\Omega}(x,\ta) \RR_{q,t} 
\end{equation}
where $m$ is the fermionic degree of $\La$.  This matching factor (which comes from $q^{|\La^a|-|\la|}$) plays no role in the orthogonality conditions, and
affects only the value of the norm.

\subsection{Norm and integral form}
We first discuss two conjectures related to the norm and the integral version of the Macdonald superpolynomials. These are as follows.

\begin{conjecture}\cite{BDLM1}\label{con1}
The norm of the Macdonald superpolynomial $P_\La(x,\ta;q,t)$ is given by 
\begin{equation}\label{conj_norme}
 \LL P_\La(x,\ta), P_\La (x,\ta)\RR_{q,t} =(-1)^{\binom{m}{2}} q^{|\La^a|} \frac{h^\uparrow_\La(q,t)}{h^\downarrow_\La(q,t)},
\end{equation}
where
\begin{equation}\label{hdo}
h^\downarrow_\La(q,t) = \prod_{s \in \mathcal B \La} (1-q^{a_{\La^\circledast}(s)} t^{ l_{\La^*}(s)+1  }),
\qquad\text{and}\qquad h^\uparrow_{\La}(q,t)=h^\downarrow_{\La'}(t,q)  .
\eeq
\end{conjecture}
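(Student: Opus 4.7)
The strategy is to reduce the superspace norm to the bisymmetric norm we have already computed, and then to match combinatorial factors cell by cell. By the defining relation between the two scalar products stated at the beginning of this appendix, we have
\begin{equation*}
\LL P_\La(x,\ta), P_\La(x,\ta) \RR_{q,t} = (-q)^{\binom{m}{2}} \LL P_{\la,\mu}(x,y), P_{\la,\mu}(x,y) \RR_{q,t} \, .
\end{equation*}
Applying Corollary~\ref{cor_norm} on the right-hand side and using $|\La^a| = |\la| + \binom{m}{2}$ (because $\La^a = \la + \delta^m$) absorbs the $q^{\binom{m}{2}}$ into $q^{|\La^a|}$, yielding
\begin{equation*}
\LL P_\La, P_\La \RR_{q,t} = (-1)^{\binom{m}{2}} q^{|\La^a|} \, b_\la(q,qt)^{-1} b_\mu(qt,t)^{-1} \, .
\end{equation*}
Thus the conjecture in the stable sector is reduced to the purely combinatorial identity
\begin{equation*}
b_\la(q,qt)^{-1} b_\mu(qt,t)^{-1} = \frac{h^\uparrow_\La(q,t)}{h^\downarrow_\La(q,t)} \, .
\end{equation*}

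The plan for this combinatorial identity is as follows. By Lemmas~\ref{lem2} and \ref{sepap}, in the stable sector the diagram of $\La^*$ (and of $\La^\circledast$) splits cleanly into a "$\la$-region" located in the rows that contain circles and a "$\mu$-region" located in the rows that do not, with the staircase $\delta^m$ (respectively $\delta^{m+1}$) sitting between them. The task is then to partition $\mathcal B\La$ accordingly into $\mathcal B_\la$ and $\mathcal B_\mu$, and to express the arm- and leg-lengths $a_{\La^\circledast}(s)$ and $l_{\La^*}(s)$ in each region purely in terms of arms and legs of $\la$ or $\mu$. Concretely, I expect that for a cell $s$ lying in the $\la$-region one obtains
\begin{equation*}
a_{\La^\circledast}(s) = a_\la(s), \qquad l_{\La^*}(s)+1 = (l_\la(s)+1) + a_\la(s) \, ,
\end{equation*}
so that $1 - q^{a_{\La^\circledast}(s)} t^{l_{\La^*}(s)+1} = 1 - q^{a_\la(s)}(qt)^{l_\la(s)+1}$, which is exactly the denominator of one factor of $b_\la(q,qt)^{-1}$. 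A parallel count on the $\mu$-region should give $1 - (qt)^{a_\mu(s)} t^{l_\mu(s)+1}$, matching a denominator in $b_\mu(qt,t)^{-1}$. The numerators, which involve $(1-q^{a+1}t^l)$-type factors, should come from $h^\uparrow_\La$ via the conjugation identity $h^\uparrow_\La(q,t) = h^\downarrow_{\La'}(t,q)$ together with Lemma~\ref{conju} (conjugation exchanges $(\la,\mu)\leftrightarrow(\mu',\la')$), which intertwines the two regions while swapping $q$ and $t$.

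The main obstacle is keeping careful track of the bookkeeping. The complication is that $\La^*$ and $\La^\circledast$ differ by the insertion of an additional column of width $m$ (equivalently, the shift $\delta^m \to \delta^{m+1}$), which causes the arms measured in $\La^\circledast$ to pick up contributions from fermionic cells that lie to the right of $s$, while legs measured in $\La^*$ pick up contributions from the $\mu$-part sitting below. Matching these extra contributions to the parameter substitutions $q\mapsto q$, $t\mapsto qt$ (for the $\la$-region) and $q\mapsto qt$, $t\mapsto t$ (for the $\mu$-region) is precisely the statement that each fermionic row/column crossed below or to the right of $s$ contributes an extra factor of $q$ or $t$ inside the hook. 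I expect this to be a direct but somewhat delicate verification based on counting, for each $s\in\mathcal B\La$, exactly how many circles lie weakly below its row and weakly right of its column, and comparing to the corresponding arm/leg in $\la$ or $\mu$. The identification of $\mathcal B\La$ in each region with the ordinary diagram of $\la$ or $\mu$ should be straightforward once the separation from Lemmas~\ref{lem2}--\ref{sepap} is in place, and the identity then follows by multiplying the matched factors together. The same computation proves the analogous statement for the integral form $J_\La$ of Conjecture~\ref{con1}'s companion (the normalization only affects the numerator in $h^\uparrow/h^\downarrow$).
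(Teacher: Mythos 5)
Your overall route is the same as the paper's: reduce the superspace norm to the bisymmetric one via the factor $(-q)^{\binom{m}{2}}$, apply Corollary~\ref{cor_norm}, absorb $q^{\binom{m}{2}}$ into $q^{|\La^a|}$, and then prove the purely combinatorial identity $h^\uparrow_\La(q,t)/h^\downarrow_\La(q,t)=b_\la(q,qt)^{-1}b_\mu(qt,t)^{-1}$ by splitting $\mathcal B\La$ into a $\la$-part and a $\mu$-part (using the separation Lemmas~\ref{lem2} and \ref{sepap}) and matching hooks cell by cell, with the numerator handled through $h^\uparrow_\La(q,t)=h^\downarrow_{\La'}(t,q)$ and Lemma~\ref{conju}. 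This is exactly the content of Lemma~\ref{L2norms} and the factorization $h^\downarrow_\La(q,t)=c_\la(q,qt)\,c_\mu(qt,t)$ in the paper, so the reduction and the plan are sound.

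However, the one concrete computation you write down is wrong, and internally inconsistent with the factor you claim it produces. For a cell $s\in\mathcal B\La$ in the $\la$-region you assert $a_{\La^\circledast}(s)=a_\la(s)$ and $l_{\La^*}(s)+1=(l_\la(s)+1)+a_\la(s)$; these give $q^{a_{\La^\circledast}(s)}t^{l_{\La^*}(s)+1}=q^{a_\la(s)}t^{\,l_\la(s)+1+a_\la(s)}=(qt)^{a_\la(s)}t^{\,l_\la(s)+1}$, which is a factor of $c_\la(qt,t)$, not the required $q^{a_\la(s)}(qt)^{l_\la(s)+1}$. The correct bookkeeping is the opposite of what you describe: since the cells of $\la$ sit in non-fermionic columns lying strictly above and to the right of those of $\mu$, nothing of $\mu$ contributes to their legs, so $l_{\La^*}(s)=l_\la(s)$ exactly, while the arm in $\La^\circledast$ crosses $l_\la(s)+1$ fermionic columns and becomes $a_{\La^\circledast}(s)=a_\la(s)+l_\la(s)+1$, whence $q^{a_{\La^\circledast}(s)}t^{l_{\La^*}(s)+1}=q^{a_\la(s)}(qt)^{l_\la(s)+1}$. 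The arm-fixed/leg-shifted behaviour you attribute to the $\la$-region is in fact what happens in the $\mu$-region, where $a_{\La^\circledast}(s')=a_\mu(s')$ and $l_{\La^*}(s')=l_\mu(s')+a_\mu(s')$, yielding $(qt)^{a_\mu(s')}t^{l_\mu(s')+1}$. With the two regions swapped back the argument goes through as you intend, but as written the $\la$-region step fails; the remainder of the combinatorial verification (and the $m$-independence of $h^\downarrow_\La$, which the paper isolates as a preliminary lemma) is only sketched in your proposal.
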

\n In the previous equation
$\mathcal B\La$ stands for the subset of boxes in the diagram  of $\La$ whose row and column  do not both end with a circle.  This is referred  to as the set of bosonic boxes of $\La$. For instance, in example \eqref{diag},  the bosonic boxes are $(1,3),\,(2,1),\,(2,2)$ and $(4,1)$.
\begin{conjecture}\cite{BDLM1}\label{con2}
The integral form of the Macdonald 
superpolynomials $P_\La(x,\ta;q,t)$ reads
\begin{equation} \label{intform}
J_\La(x,\ta;q,t) = {h^\downarrow_\La}(q,t)\, P_\La(x,\ta;q,t),
\end{equation} 
where ${h^\downarrow_\La}(q,t)$ is defined in \eqref{hdo}.  In other words,
the monomial expansion coefficients of $J_\La(x,\ta;q,t)$ belong to $\mathbb Z[q,t]$.
\end{conjecture}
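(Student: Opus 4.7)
The plan is to reduce Conjecture \ref{con2} to the integral form $J_{\lambda,\mu}$ of the double Macdonald polynomial introduced in \eqref{Bnintform}, exploiting the factorization of Theorem \ref{teofac}. By Proposition \ref{popo}, the super-monomial expansion of $J_\Lambda(x,\theta;q,t)$ and the bisymmetric-monomial expansion of its image $J_\Lambda(x,y;q,t)$ share the same coefficients, so integrality in $\mathbb{Z}[q,t]$ transfers between the two pictures. It therefore suffices to establish two claims: (i) $h^\downarrow_\Lambda(q,t) = c_\lambda(q,qt)\, c_\mu(qt,t)$ in the stable sector, and (ii) $J_{\lambda,\mu}(x,y;q,t)$ has coefficients in $\mathbb{Z}[q,t]$ when expanded in the monomial basis $m_{\omega,\eta}=s_\omega(x)\,m_\eta(y)$.

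For step (i), I would construct an explicit bijection $\mathcal{B}\Lambda \cong \lambda \sqcup \mu$. In the stable sector, Lemmas \ref{lem2} and \ref{sepap} ensure that the $\lambda$-cells of $\Lambda$ occupy the fermionic rows of $\Lambda$ intersected with the bosonic columns, while the $\mu$-cells occupy the bosonic rows intersected with the fermionic columns. Since $|\mathcal{F}\Lambda|=\binom{m}{2}$ by direct count, we have $|\mathcal{B}\Lambda|=|\lambda|+|\mu|$ and the two families exhaust $\mathcal{B}\Lambda$. A diagrammatic check then shows that for a bosonic cell $s\in\mathcal{B}\Lambda$ corresponding to $(i,j)\in\lambda$, one has $a_{\Lambda^\circledast}(s) = a_\lambda(i,j)+l_\lambda(i,j)+1$ and $l_{\Lambda^*}(s) = l_\lambda(i,j)$, producing exactly the factor $1-q^{a_\lambda(i,j)}(qt)^{l_\lambda(i,j)+1}$ of $c_\lambda(q,qt)$. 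Analogously, for $s$ associated to $(i,j)\in\mu$, $a_{\Lambda^\circledast}(s) = a_\mu(i,j)$ and $l_{\Lambda^*}(s) = a_\mu(i,j)+l_\mu(i,j)$, yielding the factor $1-(qt)^{a_\mu(i,j)} t^{l_\mu(i,j)+1}$ of $c_\mu(qt,t)$.

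Once step (i) is in place, Theorem \ref{teofac} combined with the standard rewriting $J_\nu^{(q,t)}[Z] = H_\nu^{(q,t)}[(1-t)Z]$ and the identity $(1-qt)\bigl(X+\tfrac{q(1-t)}{1-qt}Y\bigr) = (1-qt)X + q(1-t)Y$ yields
\begin{equation*}
J_{\lambda,\mu}(x,y;q,t) = H_\lambda^{(q,qt)}\bigl[(1-qt)X + q(1-t)Y\bigr]\, H_\mu^{(qt,t)}\bigl[(1-t)Y\bigr].
\end{equation*}
By Proposition \ref{Proppos}, both $H_\lambda^{(q,qt)}$ and $H_\mu^{(qt,t)}$ have coefficients in $\mathbb{N}[q,t]$ in the Schur basis. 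The plethystic substitutions unfold via the Littlewood--Richardson rule together with $p_r[(1-u)X] = (1-u^r)p_r(x)$ and $p_r[v(1-w)Y] = v^r(1-w^r)p_r(y)$, so each of the two factors expands in the $s_\omega(x)\, s_\eta(y)$ basis with coefficients in $\mathbb{Z}[q,t]$; integrality passes to the $m_{\omega,\eta}$ basis via the nonnegative integer expansion $s_\eta(y) = \sum_\nu K_{\eta\nu}\, m_\nu(y)$.

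The principal technical obstacle is step (i): setting up the bijection and verifying the arm/leg identities requires careful bookkeeping with the staircase separation of $\lambda$ and $\mu$ inside $\Lambda$, which holds only in the stable sector $m\geq n$. As a consistency check, the identification $h^\downarrow_\Lambda = c_\lambda(q,qt) c_\mu(qt,t)$ combined with Lemma \ref{conju} (which gives $\Lambda' \leftrightarrow (\mu',\lambda')$) yields $h^\uparrow_\Lambda(q,t) = h^\downarrow_{\Lambda'}(t,q) = c_{\mu'}(t,qt)\, c_{\lambda'}(qt,q)$; together with Corollary \ref{cor_norm}, this recovers the norm formula \eqref{conj_norme} of Conjecture \ref{con1}, up to the factor $(-q)^{\binom{m}{2}}$ relating the super and bisymmetric scalar products, giving a simultaneous proof of both conjectures in the stable regime.
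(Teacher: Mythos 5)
Your proposal is correct and follows essentially the same route as the paper: the heart of the matter is the identity $h^\downarrow_\Lambda(q,t)=c_\lambda(q,qt)\,c_\mu(qt,t)$ in the stable sector, which the paper establishes by exactly the arm/leg bookkeeping you describe (splitting $\mathcal B\Lambda$ into the cells coming from $\lambda$ and those coming from $\mu$, with the same identities $a_{\Lambda^\circledast}(s)=a_\lambda(s)+l_\lambda(s)+1$, $l_{\Lambda^*}(s)=l_\lambda(s)$ on the $\lambda$ side and $a_{\Lambda^\circledast}(s)=a_\mu(s)$, $l_{\Lambda^*}(s)=a_\mu(s)+l_\mu(s)$ on the $\mu$ side), after which $J_\Lambda$ is identified with the double integral form $J_{\lambda,\mu}$ of \eqref{Bnintform}. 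Your second step, deducing integrality from $J_{\lambda,\mu}=H_\lambda^{(q,qt)}[(1-qt)X+q(1-t)Y]\,H_\mu^{(qt,t)}[(1-t)Y]$, is left implicit in the paper and is a welcome completion; the only quibble is that the integrality of the one-variable Kostkas $K_{\nu\lambda}(q,t)$ should be credited to \cite{Hai} (or to the classical integrality of $J_\lambda$) rather than to Proposition \ref{Proppos}, which concerns the double Kostkas.
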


In the following, we will make the connection between  the conjectured formula \eqref{conj_norme} of the norm of the Macdonald superpolynomials and the formula \eqref{normmacdobn} giving the norm of the double Macdonalds. In view of establishing this equivalence, we recall that the two norms differ by the factor $(-q)^{\binom{m}{2}}$.

\begin{lemma}\label{L2norms}
In the stable sector, the super and double norms are related by
the equation
\begin{equation}\label{2norms}
 q^{|\la|} \frac{h^\uparrow_\La(q,t)}{h^\downarrow_\La(q,t)}= 
  b_{\la,\mu}(q,t)^{-1} ,
\end{equation}
where $\La = (\la+\delta^m ; \mu)$ with $m\geq n= |\la|+|\mu|$.
\end{lemma}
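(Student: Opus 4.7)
The plan is to prove the equality cell-by-cell by identifying the factors on both sides. By Corollary~\ref{cor_norm}, $b_{\la,\mu}(q,t)^{-1} = q^{|\la|}\, b_\la(q,qt)^{-1}\, b_\mu(qt,t)^{-1}$, so the lemma reduces to showing
\[ \frac{h^\uparrow_\La(q,t)}{h^\downarrow_\La(q,t)} = b_\la(q,qt)^{-1}\, b_\mu(qt,t)^{-1}. \]
Using $(\La^\circledast)' = (\La')^\circledast$, $(\La^*)' = (\La')^*$, and the transposition of arms/legs, one has $h^\uparrow_\La(q,t) = \prod_{s \in \mathcal B\La} (1 - t^{l_{\La^\circledast}(s)}\, q^{a_{\La^*}(s)+1})$, so I am reduced to computing the per-cell ratio $(1 - t^{l_{\La^\circledast}(s)} q^{a_{\La^*}(s)+1})/(1 - q^{a_{\La^\circledast}(s)} t^{l_{\La^*}(s)+1})$ and matching it against the cell factors of $b_\mu(qt,t)^{-1}$ and $b_\la(q,qt)^{-1}$.

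Next I will unpack the geometry of $\La$ in the stable regime. Using Lemma~\ref{lem2} and Lemma~\ref{sepap}, with $\ell=\ell(\la)$, the columns of $\La$ split as $\{1,\dots,m-\ell\}$ (all fermionic, carrying the circles coming from rows with $\la_i=0$) and the isolated columns $\{\la_i+m-i+1 : 1 \leq i \leq \ell\}$ (carrying the remaining circles); the first $\ell$ rows of $\La^*$ are fermionic with lengths $\la_i+m-i$. I then partition $\mathcal B\La = S_\mu \sqcup S_\la$ with $S_\mu$ the boxes in bosonic rows and $S_\la$ the boxes in the first $\ell$ rows lying in bosonic columns. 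A simple count gives $|S_\mu|=|\mu|$ and $|S_\la|=|\la|$. The bijection $\phi_\mu : S_\mu \to \mu$ is the obvious row-matching, and $\phi_\la: S_\la \to \la$ is given by $(i'',\, m+c-\la'_c) \longleftrightarrow (i'',c)$; one checks that $m+c-\la'_c > \mu_1$ and that this column is not of the form $\la_i+m-i+1$, so the image genuinely lies in $S_\la$.

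For $s \in S_\mu$ with $\phi_\mu(s)=(k,j)$, the bosonic row immediately gives $a_{\La^\circledast}(s)=a_{\La^*}(s)=a_\mu(k,j)$. The key computation is for the leg: writing $(\La^*)'_j = f(j) + \mu'_j$ with $f(j)=|\{i': \la_{i'}+m-i' \geq j\}|$ and tracking the bosonic-row position in sorted $\La^*$, the stable hypothesis (which forces $\mu_k \leq m-\ell$) yields $l_{\La^*}(s) = l_\mu+a_\mu$. Moreover every column $j \leq \mu_1$ carries a unique circle in a row strictly below $i$, so $l_{\La^\circledast}(s)=l_{\La^*}(s)+1$. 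The per-cell ratio then collapses to $(1-(qt)^{a_\mu+1}t^{l_\mu})/(1-(qt)^{a_\mu}t^{l_\mu+1})$, precisely the factor of $b_\mu(qt,t)^{-1}$ at $(k,j)$. For $s = (i'',\, m+c-\la'_c) \in S_\la$, a direct calculation yields $f(j)=\la'_c$ and $\mu'_j=0$, hence $l_{\La^*}(s)=\la'_c - i''=l_\la$; the column is bosonic so $l_{\La^\circledast}(s)=l_{\La^*}(s)$; and $a_{\La^*}(s)=\la_{i''}+m-i''-j=a_\la+l_\la$ is raised by $1$ in $\La^\circledast$ because the row is fermionic. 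The ratio simplifies to $(1-q^{a_\la+1}(qt)^{l_\la})/(1-q^{a_\la}(qt)^{l_\la+1})$, matching $b_\la(q,qt)^{-1}$ at $(i'',c)$.

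The main obstacle is the bookkeeping in the $S_\mu$ computation: I must carefully analyze how the fermionic and bosonic rows interleave in the sorted partition $\La^*$, showing that $f(j)-f(\mu_k+1)-\delta_f$ (where $\delta_f$ is a tie-breaking indicator signalling that some fermionic $\La^*$-part coincides with $\mu_k$) collapses to $\mu_k-j$. This is precisely where the hypothesis $m\geq n$ is indispensable, because it places every $\mu$-part below the $\ell$ large fermionic parts of $\La^a$ and forces $\mu_k \leq m-\ell$, so that the arithmetic of the interleaving produces the arm $a_\mu$. A similar but milder care is required in the $S_\la$ step to verify that the image of $\phi_\la$ avoids all fermionic columns, completing the proof.
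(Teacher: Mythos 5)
Your proof is correct and follows essentially the same route as the paper: both arguments rest on splitting the bosonic cells $\mathcal B\La$ into the cells coming from $\la$ (in the fermionic rows) and from $\mu$ (in the bosonic rows), and on converting the super arm/leg data into ordinary arms and legs of $\la$ and $\mu$ under the parameter shifts $(q,qt)$ and $(qt,t)$. The only organizational difference is that you transpose the $h^\uparrow_\La=h^\downarrow_{\La'}(t,q)$ product back onto $\mathcal B\La$ and match the ratio cell by cell, whereas the paper first establishes $h^\downarrow_\La(q,t)=c_\la(q,qt)\,c_\mu(qt,t)$ and then evaluates $h^\uparrow_\La$ by applying the same identity to $\La'=(\mu'+\delta^m;\la')$.
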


We observe that the r.h.s. of \eqref{2norms} is $m$-independent (it is a formula that pertains to the stable sector) while this is not  obviously true for the l.h.s. So we first establish this fact.

\begin{lemma} Let $\La=(\la+\delta^m; \mu)$.  Then $h^\downarrow_{\La}(q,t)$, defined in \eqref{hdo}, does not depend upon $m$ when $m\geq n=|\la|+|\mu|$.
\end{lemma}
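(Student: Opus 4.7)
The plan is to prove the stronger statement that, for $m\geq n$, the product admits the manifestly $m$-independent factorization
\beq
h^\downarrow_\Lambda(q,t) \;=\; c_\lambda(q,qt)\, c_\mu(qt,t).
\eeq
Combined with the conjugate identity for $h^\uparrow_\Lambda$ obtained via Lemma~\ref{conju}, this factorization anticipates the full statement of Lemma~\ref{L2norms}.

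The first step is to identify $\mathcal B \Lambda$ in the stable sector. The $m$ fermionic rows of $\Lambda$ have $\Lambda^\circledast$-lengths $\lambda_i+m+1-i$ ($1\le i\le m$), and because this sequence is strictly decreasing these values are also precisely the indices of the fermionic columns (those ending with a circle). Three consequences follow: a staircase cell sits at the intersection of a fermionic row and a fermionic column and is therefore not bosonic; a $\mu$-cell ($\bullet$) lies in a bosonic row and is bosonic; a $\lambda$-cell ($\circ$), placed by convention in a fermionic row at a bosonic column, is bosonic. Thus $\mathcal B\Lambda$ is the disjoint union of the $\lambda$-cells and the $\mu$-cells.

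Next I would compute the pair $(a_{\Lambda^\circledast},l_{\Lambda^*})$ for each family. For a $\mu$-cell occupying the cell $(k,j)$ of $\mu$, the arm is $\mu_k-j=a_\mu$; Lemmas~\ref{lem2} and~\ref{sepap} together with the decreasing order of $\Lambda^\circledast$ force the fermionic rows below this $\mu$-row to be exactly the $\mu_k$ staircase-only rows with $\Lambda^a_i\in\{0,\dots,\mu_k-1\}$, contributing $\mu_k-j$ squares in column $j$ on top of the $\mu'_j-k$ squares contributed by the remaining $\mu$-rows. The leg is thus $a_\mu+l_\mu$ and the factor equals $1-(qt)^{a_\mu}t^{l_\mu+1}$, the $(k,j)$-factor of $c_\mu(qt,t)$. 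For a $\lambda$-cell placed at the $j_L$-th bosonic column $J$ of a row $i\le\ell(\lambda)$, Lemma~\ref{sepap} gives $J>\mu_1$ and hence $\mu'_J=0$; a short counting argument in which the fermionic columns strictly greater than $J$ are enumerated via the decreasing sequence $\Lambda^\circledast_r$ identifies the largest index $i_J$ with $\Lambda^\circledast_{i_J}>J$ as $\lambda'_{j_L}$. Consequently arm $= a_\lambda+l_\lambda+1$ and leg $= l_\lambda$, producing the factor $1-q^{a_\lambda}(qt)^{l_\lambda+1}$, the $(i,j_L)$-factor of $c_\lambda(q,qt)$.

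Multiplying the two families of factors yields the announced factorization and gives $m$-independence for free. The main obstacle is the $\lambda$-cell computation: the physical column $J$ varies with $m$ through the interleaving of bosonic and fermionic columns, so the clean form $c_\lambda(q,qt)$ only emerges after the identities $i_J=\lambda'_{j_L}$ and $\mu'_J=0$ are established. Both rely crucially on the stability bound $m\geq n$ via the separation of $\circ$'s and $\bullet$'s guaranteed by Lemmas~\ref{lem2} and~\ref{sepap}.
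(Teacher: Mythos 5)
Your proof is correct and rests on the same decomposition as the paper's: in the stable sector $\mathcal B\Lambda$ splits into the $\lambda$-cells and the $\mu$-cells, whose arm- and leg-lengths are determined locally (by Lemmas~\ref{lem2} and~\ref{sepap}) and hence independently of $m$. You go one step further than the paper's own proof of this lemma by carrying out the explicit computation $h^\downarrow_\Lambda(q,t)=c_\lambda(q,qt)\,c_\mu(qt,t)$, which the paper records only as eq.~\eqref{hcc} in the proof of the subsequent Lemma~\ref{L2norms}, whereas here it contents itself with the softer observation that the relevant statistics live inside the $m$-independent sub-superpartitions $(\lambda+\delta^{\ell(\lambda)};\,)$ and $(\delta^{\mu_1};\mu)$.
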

\begin{proof}
When $m\geq n$, the  boxes of $\La$  that belong to the subdiagram $\mathcal B \La$ are those corresponding to the inserted parts of $\la$ and $\mu$ in $\La$ (see Appendix \ref{proof_order}).  For example, with $\la,\mu=(3,1,1),(2,2,1)$ and $m=10$:
\begin{equation}
{\tableau[scY]{\circ& \circ& \circ\\ \circ \\ \circ}} \quad
 {\tableau[scY]{\bullet&\bullet \\ \bullet & \bullet \\ \bullet  }} \quad
\xrightarrow{\mathcal B \Lambda}\quad
  {\tableau[scY]{\times&\times&\times&\times&\times&\times&\times&\circ&\times&\times&\circ& \circ&\bl \tcercle{}\\\times&\times&\times&\times&\times&\times&\times&\circ&\times& \bl \tcercle{} \\\times&\times&\times&\times&\times&\times&\times&\circ& \bl\tcercle{} \\\times&\times&\times&\times&\times&\times&\bl\tcercle{}  \\\times&\times&\times&\times&\times& \bl\tcercle{}  \\\times&\times&\times&\times&\bl\tcercle{} \\ \times&\times&\times& \bl\tcercle{}  \\\times&\times&  \bl\tcercle{}  \\ \bullet&\bullet \\ \bullet&\bullet \\ \times & \bl\tcercle{} \\ \bullet \\ \bl\tcercle{} }}
\end{equation}
where the boxes marked with an $\times$ are not in $\B\La$ (they lie at the intersection of a row and a column ending with a circle).
Now, in the expression for  $h^\downarrow_{\La}(q,t)$, the product over the boxes of $\mathcal B \La$ can manifestly be separated into a product over the bosonic boxes of the following two smaller superpartitions:
\begin{equation}\label{subspart}
\bar \Omega :=(\la + \delta^{\ell(\la)} ; \, ), \qquad \text{and} \qquad \bar \La:=(\delta^{\mu_1} ; \mu).
\end{equation}
In the above example, these are
\beq
\bar \Omega :\quad
{\tableau[scY]{\circ&\times&\times&\circ& \circ&\bl \tcercle{}\\
\circ&\times& \bl \tcercle{} \\
\circ& \bl\tcercle{} \\ }}\qquad \qquad
\bar \La :\quad
{\tableau[scY]{\bullet&\bullet \\ \bullet&\bullet \\ \times & \bl\tcercle{} \\ \bullet \\ \bl\tcercle{} }}
\eeq
The decomposition takes the form
\begin{equation}\label{inm}
h^\downarrow_{\La}(q,t) =    \prod_{s \in \mathcal B \bar \Om } (1-q^{a_{\bar \Om^\circledast}(s)} t^{ l_{\bar \Om^*}(s)+1  }) \prod_{s' \in \mathcal B \bar \La} (1-q^{a_{\bar \La^\circledast}(s')} t^{ l_{\bar \La^*}(s')+1  }).
\end{equation}
This is a simple consequence of the fact that the leg-length and arm-length of 
the corresponding boxes 
do not depend on the rest of the diagram (see Corollary \ref{sepap}).  Now, the $m$-independence of both $\bar\Om$ and $\bar\La$ entails that of
 $h^\downarrow_{\La}(q,t)$.  An identical result holds for $h^\uparrow_{\La}(q,t)$.
\end{proof}

We now turn to the proof of Lemma \ref{L2norms}.
\begin{proof}
The objective is  to rewrite the  $\bar \Omega$ and $\bar \La$ contributions in \eqref{inm} solely in terms of the $\la$ and $\mu$ data, respectively.
The first step amounts to reorganize the terms in the first product as follows:
\begin{equation}
q^{  a_{\bar \Omega^\circledast}(s) }  t^{    l_{\bar \Omega^*}(s)+1      }  = q^{  a_{\bar \Omega^\circledast}(s)  -   l_{\bar \Omega^*}(s)-1}  (qt)^{    l_{\bar \Omega^*}(s)+1      } .
\end{equation}
Since all the parts of $\bar \Omega$ are fermionic,  $ a_{\bar \Omega^\circledast}(s) -1 = a_{\bar \Omega^*}(s)$.    
Note also that the boxes of $\mathcal B \bar \Omega$ correspond exactly to those of $\la$.
Then, it is easy to see that $ l_{\bar \Omega^*}(s) = l_\la(r)$ where $r$ 
is the box corresponding to $s$ in $\lambda$.
Similarly, we have $a_{\bar \Omega^*}(s) - l_{\bar \Omega^*}(s) = a_\la( r )$, the subtraction removing precisely the contribution to the arm of the non-bosonic boxes belonging to the staircase partition.  We thus have:
\begin{equation}
\prod_{s \in \mathcal B \bar \Omega} (1-q^{  a_{\bar \Omega^\circledast}(s) }  t^{    l_{\bar \Omega^*}(s)+1      }    )  = \prod_{r\in \la}(1-q^{a_\la(r)} (qt)^{l_\la(r)+1} ).
\end{equation}
Similarly, we reorganize the contribution of $\bar \La$ in \eqref{inm} as follows:
\begin{equation}
q^{  a_{\bar \La^\circledast}(s') }  t^{    l_{\bar \La^*}(s')+1      } = (qt)^{  a_{\bar \La^\circledast}(s') } t^{   l_{\bar \La^*}(s')+1 -  a_{\bar \La^\circledast}(s')  }.
\end{equation}
Since the rows of $\mu$,  when inserted into the staircase partition, constitute rows by themselves (non-fermionic ones), we obviously have $a_{\bar \La^\circledast}(s')  = a_\mu(r')$ with $r'$ the box of $\mu$ corresponding 
to $s'$.
Arguing as before, we also have $l_{\bar \La^*}(s')-  a_{\bar \La^\circledast}(s') = l_\mu(r')$, so that 
\begin{equation}
\prod_{s' \in \mathcal B \bar \La} (1-q^{  a_{\bar \La^\circledast}(s') }  t^{    l_{\bar \La^*}(s')+1      }    ) = \prod_{r' \in \mu}  (1-(qt)^{a_{\mu}(r')} t^{l_\mu(r')+1}).
\end{equation}
Collecting these results yields
\begin{equation}\label{hcc}
h^\downarrow_{\La}(q,t) = \prod_{r\in \la}(1-q^{a_\la(r)} (qt)^{l_\la(r)+1} ) \prod_{r' \in \mu}  (1-(qt)^{a_{\mu}(r')} t^{l_\mu(r')+1}) = c_\la(q,qt) c_\mu(qt,t).
\end{equation}
A similar expression holds for $h^\uparrow_{\La}(q,t)$:
\begin{equation}
h^\uparrow_{\La}(q,t)= h^\downarrow_{\La'}(t,q)=c_{\mu'}(t,qt) c_{\la'}(qt,q)
\end{equation}  
where we used $\La'=(\mu'+\delta^m; \la')$ (see Lemma~\ref{conju}).  The expressions for the norm can thus be related:
\begin{equation}
q^{|\la|} \frac{h^\uparrow_\La(q,t)}{h^\downarrow_\La(q,t)}=q^{|\la|} \frac{c_{\la'}(qt,q)}{c_{\la}(q,qt) } \, \frac{c_{\mu'}(t,qt)}{c_{\mu}(qt,t)} = q^{|\la|}b_{\la}(q,qt)^{-1} b_{\mu}(qt,t)^{-1} = b_{\la,\mu}(q,t)^{-1}.
\end{equation}
\end{proof}

Lemma \ref{L2norms} readily implies that:
\begin{corollary}
Conjecture \ref{con1} is true for $m\geq n$.
\end{corollary}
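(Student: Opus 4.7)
The plan is to assemble three ingredients already available in the paper and check that they click together with the right bookkeeping of signs and powers of $q$. First, I would invoke the conversion factor between the two scalar products stated at the start of this appendix,
\[
\LL p_\La(x,\ta), p_\Omega(x,\ta) \RR_{q,t} = (-q)^{\binom{m}{2}} \LL p_{\la,\mu}, p_{\om,\eta} \RR_{q,t},
\]
for $\La \leftrightarrow (\la,\mu)$ of fermionic degree $m$. Since the super Macdonald $P_\La(x,\theta;q,t)$ and its bisymmetric image $P_{\la,\mu}(x,y;q,t)$ are given by the same expansion coefficients on the respective power-sum bases, bilinearity of the scalar product transports the factor $(-q)^{\binom{m}{2}}$ to the norms:
\[
\LL P_\La(x,\theta), P_\La(x,\theta)\RR_{q,t} = (-q)^{\binom{m}{2}}\, \LL P_{\la,\mu}(x,y;q,t), P_{\la,\mu}(x,y;q,t)\RR_{q,t}.
\]

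Next, I would apply Corollary \ref{cor_norm} to identify the right-hand side with $(-q)^{\binom{m}{2}} b_{\la,\mu}(q,t)^{-1}$, and then substitute the content of Lemma \ref{L2norms},
\[
b_{\la,\mu}(q,t)^{-1} = q^{|\la|}\, \frac{h^\uparrow_\La(q,t)}{h^\downarrow_\La(q,t)},
\]
which is valid precisely in the stable sector $m\geq n$. Combining these yields
\[
\LL P_\La(x,\theta), P_\La(x,\theta)\RR_{q,t} = (-1)^{\binom{m}{2}}\, q^{|\la|+\binom{m}{2}}\, \frac{h^\uparrow_\La(q,t)}{h^\downarrow_\La(q,t)}.
\]

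The last step is the trivial bookkeeping observation that $|\La^a| = |\la + \delta^m| = |\la| + \binom{m}{2}$, which converts the exponent $|\la|+\binom{m}{2}$ into $|\La^a|$ and reproduces verbatim the right-hand side of Conjecture \ref{con1}. There is no genuine obstacle here: the entire argument is a chaining of previously established identities, and the only point requiring care is to correctly account for the $(-q)^{\binom{m}{2}}$ prefactor and to recognize that it combines with the $q^{|\la|}$ arising from Lemma \ref{L2norms} so as to produce the sign $(-1)^{\binom{m}{2}}$ and the shifted exponent $q^{|\La^a|}$ expected in the conjecture.
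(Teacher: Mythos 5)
Your argument is correct and is essentially the paper's own proof: the paper likewise chains the $(-q)^{\binom{m}{2}}$ conversion factor between the superspace and bisymmetric scalar products, Corollary~\ref{cor_norm}, and Lemma~\ref{L2norms}, with the same final observation that $|\La^a|=|\la|+\binom{m}{2}$. No gaps.
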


 On the other hand, the relation
 \eqref{hcc} implies the equivalence of   the integral forms \eqref{intform} and \eqref{Bnintform}:
\begin{corollary}
Conjecture \ref{con2} is true for $m\geq n$.
\end{corollary}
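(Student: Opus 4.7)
The plan is first to reduce Conjecture \ref{con2} in the stable sector to an integrality statement about the double Macdonald integral form, and then to prove that statement by exploiting the factorization of Theorem \ref{teofac}. Equation \eqref{hcc} from the previous lemma gives $h^\downarrow_\Lambda(q,t) = c_\lambda(q,qt)\,c_\mu(qt,t)$, so comparison with the definition \eqref{Bnintform} shows that under the correspondence $\Lambda \leftrightarrow (\lambda,\mu)$, the super-integral form $J_\Lambda = h^\downarrow_\Lambda\, P_\Lambda$ maps precisely to the double integral form $J_{\lambda,\mu} = c_\lambda(q,qt)\, c_\mu(qt,t)\, P_{\lambda,\mu}$. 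The bisymmetric correspondence of Proposition \ref{popo} identifies $m_\Omega$ with $m_{\omega,\eta}(x,y) = s_\omega(x)\, m_\eta(y)$ as a coefficient-preserving bijection, so Conjecture \ref{con2} reduces in the stable sector to showing that $J_{\lambda,\mu}(x,y;q,t) \in \Z[q,t]\{s_\omega(x)\, m_\eta(y)\}$.

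To establish this, I would use \eqref{facto} and \eqref{Bnintform} to write
\begin{equation*}
J_{\lambda,\mu}(x,y;q,t) = J_\lambda^{(q,qt)}\!\left[X + \frac{q(1-t)}{1-qt}Y\right] J_\mu^{(qt,t)}[Y].
\end{equation*}
The second factor lies in $\Z[q,t]\{m_\eta[Y]\}$ by Macdonald's classical integrality theorem applied with parameters $(qt,t)$. To handle the first factor I would invoke the identity $J_\lambda[U;q,t] = \sum_\nu K_{\nu\lambda}(q,t)\, s_\nu[U(1-t)]$, which after the substitution $(q,t) \mapsto (q,qt)$ becomes
\begin{equation*}
J_\lambda^{(q,qt)}[Z] = \sum_\nu K_{\nu\lambda}(q,qt)\, s_\nu\bigl[(1-qt)X + q(1-t)Y\bigr].
\end{equation*}

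Since $K_{\nu\lambda}(q,qt) \in \Z[q,t]$, and the Littlewood--Richardson rule expands $s_\nu[A+B]$ as an integer combination of products $s_\alpha[A]\, s_\beta[B]$, and since the standard plethystic computation shows that $s_\alpha[(1-qt)X]$ and $s_\beta[q(1-t)Y]$ each lie in $\Z[q,t]$ times Schur functions in $x$ and $y$ respectively, the first factor lies in $\Z[q,t]\{s_\rho(x)\, s_\sigma(y)\}$. Multiplying by the monomial expansion of $J_\mu^{(qt,t)}[Y]$ and noting that $s_\sigma(y)\, m_\eta(y)$ is a nonnegative integer combination of monomials $m_\pi(y)$ (via $s_\sigma = \sum_\tau K_{\sigma\tau}\, m_\tau$ and the nonnegative structure constants of the monomial basis under multiplication) then yields an expansion of $J_{\lambda,\mu}$ in $\Z[q,t]\{s_\omega(x)\, m_\eta(y)\}$, completing the argument.

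The main obstacle is the apparent denominator $(1-qt)^{-1}$ in the plethystic substitution defining $Z$: a naive monomial expansion of $J_\lambda^{(q,qt)}[Z]$ directly in the alphabet $Z$ would not lie over $\Z[q,t]$. The key trick is to trade this monomial expansion in $Z$ for the Schur expansion in the modified alphabet $(1-qt)Z = (1-qt)X + q(1-t)Y$, where the factor $(1-qt)$ precisely cancels the problematic denominator.
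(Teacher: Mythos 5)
Your proof is correct and follows the same route as the paper: equation \eqref{hcc} identifies $h^\downarrow_\Lambda(q,t)$ with $c_\lambda(q,qt)\,c_\mu(qt,t)$, so in the stable sector the super integral form \eqref{intform} coincides with the double integral form \eqref{Bnintform}. You actually go further than the paper, which stops at this equivalence of normalizations: your explicit verification that $J_{\lambda,\mu}$ has monomial coefficients in $\mathbb Z[q,t]$ — trading the alphabet $X+\frac{q(1-t)}{1-qt}Y$ for $(1-qt)X+q(1-t)Y$ via the $q,t$-Kostka expansion to cancel the apparent denominator — supplies a step the paper leaves implicit.
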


\subsection{Kostka coefficients}  

Let $\varphi$ be the  endomorphism 
of
$\mathbb Q(q,t)[p_1,p_2,p_3,\dots;\tilde p_0,\tilde p_1,\tilde p_2,\dots]$ 
whose action on the power-sums is
\begin{equation}
\varphi(p_n)=\frac{1}{(1-t^n)}p_n\qquad\text{and}\qquad 
\varphi(\tilde p_n)=\tilde p_n.
 \end{equation}
We then define the modified Macdonald polynomials in superspace as
\beq
H_{\La}(x,\theta;q,t) = \varphi \left( J_{\La}(x,\theta;q,t) \right).\eeq
We also introduce  the
Schur superpolynomial as the limit $q=t=0$ of the Macdonald 
superpolynomial
\beq
s_\Lambda(x,\theta)=P_\Lambda(x,\ta;0,0).\eeq
\begin{conjecture}\label{con3} \cite{BDLM1}
 The coefficients $ K_{\Omega \Lambda} (q,t)$ in the expansion of the
modified Macdonald superpolynomials
\begin{equation}
 H_\Lambda (x,\ta;q,t)  = \sum_\Omega K_{\Omega \Lambda} (q,t) \,s_\Omega(x,\ta)
 \end{equation}
are polynomials in $q$ and $t$ with nonnegative  integer coefficients. \end{conjecture}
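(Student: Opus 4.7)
The plan is to establish Conjecture \ref{con3} in the stable sector $m \geq n$ by showing that the super-Kostka coefficients coincide with the double Kostka coefficients $K_{\om,\eta \; \la,\mu}(q,t)$, whose positivity is already furnished by Proposition \ref{Proppos}. The non-stable sector $m<n$ is not addressed by this strategy and retains its conjectural status; only the stable half of the conjecture is upgraded to a theorem.

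First I would verify two identifications under the bisymmetric correspondence of Section 1.1. The first is that the Schur superpolynomial $s_\La(x,\theta) := P_\La(x,\theta;0,0)$ maps to the bisymmetric Schur function $s_{\la,\mu}(x,y) = s_\la(x)\,s_\mu(y)$; this is immediate by specializing Theorem \ref{teofac} at $q=t=0$, since $P_\la^{(0,0)}[X] = s_\la[X]$ and the argument $X + \frac{q(1-t)}{1-qt}Y$ collapses to $X$ in that limit. The second, and more substantive, identification is that the modified Macdonald superpolynomial $H_\La(x,\theta;q,t) = \varphi(J_\La(x,\theta;q,t))$ maps to the modified double Macdonald polynomial $H_{\la,\mu}(x,y;q,t)$. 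The corollary to Lemma \ref{L2norms} already gives that the bisymmetric version of $J_\La$ is $J_{\la,\mu}$, so the task reduces to checking that the superspace plethysm $\varphi$ (identity on $\tilde p_n$, $p_n \mapsto p_n/(1-t^n)$ on the ordinary power-sums) is compatible under the bisymmetric map with the substitution $p_n[X]\mapsto p_n[X]$, $p_n[X+Y]\mapsto p_n[X+Y]/(1-t^n)$ used to define $H_{\la,\mu}$. By Proposition \ref{popo}, the $\tilde p_n$ factors in $p_\La$ assemble into $s_\la(x)$, which depends on the $x$-variables alone and thus only on $p_n[X]$, while the remaining $p_n$ factors assemble into $p_\mu(x,y)=p_\mu[X+Y]$; the two prescriptions therefore agree term-by-term on the power-sum basis $p_\La$.

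Combining the two identifications, the bisymmetric map carries the Schur expansion of $H_\La$ to that of $H_{\la,\mu}$, so
\begin{equation}
K_{\Om \La}(q,t) \;=\; K_{\om,\eta \; \la,\mu}(q,t)
\qquad
\bigl(\Om \leftrightarrow (\om,\eta),\; \La \leftrightarrow (\la,\mu),\; m \geq n\bigr),
\end{equation}
and the conjectured membership in $\mathbb N[q,t]$ is exactly the content of Proposition \ref{Proppos}. The only delicate point is the compatibility check in the second identification: because the bisymmetric map is not a ring homomorphism, one cannot naively pull $\varphi$ through a product, and the argument must be carried out at the level of the power-sum basis, where Proposition \ref{popo} supplies the clean factorization $p_\La \mapsto s_\la(x)\,p_\mu(x,y)$ needed to separate the $\tilde p$- and $p$-contributions unambiguously. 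Beyond this verification, the entire argument is essentially a formal consequence of the already established factorization \eqref{facto} and the positivity of ordinary $q,t$-Kostkas and Littlewood--Richardson coefficients.
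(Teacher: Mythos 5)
Your proposal is correct and takes essentially the same route as the paper: Appendix C identifies $K_{\Omega\Lambda}(q,t)$ with the double Kostka $K_{\kappa,\gamma\,\lambda,\mu}(q,t)$ in the stable sector (asserting, as you verify in more detail, that passing $H_\Lambda$ and $s_\Omega$ to their bisymmetric forms $H_{\lambda,\mu}$ and $s_{\kappa,\gamma}$ leaves the expansion coefficients unchanged) and then invokes Proposition \ref{Proppos}. Your explicit check that the superspace plethysm $\varphi$ is compatible with the bisymmetric map on the power-sum basis $p_\Lambda \mapsto s_\lambda(x)\,p_\mu(x,y)$ is a welcome elaboration of a step the paper leaves implicit.
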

When $m \geq n$, the Kostka coefficient $K_{\Omega \Lambda} (q,t)$ 
is equal to the coefficient $K_{\kappa,\gamma \, \lambda,\mu}(q,t)$
defined in \eqref{Kosbisym} (where
$\Lambda \leftrightarrow (\lambda,\mu)$ and $\Omega \leftrightarrow (\kappa,\gamma)$)
since the 
transformation of the previous equation into its bisymmetric form does not
affect its expansion coefficients. 
Therefore, as a direct consequence of Proposition \ref{Proppos}, we have:
\begin{corollary}
Conjecture \ref{con3} is true for $m\geq n$.
\end{corollary}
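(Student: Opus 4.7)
The plan is to reduce Conjecture \ref{con3} in the stable sector to Proposition \ref{Proppos} by identifying $K_{\Omega\Lambda}(q,t) = K_{\kappa,\gamma\,\lambda,\mu}(q,t)$ whenever $\Lambda \leftrightarrow (\lambda,\mu)$ and $\Omega \leftrightarrow (\kappa,\gamma)$. Since Proposition \ref{Proppos} already guarantees that each $K_{\kappa,\gamma\,\lambda,\mu}(q,t)$ lies in $\mathbb{N}[q,t]$ (as a sum of products of Haiman-positive $q,t$-Kostkas and Littlewood--Richardson coefficients), the task reduces to showing that the whole chain $P_\Lambda \rightsquigarrow J_\Lambda \rightsquigarrow H_\Lambda$ is compatible with the bisymmetric transformation, and that the Schur expansion coefficients match.

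First I would note that the bisymmetric image of $P_\Lambda(x,\theta;q,t)$ is precisely $P_{\lambda,\mu}(x,y;q,t)$ by the very definition of the double Macdonald polynomial in Theorem~\ref{Pdef1}. The preceding corollary identifies the superspace integral form as $J_\Lambda = h_\Lambda^\downarrow(q,t) P_\Lambda$, and equation \eqref{hcc} establishes $h_\Lambda^\downarrow(q,t) = c_\lambda(q,qt)\, c_\mu(qt,t)$ in the stable sector. Combining this with the definition $J_{\lambda,\mu} = c_\lambda(q,qt)\, c_\mu(qt,t)\, P_{\lambda,\mu}$ from \eqref{Bnintform} yields that $J_\Lambda \mapsto J_{\lambda,\mu}$ under the bisymmetric transformation.

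Next I would verify the compatibility between the superspace homomorphism $\varphi$ (acting by $p_n \mapsto p_n/(1-t^n)$ and $\tilde p_n \mapsto \tilde p_n$) and the bisymmetric plethystic substitution $X \mapsto X$, $X+Y \mapsto (X+Y)/(1-t)$. Using the bisymmetric images \eqref{modea}--\eqref{psdea}, the $\tilde p$ part of $p_\Lambda$ generates the Schur factor $s_\lambda[X]$, which is left invariant by both maps, while the bosonic part $p_{\Lambda^s}$ becomes $p_\mu[X+Y]$, on which both maps clearly act as multiplication by $\prod_j (1-t^{\mu_j})^{-1}$. Consequently $H_\Lambda = \varphi(J_\Lambda)$ maps to $H_{\lambda,\mu} = \varphi(J_{\lambda,\mu})$ as defined in \eqref{HH}.

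Finally I would use that $s_\Omega(x,\theta) = P_\Omega(x,\theta;0,0)$ maps to $s_{\kappa,\gamma}(x,y) = s_\kappa(x)\, s_\gamma(y)$ via the $q=t=0$ specialization recalled below \eqref{sxy}. Applying the bisymmetric transformation (which is linear and preserves expansion coefficients) to $H_\Lambda = \sum_\Omega K_{\Omega\Lambda}(q,t)\, s_\Omega$ and matching with the defining expansion \eqref{Kosbisym} of $K_{\kappa,\gamma\,\lambda,\mu}(q,t)$ forces the identification $K_{\Omega\Lambda}(q,t) = K_{\kappa,\gamma\,\lambda,\mu}(q,t)$, and Proposition \ref{Proppos} concludes. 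The most delicate step should be the compatibility of $\varphi$ with the bisymmetric map; once one recognizes that the bisymmetric image cleanly separates the $X$ and $X+Y$ plethystic generators in \eqref{psdea}, this becomes a direct check on the power-sum basis.
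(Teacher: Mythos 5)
Your proposal is correct and follows essentially the same route as the paper: identify $K_{\Omega\Lambda}(q,t)$ with $K_{\kappa,\gamma\,\lambda,\mu}(q,t)$ because the bisymmetric transformation preserves expansion coefficients, then invoke Proposition~\ref{Proppos}. The paper asserts this identification in one sentence, whereas you explicitly verify the commutativity of the chain $P_\Lambda \rightsquigarrow J_\Lambda \rightsquigarrow H_\Lambda$ with the bisymmetric map (in particular the compatibility of $\varphi$ on the power-sum basis), which is a welcome filling-in of details rather than a different argument.
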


Finally, the following symmetries of the coefficients $K_{\Omega \Lambda} (q,t)$
have been observed \cite[Section 7.1]{BDLM2}: 
\begin{equation}\label{sym1}
K_{\Om \La}(q,t)=K_{\Om' \La'}(t,q),  \qquad \text{and}  \qquad
K_{\Om\La}(q,t)=q^{\bar{n}(\La')} t^{\bar{n}(\La)}\, K_{\Om' \La}(q^{-1},t^{-1}) ,
\end{equation}
where $\bar n(\La)$ is given by
\begin{equation}\label{exprcomplique}
{\bar{n}(\Lambda)}=n(\S{\La})-d^\mathcal{B}(\La)\qquad\text{with}\qquad n(\la)=\sum_i(i-1)\la_i \, .
 \end{equation}
In the previous equation, $\S \La$ is the skew diagram 
$\S \La=\Lambda^{\circledast}/\delta^{m+1}$ (we consider that $n(\lambda/\mu)=n(\lambda)-n(\mu)$).
The term
$d^\mathcal{B}(\La)$ is defined as follows: fill each square
 $s\in\B\La$ 
(we recall that $\B\La$ was defined after Conjecture \ref{con1})
with the number  
of boxes above $s$ that are not in $\mathcal B \La$.  
Then, $d^\mathcal{B}(\La)$ is the sum of these entries.   
For instance, considering example \eqref{diag}, and marking again by an $\times$ the boxes in $\La$ that are not in $\B\La$, we have
\beq
d^\mathcal{B}((3,1,0;2,1))=4
: \qquad {\tableau[scY]{\times&\times&0&\bl\tcercle{}\\1&1\\\times&\bl\tcercle{}\\ 2\\
    \bl\tcercle{}}}.\eeq

As previously mentioned, in the stable sector $K_{\Omega \Lambda}(q,t) =K_{\kappa, \gamma \; \la,\mu}(q,t)$ with $\La=(\la+\delta^m;\mu)$ and $\Omega=(\kappa+\delta^m;\gamma)$.  The symmetries \eqref{sym1} must thus, in the stable sector,
be a consequence of 
Corrolary \ref{Cor_symK}. This is immediate for the first one. In the following lemma, we show that this is also true for the second one
by establishing the equivalence of the expressions $\bar n(\La)$ and $\bar n(\la,\mu)$ (the latter defined in Corollary \ref{Cor_symK}) when $m\geq n$.
\begin{lemma}\label{L2valeursn}
For $\La = (\la+\delta^m ; \mu)$ and $m\geq n= |\la|+|\mu|$, we have
 \begin{equation}\label{2valeursn}
n(\la) + |\mu|+n(\mu) + n(\mu') = n(\mathcal S \La) - d(\La) 
\end{equation}
\end{lemma}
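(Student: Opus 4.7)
The plan is to establish the two identities
$$n(\mathcal S\La)=n(\la)+n(\mu)+m|\mu|-n(\mu'), \qquad d(\La)=(m-1)|\mu|-2n(\mu'),$$
from which subtraction together with the elementary identity $\sum_k\mu_k^2=|\mu|+2n(\mu')$ gives \eqref{2valeursn}. Throughout, one exploits that in the stable regime $m\geq n$, Lemma~\ref{lem2} provides the separation bound $\mu_1\leq m-\ell(\la)$, so in the diagram of $\La^\circledast=(\la+\delta^{m+1})\cup \mu$ the rows and columns occupied by $\la$ and $\mu$ do not interfere.

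For the first identity, I would use $n(\nu/\tau)=n(\nu)-n(\tau)$, the additivity $n(\la+\delta^{m+1})=n(\la)+n(\delta^{m+1})$, and the union formula $n(\nu\cup \tau)=n(\nu)+n(\tau)+\sum_i\nu'_i\tau'_i$ (which follows from $(\nu\cup \tau)'=\nu'+\tau'$ and $\binom{a+b}{2}=\binom{a}{2}+\binom{b}{2}+ab$) to reduce the task to evaluating $\sum_i(\la+\delta^{m+1})'_i\mu'_i$. The bound $\mu_1\leq m-\ell(\la)$ forces $(\la+\delta^{m+1})'_i=m+1-i$ for $i\leq\mu_1$ (since $\la$ contributes nothing to these columns), and the summation
$$\sum_{i=1}^{\mu_1}(m+1-i)\mu'_i=(m+1)|\mu|-\bigl(n(\mu')+|\mu|\bigr)=m|\mu|-n(\mu')$$
yields the claimed expression for $n(\mathcal S\La)$.

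For the second identity, I would reorganize the definition of $d(\La)$ to count pairs of cells in a common column consisting of an $\F\La$-cell above a $\B\La$-cell. Since the only bosonic rows of $\La$ are those coming from $\mu$, the count is organized row-by-row of $\mu$. The diagrammatic convention that bosonic rows precede equal fermionic ones (as illustrated in Appendix~\ref{proof_order}), together with the stability bound, places the $k$-th $\mu$-row at position $r_k=m-\mu_k+k$ in $\La^\circledast$, so the number of fermionic rows strictly above it equals $r_k-k=m-\mu_k$. Moreover, for $j\leq \mu_1$ the column $j$ terminates at a staircase row of length $j$ and is therefore fermionic, so every such column contributes. Summing $(m-\mu_k)\mu_k$ over $k=1,\dots,\ell(\mu)$ then gives
$$d(\La)=\sum_{k=1}^{\ell(\mu)}(m-\mu_k)\mu_k=m|\mu|-\sum_k\mu_k^2=(m-1)|\mu|-2n(\mu').$$

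The main obstacle is the combinatorial bookkeeping in the second step: locating the $\mu$-rows in the sorted diagram (the formula $r_k=m-\mu_k+k$) and confirming that the columns $1,\ldots,\mu_1$ are all fermionic. Both aspects rest entirely on the separation bound $\mu_1\leq m-\ell(\la)$ from Lemma~\ref{lem2} and on the tie-breaking rule that bosonic rows are placed above equal fermionic ones in the diagram of $\La$.
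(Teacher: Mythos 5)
Your proof is correct and follows essentially the same route as the paper's: both arguments establish the two intermediate identities $n(\mathcal S\La)=n(\la)+n(\mu)+m|\mu|-n(\mu')$ and $d^{\mathcal B}(\La)=(m-1)|\mu|-2n(\mu')$ and subtract, with the key step in the second count being that the row of $\mu_k$ sits at position $m-\mu_k+k$ and hence contributes $(m-\mu_k)\mu_k$. The only (cosmetic) difference is that you derive the first identity from the union formula $n(\nu\cup\tau)=n(\nu)+n(\tau)+\sum_i\nu_i'\tau_i'$ together with $(\la+\delta^{m+1})'_i=m+1-i$ for $i\le\mu_1$, whereas the paper evaluates the depth-weighted sum $G_m(\mu)=\sum_{i}\sum_{j=1}^{\mu_i}(m+i-j)$ directly over the cells of $\mu$.
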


\begin{proof}    
First, consider the expression $n(\mathcal S \La)$.   Since $\mathcal S \La= \La^\circledast/\delta^{m+1}= ( (\la+ \delta^{m+1}) \cup \mu)/\delta^{m+1}$, the expression $n(\mathcal S \La)$  is the ``depth-weighted sum'' $n(\cdot)$ (referring to $i-1$ as the depth of a box in row $i$) of the boxes of the diagrams of $\la$ and $\mu$ when inserted in the staircase partition $(m,m-1,\ldots,0)$.  For the analysis of $n(\S\La)$, it is convenient to view the insertion of the diagrams of $\la$ and $\mu$ into that of $\delta^{m+1}$ to be as described in the proof of Proposition \ref{lemdo} in
Appendix~\ref{proof_order}.
Consider again the example of $\la,\,\mu=(3,1,1),\, (2,2,1)$ and $m=10$.   We have:
\begin{equation} \label{unskewLa}
{\tableau[scY]{\circ& \circ& \circ\\ \circ \\ \circ}} \quad
 {\tableau[scY]{\bullet&\bullet \\ \bullet & \bullet \\ \bullet  }} \quad
\xrightarrow{\mathcal S \Lambda}\quad
  {\tableau[scY]{\times&\times&\times&\times&\times&\times&\times& \times &\times&\times& \circ& \circ& \circ\\
 \times&\times&\times&\times&\times&\times&\times&\times&\times& \circ \\
 \times&\times&\times&\times&\times&\times&\times&\times& \circ \\
\times&\times&\times&\times&\times&\times&\times  \\
\times&\times&\times&\times&\times& \times  \\
\times&\times&\times&\times&\times \\
 \times&\times&\times& \times  \\
 \times&\times&  \times  \\
  \times&\times \\ 
  \times&\bullet \\ 
   \bullet & \bullet \\
    \bullet \\ 
    \bullet }}
\end{equation}
The boxes marked with an $\times$ are not considered in the evaluation of $n(\mathcal S \La)$.  Clearly, we have
\begin{equation}
n(\mathcal S \La) = n(\la) + G_m(\mu)
\end{equation}
where $G_m(\mu)$ is some contribution that depends on the diagram of $\mu$  and  on $m$.  For a general partition $\mu$ inserted into $\delta^{m+1}$ (from the bottom), we can compute $G_m(\mu)$ as follows.  
First observe that the boxes of $\mu$ are ordered as
\begin{large}
\begin{align}
& \quad \vdots \nonumber \\
{\tableau[scY]{  & & & \bl \; \hdots \\ & & \bl \; \hdots \\ & \bl \; \hdots \\ \bl \vdots  }} \qquad
\xrightarrow{G_m(\mu)}\qquad
&{\tableau[scY]{
\times&\times&\times&\times&\times & \times & \times & \times \\
\times&\times&\times&\times&\times & \times &\times \\
\times&\times&\times&\times&\times & \times  \\
\times&\times&\times&\times&\times  \\
 \times&\times&\times& \times \\
 \times&\times&  \times &  \bl \iddots \\
  \times&\times & {{\scriptsize \text{$g_3$} }}     \\ 
  \times& {{\scriptsize \text{$g_2$} }}  & \bl \iddots \\ 
  {{\scriptsize \text{$g_1$} }}& {{\scriptsize \text{$g_5$} }}    \\
 {{\scriptsize \text{$g_4$} }}   &  \bl \iddots  \\ 
  {{\scriptsize \text{$g_6$} }}   \\
   \bl \vdots 
     }}
\end{align}
\end{large}
The first box $g_1$ is located in the $(m+1)$-th row of the total diagram.  Consequently, it has depth $g_1=m$.  Similarly,  we have:
\begin{equation}
g_2=(m-1), \quad g_3=(m-2),\quad  g_4=(m+1), \quad g_5=(m+1-1),\quad  g_6=(m+2),\cdots
\end{equation}
One easily sees the general pattern:
\begin{align}
G_m(\mu) &= (m+1-1) +(m+1-2) + \ldots + (m+1-\mu_1) \nonumber \\
&\quad +(m+2-1) + (m+2-2) + \ldots + (m+2- \mu_2) \nonumber \\
&\quad + \ldots \nonumber \\
&\quad + (m+ \ell(\mu) -1) + (m+\ell(\mu) -2 ) + \ldots + (m+\ell(\mu) - \mu_{\ell(\mu)}) \nonumber \\
&= \sum_{i=1}^{\ell(\mu)} \sum_{j=1}^{\mu_i} (m +i-j).
\end{align}
This summation is readily evaluated:
\beq G_m(\mu)=\sum_{i}\left( (m+i)\mu_i -\mu_i(\mu_i+1)/2\right)
=\sum_i\left( (i-1)\mu_i+m\mu_i-\mu_i(\mu_i-1)/2\right)=n(\mu)+m|\mu|-n(\mu'), \eeq
where we used the relation $\sum_i \mu_i(\mu_i-1)/2=n(\mu')$.
We have thus
\begin{equation}\label{nsla}
n(\mathcal S \La) = n(\la) + n(\mu)+m|\mu|-n(\mu').
\end{equation} 

Let us turn to the contribution of $d^\mathcal{B}(\La)$. For its computation, we need to return to the standard insertion of $\la$ and $\mu$ within $\La$ illustrated in \eqref{ex1}. Now, $d^\mathcal{B}(\La)$ can naturally be split  in two parts
\begin{equation}
d^\mathcal{B}(\La) = d_1^\mathcal{B}(\la) + d_2^\mathcal{B}(\mu),
\end{equation}
simply because all the boxes of $\mathcal B \La$ are those corresponding to the inserted ones of $\la$ and $\mu$.  Clearly, no box of the subdiagram $\La/\mathcal B\La$ lies above the boxes corresponding to $\la$ in $\mathcal B\La$ since the latter are all inserted in the topmost rows.  Consequently, we have $d_1^\mathcal{B}(\la)=0$.   Now, consider the first row $\mu_1$ of $\mu$.
 This row is inserted in the $(m-\mu_1+1)$-th row of  
$\La$ (see \eqref{ex1} and \eqref{ex2} for instance).  
Each box of this row of length $\mu_1$ has thus exactly $m-\mu_1$ boxes of $\La/\mathcal B\La$ above it, for a contribution of $(m-\mu_1)(\mu_1)$ to $d_2^\mathcal{B}(\mu)$.   For the next row of $\mu$, two cases are possible.  If $\mu_2=\mu_1$, this second row of $\mu$ is just below the row $\mu_1$ in the diagram $\La$, in which case its contribution to $d_2^\mathcal{B}(\mu)$ is $(m-\mu_2) (\mu_2)$.  Otherwise $\mu_2 < \mu_1$, so that this row $\mu_2$ is inserted in the $(m-\mu_2+1 +1)$-th row of $ \La$.  Therefore, the number of boxes of $\La/\mathcal B \La$ that lies above each box of this row is $(m-\mu_2+2-1 -1)$, since we have to remove the box belonging to row $\mu_1$.  Its contribution is then also $(m-\mu_2)(\mu_2)$. 
  In general, the contribution 
 the part $\mu_k$ to $d_2^\mathcal{B}(\mu)$ is
\begin{align}
&[( \text{position of the row $\mu_k$ in the diagram $ \La$ }) - 1 - (\text{ $\#$ of rows of $\mu$ above $\mu_k$} )] \times (\mu_k)  \nonumber \\
& = ( (m-\mu_k+k)-1-(k-1))(\mu_k) = (m-\mu_k)\mu_k.
\end{align}
We thus have
\begin{equation}
d^\mathcal{B}(\La)=d_2^\mathcal{B}(\mu) = \sum_i (m-\mu_i)\mu_i =(m-1)|\mu|-2n(\mu').
\end{equation}
All together, this gives:
\beq n(\mathcal S\La)-d^\mathcal{B}(\La)= n(\la) + n(\mu) + |\mu| +n(\mu').\eeq 
\end{proof}

\subsection{Evaluation} In \cite{BDLM1}, we have formulated an intriguing conjecture for the evaluation of the Macdonald superpolynomials.    Let $F(x,\theta)$ be a polynomial in superspace with fermonic degree $m$ and suppose that $N \geq m$.   The evaluation of $F(x,\theta)$ is defined as
\begin{equation}
\mathsf{E}_{N,m}[ F(x,\theta) ]:=\left[ \frac{\partial_{\theta_m}  \cdots \partial_{\theta_1} F(x,\theta) }{\Delta(x_1, \ldots, x_m)} \right]_{x_1=u_1, \ldots, x_N=u_n},
\end{equation}
where the values $u_i$ are given by
\begin{equation}\label{defui}
u_i=\frac{t^{i-1}}{q^{\text{max}(m-i,0)  }},
\end{equation}
and where $\Delta(x_1, \ldots, x_m)=\prod_{1\leq i < j \leq m}(x_i-x_j)$ is the Vandermonde determinant in the variables $x_1, \ldots, x_m$. The conjectured formula involves the quantity $d^\mathcal{F}(\La)$ defined as follows.  Fill each square $s\in \mathcal{F}\La$ (these are the boxes whose column and row both end with a circle)  with the number of boxes above $s$ that are not in $\mathcal{F}\La$.  Then, $d^\mathcal{F}(\La)$ is the sum of these entries.  For instance, considering again the example \eqref{diag}, and marking by $\times$ the boxes in $\La$ that are not in $\mathcal{F}\La$, we have
\beq
d^\mathcal{F}((3,1,0;2,1))=1
: \qquad {\tableau[scY]{ 0& 0& \times &\bl\tcercle{}\\ \times & \times \\ 1&\bl\tcercle{}\\  \times \\
    \bl\tcercle{}}}.
\eeq
\begin{conjecture}\label{conjsurEval}
Let $\La$ be of fermionic degree $m$ and suppose that $N \geq \ell(\La^\circledast)$.  Then, the evaluation formula for the Macdonald superpolynomial  $P_\La(x,\ta;q,t)$ is
\begin{equation}\label{genEval}
\mathsf{E}_{N,m}[P_\La(x,\theta;q,t)] = \frac{t^{  n(\mathcal{S}\La) + d^\mathcal{F}(\La)  } }{q^{(m-1) |  \La^a/\delta^m   |  - n(\La^a/\delta^m)}  h^\downarrow_\La(q,t) } \prod_{s\in \mathcal S \La} \left( 1- q^{a_{\La^{\circledast}}'(s)} t^{  N- l_{\La^\circledast}'(s)    }\right)
\end{equation}
where ${h^\downarrow_\La}(q,t)$ is defined in \eqref{hdo}, while $a'(s)$ and $l'(s)$ are the arm colength and leg colength defined in \eqref{eqcoarms}.
\end{conjecture}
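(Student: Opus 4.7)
The plan is to reduce the conjecture to the already-established Corollary~\ref{coroeval}. Under the correspondence $\La\leftrightarrow(\la,\mu)$ of Section~\ref{hyperS}, extracting the coefficient of $\theta_1\cdots\theta_m$ from $P_\La(x,\theta;q,t)$ and dividing by $\Delta(x_1,\dots,x_m)$ produces $P_{\la,\mu}(x,y;q,t)$, and the specialization \eqref{defui} becomes exactly $x_i=t^{i-1}/q^{m-i}$ for $i\le m$ and $y_j=t^{m+j-1}$ for $j=1,\dots,N-m$; thus $\mathsf E_{N,m}[P_\La]=\mathrm E_{N,m}(P_{\la,\mu})$, and Corollary~\ref{coroeval} provides a closed form that we need to match combinatorially with the right-hand side of \eqref{genEval}.

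For the denominator, identity \eqref{hcc} in the proof of Lemma~\ref{L2norms} gives $h^\downarrow_\La(q,t)=c_\la(q,qt)\,c_\mu(qt,t)$, which is precisely the product of denominators of $w_\la(\,\cdot\,;q,qt)$ and $w_\mu(\,\cdot\,;qt,t)$. For the numerator, we pull $(qt)^{l'(s)}$ out of each numerator factor of $w_\la(q^mt^N;q,qt)$ and $t^{l'(s)}$ out of each numerator factor of $w_\mu(t^{N-m};qt,t)$; using $\sum_{s\in\la}l'(s)=n(\la)$ and $\sum_{s\in\mu}l'(s)=n(\mu)$, this extracts a scalar prefactor $(qt)^{n(\la)}t^{n(\mu)}$ and leaves two residual products which we identify with $\prod_{s\in\mathcal S\La}(1-q^{a'_{\La^\circledast}(s)}t^{N-l'_{\La^\circledast}(s)})$ via the bijection $\la\sqcup\mu\to\mathcal S\La$ sending $(i,j)\in\la$ to $(i,m+1-i+j)$ and $(i,j)\in\mu$ to $(m+i-j+1,j)$ in $\La^\circledast$. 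That this is a bijection onto $\mathcal S\La=\La^\circledast/\delta^{m+1}$, and that at these positions one has $a'_{\La^\circledast}(s)=a'_\la(s)+m-l'_\la(s)$, $l'_{\La^\circledast}(s)=l'_\la(s)$ for $s\in\la$ and $a'_{\La^\circledast}(s)=a'_\mu(s)$, $l'_{\La^\circledast}(s)=m+l'_\mu(s)-a'_\mu(s)$ for $s\in\mu$, are routine checks using the separation Lemmas~\ref{lem2} and~\ref{sepap} together with $\La^\circledast=(\la+\delta^{m+1})\cup\mu$.

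Collecting the scalar prefactors, comparison with the conjectured formula (using $|\La^a/\delta^m|=|\la|$ and $n(\La^a/\delta^m)=n(\la)$ for the $q$-exponent, and \eqref{nsla}, namely $n(\mathcal S\La)=n(\la)+n(\mu)+m|\mu|-n(\mu')$, for the $t$-exponent) reduces the entire matching to the single combinatorial identity
\[
d^{\mathcal F}(\La)=n(\mu').
\]

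This identity is the main obstacle. The plan is to compute it by counting pairs $(s_1,s_2)$ with $s_1\in\mathcal B\La$ lying above $s_2\in\mathcal F\La$ in the same column. In the stable sector the cells of $\mathcal B\La$ are of two kinds: $\mu$-row cells (which are bosonic rows but always lie in fermionic columns, since every column $j\le\mu_1\le m-\ell(\la)$ is fermionic), and cells of the $\la$-boosted top $\ell(\la)$ rows that happen to fall in bosonic columns. For the second kind, a bosonic column $j$ necessarily satisfies $j>\La^*_{\ell(\la)+1}$, so no cell of $\La^*$ sits below $s_1$ in that column and the contribution is zero. For $s_1$ a $\mu$-cell at column $j$ in the $\mu$-row of length $L=\mu_i$, the $+1$ boost of the A-parts in $\La^\circledast$ places the pure-staircase fermionic row of each length strictly above the $\mu$-row of the same length in $\La^*$, so the $\mathcal F\La$-cells below $s_1$ in column $j$ are exactly the pure-staircase fermionic rows of lengths $L-1,L-2,\dots,j$, giving $\mu_i-j$ contributions. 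Summing,
\[
d^{\mathcal F}(\La)=\sum_{(i,j)\in\mu}(\mu_i-j)=\sum_{i}\binom{\mu_i}{2}=n(\mu'),
\]
which completes the plan.
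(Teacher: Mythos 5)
Your plan is correct and is essentially the paper's own argument: both reduce the statement to Corollary~\ref{coroeval} via the bisymmetric correspondence, use $h^\downarrow_\La(q,t)=c_\la(q,qt)\,c_\mu(qt,t)$ and the same co-arm/co-leg translations for the cells of $\mathcal S\La$, and boil the remaining exponent matching down to the identity $d^{\mathcal F}(\La)=n(\mu')$, which you establish by the same separation picture (you count pairs from the $\mathcal B\La$ side while the paper counts from the $\mathcal F\La$ side -- a trivial transposition of the same bijection). The only point worth flagging is that, exactly as in the paper, this proves the conjecture only in the stable sector $m,\,N-m\geq n$, not for arbitrary fermionic degree.
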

We now show that in the stable sector, this evaluation formula agrees with the one given in \eqref{stableEval}.  
\begin{lemma}  For $\La=(\delta^m+\la ; \mu)$ and $m\geq n = |\la|+|\mu|$, we have
\begin{equation}
\mathrm E_{N,m}(   P_{\la,\mu}(x,y;q,t)   ) = \mathsf{E}_{N,m}[P_\La(x,\theta;q,t)].
\end{equation}
\end{lemma}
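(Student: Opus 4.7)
The approach is to reduce both sides of the asserted equality to a common product form and match them factor by factor.  Starting from Corollary \ref{coroeval}, I would use the definition \eqref{defw} of $w_\la$ to write
\begin{equation*}
w_\la(q^mt^N; q, qt) = \frac{(qt)^{n(\la)}}{c_\la(q, qt)} \prod_{(i,k)\in\la}\bigl(1 - q^{m-i+k}t^{N-i+1}\bigr),
\end{equation*}
and similarly
\begin{equation*}
w_\mu(t^{N-m}; qt, t) = \frac{t^{n(\mu)}}{c_\mu(qt, t)} \prod_{(k,j)\in\mu}\bigl(1 - q^{j-1}t^{N-m+j-k}\bigr),
\end{equation*}
so that by the factorization \eqref{hcc} the combined denominator equals $h^\downarrow_\La(q,t)$, matching the conjectured form \eqref{genEval}.

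For the $q$-exponent, from $\La^a = \la + \delta^m$ one checks $|\La^a/\delta^m| = |\la|$ and $n(\La^a/\delta^m) = n(\la)$, so the bisymmetric prefactor $q^{n(\la) - (m-1)|\la|}$ matches the conjectured $q^{n(\La^a/\delta^m) - (m-1)|\La^a/\delta^m|}$.  For the $t$-exponent, the bisymmetric side contributes $t^{m|\mu| + n(\la) + n(\mu)}$, and by Lemma \ref{L2valeursn} we have $n(\mathcal S\La) = n(\la) + n(\mu) + m|\mu| - n(\mu')$, so the match reduces to the combinatorial identity $d^\mathcal F(\La) = n(\mu')$.  I would establish this identity by a column-wise double count: exchanging the order of summation yields
\begin{equation*}
d^\mathcal F(\La) = \sum_k \sum_{p\,:\,c_p \leq \mu_k} \bigl|\{i : r_i > \rho_k,\; i < p\}\bigr|,
\end{equation*}
where $r_i$, $c_j$, $\rho_k$ denote respectively the fermionic row positions, fermionic column positions, and bosonic row positions in $\La^\circledast$.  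Using the convention that a bosonic row of length $\mu_k$ sits above a fermionic row of the same length, the stable-sector bound $m \geq |\la|+|\mu|$ forces the first $i^*$ with $r_{i^*} > \rho_k$ to equal the first $p^*$ with $c_{p^*} \leq \mu_k$, and both equal $m - \mu_k + 1$.  The inner sum then telescopes to $\binom{\mu_k}{2}$, yielding $d^\mathcal F(\La) = \sum_k \binom{\mu_k}{2} = n(\mu')$.

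Finally, for the product match I would exhibit an explicit bijection $\la \sqcup \mu \to \mathcal S\La$ sending $(i,k) \in \la \mapsto (i, m+1-i+k)$ and $(k, j) \in \mu \mapsto (m+k-j+1, j)$.  A direct computation shows that the co-arm and co-leg of the image in $\La^\circledast$ give exactly the factors $(1 - q^{m-i+k}t^{N-i+1})$ and $(1 - q^{j-1}t^{N-m+j-k})$ produced by $w_\la$ and $w_\mu$, respectively.  Well-definedness of the $\mu$-map reduces to the inequality $\La^\circledast_{m+k-j+1} \geq j$, which I would establish via the stable-sector conjugate identity $(\La^\circledast)'_j = (m+1-j) + \mu'_j$ together with the defining condition $\mu'_j \geq k$.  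Injectivity is immediate since $m \geq n$ forces $\la$-image rows ($\leq \ell(\la)$) to lie strictly above $\mu$-image rows ($> \ell(\la)$); the cardinality match $|\la|+|\mu| = |\mathcal S\La|$ then upgrades this to a bijection.  The main technical obstacle, in my view, is this last step --- tracking the row-by-row merger of $\la + \delta^{m+1}$ with $\mu$ in $\La^\circledast$ and verifying the bijection --- while the remaining matchings are either mechanical or follow from the short combinatorial argument for $d^\mathcal F(\La) = n(\mu')$.
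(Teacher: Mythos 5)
Your proposal is correct and follows essentially the same route as the paper's proof: both reduce the identity to a factor-by-factor match of the two product formulas, using $h^\downarrow_\La(q,t)=c_\la(q,qt)\,c_\mu(qt,t)$, the formula for $n(\mathcal S\La)$, the identity $d^{\mathcal F}(\La)=n(\mu')$, and the identification of the cells of $\mathcal S\La$ with those of $\la$ and $\mu$ together with their co-arms and co-legs in $\La^\circledast$. The only cosmetic difference is that you prove $d^{\mathcal F}(\La)=n(\mu')$ by a column-wise double count giving $\sum_k\binom{\mu_k}{2}$, while the paper counts row by row to get $\sum_i(i-1)\mu_i'$ --- the same quantity --- and you make fully explicit the cell bijection that the paper treats by inspection of the inserted diagrams.
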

\begin{proof}
The product over $\mathcal{S}\La$ in  \eqref{genEval} can be separated into two contributions: the boxes of the diagram of $\la$ and those of the diagram of $\mu$ (see for example \eqref{unskewLa}):
\begin{equation}
\prod_{s\in (\delta^{m+1}+\la)/\delta^{m+1}    } \left( 1- q^{a_{\La^{\circledast}}'(s)} t^{  N- l_{\La^\circledast}'(s)    }\right)  \;  \; \; \prod_{     s' \in (\delta^{m+1} + \mu)'/\delta^{m+1}      } \left( 1- q^{a_{\La^{\circledast}}'(s')} t^{  N- l_{\La^\circledast}'(s')    }\right)  .
\end{equation}
For the product corresponding to the diagram of $\la$, we have   $a_{\La^{\circledast}}'(s)=m-l'_\la(s) + a'_\la(s)$since a box in row $i$ of $\lambda$ has $(m-i+1)$ extra boxes to its left when considered within $S\La$ (and $l'_\la(s)=i-1$).  
Because there are no boxes above the rows of $\la$, we have $l'_{\La^\circledast}(s) = l'_\la(s)$.   
As for the product over the boxes of $\mu$, observe that the topmost box $s$ of each column of $\mu$ (into $\mathcal S \La$) has exactly $m-a'_\mu(s)$ boxes above it (i.e., $m-j+1$ if $s$ is in the $j$-th column and $a'(s)=j-1$).  Thus, for $s'\in (\delta^{m+1} + \mu)'/\delta^{m+1}$,   we have $l'_{\La^\circledast}(s') = l'_\mu(s')+m-a'_\mu(s')$ and since there are no boxes to the left of the cells corresponding to the diagram of $\mu$ in $\mathcal S \La$, we can write $a'_{\La^\circledast}(s') = a'_\mu(s)$.  
With $\La^a/\delta^m  = \la$ and using $h^\downarrow_\La (q,t)= c_\la(q,qt) c_\mu(qt,t) $  which holds true in the stable sector (cf. \eqref{hcc}), we have:
\begin{equation}\label{genEval2}
\mathsf{E}_{N,m}[P_\La(x,\theta;q,t)] = \frac{t^{  n(\mathcal{S}\La) + d^\mathcal{F}(\La)  } }{q^{(m-1) | \la  |  - n(\la)} c_\la(q,qt) c_\mu(qt,t)   }  \prod_{s\in \la}  (1-q^{a'_\la(s) +m -l'_\la(s)}  t^{N-l'_\la(s)})   
\prod_{s'\in \mu}  (1-q^{a'_\mu(s')}  t^{N+a'_\mu(s') -m-l'_\mu(s') } ).
\end{equation}
Now consider the quantity $d^\mathcal{F}(\La)$.  Clearly, it will only depend on the superpartition $\bar \La$ defined in \eqref{subspart}.   
In $\bar \La$, there are $\mu_1-1$ fermionic rows.   Every box in a fermionic row contributes equally to $d^\mathcal{F}(\La)$ and this contribution is the number of parts in $\mu$ greater than this fermionic row.   This entails the following expression: 
\begin{equation}
d^\mathcal{F}(\La)= \sum_{i=1}^{\mu_1}(i-1)[ \text{ multiplicity of parts $\geq i$ into $\mu$ } ] = \sum_{i=1}^{\mu_1}(i-1) [ \text{ Card}\{ j : \mu_j \geq i \} ].
\end{equation} 
But since $\text{Card}\{ j : \mu_j \geq i \}$ corresponds to $\mu'_i$ and $\mu_1= \ell(\mu')$, we have that $d^\mathcal{F}(\La) = n(\mu')$.  Using this result and the expression \eqref{nsla} for $n(\mathcal S \La)$, we thus get:
\begin{equation}
n(\mathcal S \La) + d^\mathcal{F}(\La) = n(\la) + m |\mu| + n(\mu).
\end{equation}
Substituting this result into \eqref{genEval2}, we finally obtain:
\begin{align}
\mathsf{E}_{N,m}[P_\La(x,\theta;q,t)] &= \frac{ t^{m|\mu|} (qt)^{n(\la)}  t^{n(\mu)} }{q^{(m-1)|\la|}}  
\prod_{s\in \la}  \frac{1-q^{a'_\la(s) +m -l'_\la(s)}  t^{N-l'_\la(s)}}{  1-q^{a_\la(s)}  (qt)^{l_\la(s)+1} }   
\prod_{s'\in \mu} \frac{ 1-q^{a'_\mu(s')}  t^{N+a'_\mu(s') -m-l'_\mu(s') } }{  1-(qt)^{a_\mu(s')}  t^{l_\mu(s')+1}  } \nonumber \\
&= \frac{ t^{m|\mu|} }{q^{(m-1)|\la|}}  
\prod_{s\in \la}  \frac{ (qt)^{l'_\la(s) }(1-q^{a'_\la(s) +m -l'_\la(s)}  t^{N-l'_\la(s)} )}{  1-q^{a_\la(s)}  (qt)^{l_\la(s)+1} }   
\prod_{s'\in \mu} \frac{ t^{l'_\mu(s')} (1-q^{a'_\mu(s')}  t^{N+a'_\mu(s') -m-l'_\mu(s') }) }{  1-(qt)^{a_\mu(s')}  t^{l_\mu(s')+1}  } \nonumber \\
& = \frac{t^{m|\mu|}  }{q^{(m-1) |\la|  } } \prod_{s\in \la } \frac{(qt)^{l'_\la(s)}  - q^{a_\la'(s)} (q^mt^N) }{ 1-q^{a_\la(s)}  (qt)^{l_\la(s)+1}  }
 \prod_{s'\in \mu } \frac{(t)^{l'_\mu(s')}  - (qt)^{a_\mu'(s')} (t^{N-m}) }{ 1-(qt)^{a_\mu(s')}  t^{l_\mu(s')+1}  } \nonumber \\
&= \frac{t^{m|\mu|}  }{q^{(m-1) |\la|  } } w_\la(q^mt^N;q,qt) w_\mu(t^{N-m};qt,t)
\end{align}
which is precisely the formula for $\mathrm E_{N,m}(P_{\la,\mu}(x,y;q,t))$.
\end{proof}
\begin{corollary}
Conjecture \ref{conjsurEval} is true for $m\geq n$.
\end{corollary}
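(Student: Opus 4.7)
The plan is immediate once the preceding lemma is in hand. That lemma asserts the equality $\mathsf{E}_{N,m}[P_\La(x,\ta;q,t)] = \mathrm E_{N,m}(P_{\la,\mu}(x,y;q,t))$ in the stable sector, where $\La \leftrightarrow (\la,\mu)$. Corollary~\ref{coroeval} provides the closed product formula for the right-hand side, namely $\mathrm E_{N,m}(P_{\la,\mu}(x,y;q,t)) = \frac{t^{m|\mu|}}{q^{(m-1)|\la|}}\, w_\la(q^mt^N;q,qt)\, w_\mu(t^{N-m};qt,t)$. It thus suffices to check that this expression agrees with the right-hand side of the conjectural formula \eqref{genEval} for $m\geq n$, since the corollary is then just a restatement of Conjecture~\ref{conjsurEval} in this regime.

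The verification is pure bookkeeping on the superpartition side. First I would use Lemmas~\ref{lem2} and \ref{sepap} to split $\mathcal S\La = \La^\circledast/\delta^{m+1}$ into its disjoint $\la$-part (cells in the top rows) and $\mu$-part (cells attached as columns), which is possible precisely because $m\geq n$. The arm- and leg-colengths inside $\La^\circledast$ translate into those inside $\la$ and $\mu$ through the staircase offsets $(a'_{\La^\circledast}, l'_{\La^\circledast}) = (a'_\la + m - l'_\la,\, l'_\la)$ for a $\la$-cell, and $(a'_{\La^\circledast}, l'_{\La^\circledast}) = (a'_\mu,\, l'_\mu + m - a'_\mu)$ for a $\mu$-cell. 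Substituting these identities recasts $\prod_{s \in \mathcal S\La}(1 - q^{a'_{\La^\circledast}(s)} t^{N-l'_{\La^\circledast}(s)})$ as $w_\la(q^mt^N;q,qt)\, w_\mu(t^{N-m};qt,t)$ up to explicit powers of $qt$ and $t$ that get absorbed into the prefactor.

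The prefactor is then handled by two already-established identities. First, $h^\downarrow_\La(q,t) = c_\la(q,qt)\, c_\mu(qt,t)$ in the stable sector by \eqref{hcc}, which matches the integral-form denominators appearing when $w_\la\cdot w_\mu$ is expressed via Corollary~\ref{coroeval} together with $|\La^a/\delta^m|=|\la|$ and $n(\La^a/\delta^m)=n(\la)$. Second, $n(\mathcal S \La) + d^{\mathcal F}(\La) = n(\la) + m|\mu| + n(\mu)$: the sum $n(\mathcal S\La) = n(\la) + n(\mu) + m|\mu| - n(\mu')$ is obtained from the explicit insertion pattern $G_m(\mu)=\sum_{i,j}(m+i-j)$ used in Lemma~\ref{L2valeursn}, and $d^{\mathcal F}(\La) = n(\mu')$ is the fermionic analog derived from the observation that only the fermionic rows of the subpartition $\bar\La = (\delta^{\mu_1};\mu)$ contribute, each of ``depth'' $i-1$ standing above $\mu'_i$ cells outside $\mathcal F\La$. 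The main (and only) obstacle is the careful tracking of these $m$-dependent staircase shifts, which must conspire so that the final formula is $m$-independent. Once the $\pm n(\mu')$ terms cancel and all shifts are absorbed, the right-hand side of \eqref{genEval} reduces exactly to $\frac{t^{m|\mu|}}{q^{(m-1)|\la|}}\, w_\la(q^mt^N;q,qt)\, w_\mu(t^{N-m};qt,t)$, completing the proof.
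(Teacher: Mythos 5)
Your proposal is correct and follows essentially the same route as the paper: the corollary is reduced to the lemma asserting $\mathsf{E}_{N,m}[P_\La]=\mathrm E_{N,m}(P_{\la,\mu})$, which is proved by splitting $\mathcal S\La$ into its $\la$- and $\mu$-parts, translating the colengths via exactly the staircase shifts you state, and absorbing the prefactors through $h^\downarrow_\La=c_\la(q,qt)c_\mu(qt,t)$ and $n(\mathcal S\La)+d^{\mathcal F}(\La)=n(\la)+m|\mu|+n(\mu)$ with $d^{\mathcal F}(\La)=n(\mu')$. All the key identities you invoke are the ones the paper uses, so there is nothing to add.
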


\section{Tables of Kostka coefficients} \label{app_table}

{
\tiny{
\begin{table}[ht]
\caption{$K_{\la, \mu \; \omega, \nu }(q,t)'$ for   degree $n=1$.} 
\label{tab22}
\begin{center}
\begin{tabular}{l | c |c | } 
 & $(1),\em $ & $\em,(1)$ \\ \hline
$(1),\em$ & $1$ & $q$ \\ \hline
$\em,(1)$ & $t$ & $1$ \\ \hline
\end{tabular}
\end{center}
\end{table}
\begin{table}[ht]
\caption{$K_{\la, \mu \; \omega, \nu }(q,t)'$ for   degree $n=2$.} 
\label{tab32}
\begin{center}
\begin{tabular}{l | c |c | c | c |c| } 
 & $(2),\em $ & $(1,1),\em$ & $(1),(1 )$ & $\em,(2 )$ & $\em,(1,1 )$ \\ \hline
$(2),\em $ & $1$ & $q$ & $q+q^2 $ & $q^2 $  & $q^3$ \\ \hline
$(1,1),\em$ & $qt$ & $ 1$ & $q+q^2t$ & $q^3t$ & $q^2$\\ \hline
$(1),(1 )$ & $t$ & $ t$ & $1+qt$ & $q$ & $q$\\ \hline
$\em,(2 )$ & $t^2$ & $ qt^3$ & $t+qt^2$ & $1$ & $qt$\\ \hline
$\em,(1,1 )$ & $t^3$ & $ t^2$ & $t+t^2$ & $t$ & $1$\\ \hline
\end{tabular}
\end{center}
\end{table}
\begin{table}[ht]
\caption{$K_{\la, \mu \; \omega, \nu }(q,t)'$ for   degree $n=3$.} 
\label{tab33}
\begin{center}
\begin{tabular}{l | c |c | c | c |c| c|  c| c| c| c| } 
 & $(3),\em$ & $(2,1),\em $& $(2),(1)$ & $(1^3),\em$ & $(1,1),(1 )$ & $(1),(2)$&$\em,(3 )$ & $(1),(1,1)$&$\em,(2,1 )$ &$\em,(1^3 )$ \\ \hline
$(3),\em$ &  $1$ & $q+q^2$ & $q+q^2+q^3$ & $q^3$  & $q^2+q^3+q^4$  & $q^2+q^3+q^4$  & $q^3$  & $q^3+q^4+q^5$ & $q^4+q^5$ & $q^6$  \\ \hline
$(2,1),\em $&   $qt$ & $1+q^2t$ & $q+q^2t+q^3t$ & $q$ & $q+q^2+q^3t$ & $q^2+q^3t+q^4t$ & $q^4t$ & $q^2+q^3+q^4t$  & $q^3+q^5t$ & $q^4$  \\ \hline
$(2),(1)$ &  $t$ & $t+qt$ & $1+qt+q^2t$ & $qt$ & $q+qt+q^2t$ & $q+q^2+q^2t$ & $q^2$ & $q+q^2+q^3t$ & $q^2+q^3$ & $q^3$  \\ \hline
$(1^3),\em$ & $q^3t^3$ & $qt+q^2t^2$ & $q^2t+q^3t^2+q^4t^3$ & $1$ & $q+q^2t+q^3t^2$ & $q^3t+q^4t^2+q^5t^3$ & $q^6t^3$ & $q^2+q^3t+q^4t^2$ & $q^4t+q^5t^2$ & $q^3$  \\ \hline
$(1,1),(1 )$ &   $qt^2$ & $t+qt^2$ & $2qt+q^2t^2$ & $t$ & $1+qt+q^2t^2$ & $q+q^2t+q^3t^2$ & $q^3t$ & $q+2q^2t$ & $q^2+q^3t$ & $q^2$  \\ \hline
$(1),(2)$&   $t^2$ & $t^2+qt^3$ & $t+2qt^2$ & $qt^3$ & $t+qt^2+q^2t^3$ & $1+qt+q^2t^2$ & $q$ & $2qt+q^2t^2 $ & $q+q^2t$ & $q^2t$ \\ \hline
$\em,(3 )$ &     $t^3$ & $qt^4+q^2t^5$ & $t^2+qt^3+q^2t^4$ & $q^3t^6$ & $qt^3+q^2t^4+q^3t^5$ & $t+qt^2+q^2t^3$ & $1$ & $qt^2+q^2t^3+q^3t^4$ & $qt+q^2t^2$ & $q^3t^3$     \\ \hline
$(1),(1,1)$  &   $ t^3$ & $t^2+t^3$ & $t+t^2+qt^3$ & $t^2$ & $t+t^2+qt^2$ & $t+qt+qt^2$ & $qt$ & $1+qt+qt^2$ & $q+qt$ & $q$        \\ \hline
$\em,(2,1 )$ &    $ t^4$ & $t^3+qt^5$ & $t^2+t^3+qt^4$ & $qt^4$ & $t^2+qt^3+qt^4$ & $t+t^2+qt^3$ & $t$ & $t+qt^2+qt^3$ & $1+qt^2$ & $qt$     \\ \hline
$\em,(1^3 )$  &     $ t^6$ & $t^4+t^5$ & $t^3+t^4+t^5$ & $t^3$ & $t^2+t^3+t^4$ & $t^2+t^3+t^4$ & $t^3$ & $t+t^2+t^3$ & $t+t^2$ & $1$    \\ \hline
\end{tabular}
\end{center}
\end{table}
}
}

\end{appendix}

\end{document}